\newcommand{\IR}{\mathbb{R}}
\begin{document}

%\title{Insert your title here%\thanks{Grants or other notes
%about the article that should go on the front page should be
%placed here. General acknowledgments should be placed at the end of the article.}}
%\subtitle{Do you have a subtitle?\\ If so, write it here}

%\titlerunning{Short form of title}        % if too long for running head

\title{Unveiling the dynamics of canard cycles and global behaviour in a singularly perturbed predator-prey system with Allee effect in predator}

\titlerunning{Canard cycles and global dynamics}        % if too long for running head

\author{Tapan Saha \and Pallav Jyoti Pal}

\authorrunning{T Saha, P J Pal} % if too long for running head
\institute{
Tapan Saha \at
             Department of Mathematics, Presidency University, Kolkata -700073, India \\
              \email{tapan.maths@presiuniv.ac.in}           %  \\
%             \emph{Present address:} of F. Author  %  if needed
           \and
Pallav Jyoti Pal (Corresponding author) \at
            Department of Mathematics, Krishna Chandra College, Hetampur-731124, Birbhum, India\\
              \email{pallav.pjp@gmail.com}           %  \\
%             \emph{Present address:} of F. Author  %  if needed
                   }
\date{Received: date / Accepted: date}
% The correct dates will be entered by the editor

\maketitle
				
\begin{abstract}
 In this article, we have considered a  planar slow-fast modified Leslie-Gower predator-prey model with a weak Allee effect in the predator, based on the natural assumption that the prey reproduces far more quickly than the predator. We present a thorough mathematical analysis demonstrating the existence of homoclinic orbits, heteroclinic orbits,  singular Hopf bifurcation, canard limit cycles, relaxation oscillations, the birth of canard explosion  by combining the normal form theory of slow-fast systems,  Fenichel’s theorem and blow-up technique near non-hyperbolic point. We have obtained very rich dynamical phenomena of the model, including the saddle-node, Hopf, transcritical bifurcation, generalized Hopf, cusp point, homoclinic orbit, heteroclinic orbit, and Bogdanov-Takens bifurcations.
 Moreover, we have investigated the global stability of the unique positive equilibrium, as well as bistability, which shows that the system can display either ``prey extinction", ``stable coexistence", or ``oscillating coexistence" depending on the initial population size and values of the system parameters. The theoretical findings are verified by numerical simulations.
\end{abstract}
\keywords{Slow-fast system \and canard cycles \and  heteroclinic and homoclinic orbits \and canard explosion \and relaxation oscillation \and  bistability \and generalized Hopf}
%\end{frontmatter}
%\begin{linenumbers}
%********************************************************************************
\section{Introduction}\label{Slowfast_intro} 
%***********************************************************************
Singularly perturbed systems of ordinary differential equations may be used to predict the evolution of a wide range of physical and applied systems with multiple timescales. Such a system can be written in the standard form as follows:
\begin{subequations}\label{general_slow_fast}
\begin{align}
	\dot{x}& = F\left(x, y,\mu, \epsilon\right),\\
\dot{y} &= \epsilon G\left(x, y, \mu, \epsilon\right),
 \end{align}
\end{subequations}
where $(x,y) \in \mathbb{R}^m\times \mathbb{R}^n$ such that $x, y$ are the fast and the slow variables respectively, $\mu\in\IR^k$ are system parameters, $m,n,k\geq 1$,  $F$ and $G$ are the sufficiently smooth functions,  $0 < \epsilon \ll 1$ is the singular perturbation parameter, and the over dot $(~ \dot{}~ )$ stands for (fast) time derivative $\frac{d}{dt}$. A powerful mathematical framework for studying slow-fast systems \eqref{general_slow_fast} is known as Geometric Singular Perturbation Theory (GSPT). GSPT encompasses a wide variety of geometric methods for doing so, namely, Fenichel theory \cite{fenichel1979geometric},  blow-up method \cite{dumortier1996canard,krupa2001extending,krupa2001relaxation}, slow-fast normal form theory \cite{arnold1994dynamical}. 
For $\epsilon\to 0$, the limiting subsystem obtained from \eqref{general_slow_fast} is a fast subsystem (or layered system) $\dot{x} = F\left(x, y,\mu, 0\right)$ where the slow variables $y$ acting as parameters. By rescaling time from $t$ to $\tau=t/\epsilon$, the fast to the slow timescale in \eqref{general_slow_fast}, an equivalent system to \eqref{general_slow_fast} is obtained which yields a differential-algebraic equation (called the slow subsystem associated with \eqref{general_slow_fast}) for the singular limit $\epsilon\to 0$.  The slow subsystem is a dynamical system on the set $  M_0 = \{(x, y) \in \IR^m \times \IR^n : F(x, y, \mu, 0) = 0\}$. This is also the set of equilibria of the fast subsystem, with $y$ acting as a parameter. We refer to $M_0$ as a critical manifold if it is a submanifold. Normal hyperbolicity is a crucial property that manifold $M_0$ may have. A point $p\in M_0$ is an equilibrium point of the fast subsystem. If all the eigenvalues of the $m\times m$ matrix $(D_xF)(p)$ have non-zero real parts, then we say that $M_0$ is normally hyperbolic at the point $p\in M_0$. When all the eigenvalues of the $m\times m$ matrix $(D_xF)(p,\mu, 0)$ have negative real parts for $p\in S\subset M$, then we say that $S\subset M$ is attracting, and when all the eigenvalues have positive real parts, then we say that $S$ is repelling. When $M_0$ is a normally hyperbolic critical manifold, Fenichel’s theorems is applied as a regular perturbation corresponding to the singular system near $M_0$, and it says that $M_0$ is perturbed to the invariant slow manifolds $M_\epsilon$ which is at a distance $\mathcal{O}(\epsilon)$  away from $M_0$. What this means is that as $\epsilon$ approaches zero, the flow on the (locally) invariant manifold $M_\epsilon$ converges to the slow subsystem on the critical manifold $M_0$. 
\begin{figure}[H]
\setlength{\belowcaptionskip}{-10pt}
\centering
	\subfloat[]{\includegraphics[width=8cm,height=7cm]{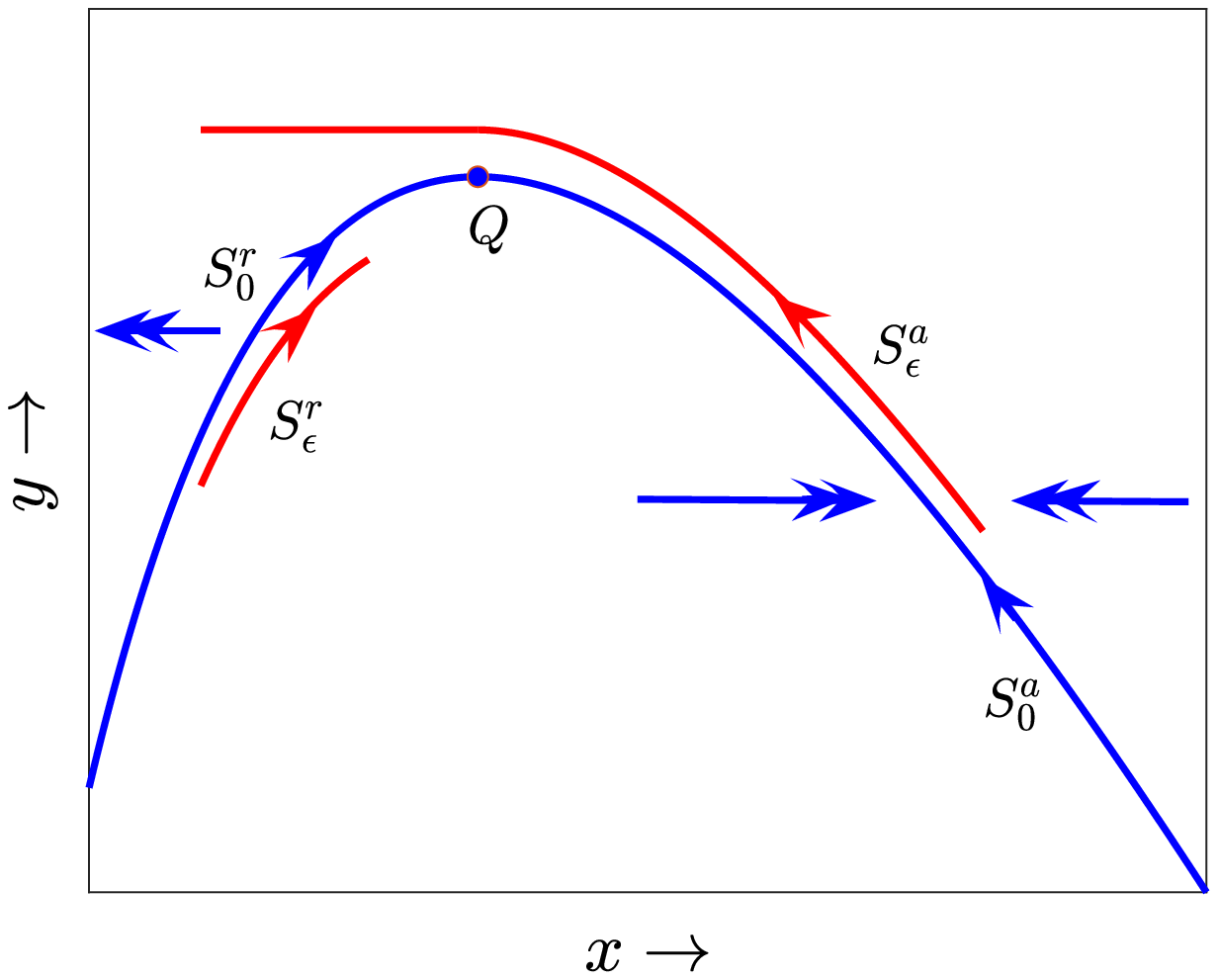}}
		\subfloat[]{\includegraphics[width=8cm,height=7cm]{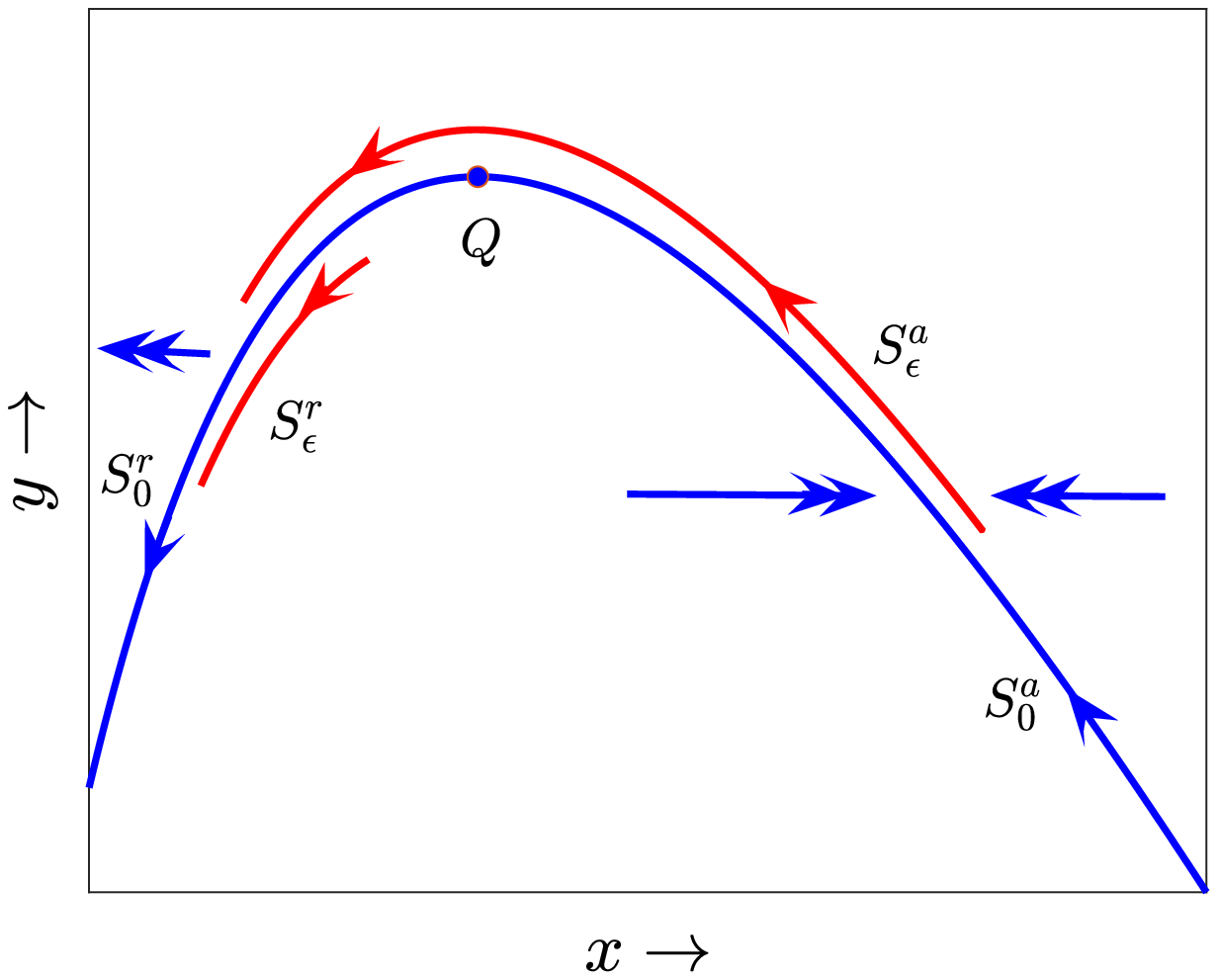}}
\caption{The parabolic critical manifold $M_0=S_0^a\cup N\cup S_0^r$ (shown by blue curve) where $S_0^a$ and $S_0^r$ are the attracting and repelling submanifolds, respectively, and    the normally non-hyperbolic point is $Q (x_m,y_m)\in N $ (shown by a blue dot). The double arrows represent fast flow, and the single arrows represent slow flow. (a) The slow manifolds $S_\epsilon^a$ and $S_\epsilon^r$  near the jump point $Q$  are represented by the red curve. (b) The slow manifolds $S_\epsilon^a$ and $S_\epsilon^r$ (shown by red curve) near the canard point $Q$. (For interpretation of the references to colour in this figure caption, the reader is referred to the web version of this chapter.)} 
\label{fig:jump_canard}			
\end{figure}

In the non-normally hyperbolic domain, however, Fenichel-Tikhonov theory fails. Suppose, $M_0=S_0^a\cup N\cup S_0^r$, where $S_0^a$ and $S_0^r$ are the attracting and repelling branch of $M_0$ and $N$ is a non-normally hyperbolic point or submanifold. Away from the non-normally hyperbolic singularity, Fenichel's theorem shows that, for $0<\epsilon\ll 1$, $S_0^a$ and $S_0^r$ are smoothly perturbed to invariant manifolds $S_\epsilon^a$ and $S_\epsilon^r$ respectively. For instance, in the most typical scenario, the non-hyperbolic singularities $Q\in N$ are  jump point, canard point \cite{krupa2001extending,kuehn2015multiple}, etc. In the case $Q\in N$, the blow-up method, introduced by Dumortier and Roussarie \cite{dumortier1996canard} and developed by Krupa and Szmolyan \cite{krupa2001extending,krupa2001relaxation} is commonly used to investigate the dynamics of the slow-fast system  \eqref{general_slow_fast} where the non-hyperbolic singularities $Q\in N$ are de-singularized by using this blow-up method.
Such desigularization enables one to explore the dynamics in the non-normally hyperbolic domain using classical approaches such as regular perturbation and centre manifold theory for the study of dynamical systems. The scenario of loss of normal hyperbolicity is effectively significant as it is associated to dynamic properties like relaxation oscillations, canards, heteroclinic orbits, homoclinic orbits, etc \cite{kuehn2015multiple,krupa2001extending,krupa2001relaxation,zhao2022relaxation,kuehn2010from,atabaigi2021canard,saha2021relaxation}. 
 A generic fold point $Q\in M_0$ is referred to as a jump point if a candidate orbit follows first the attracting branch $S_0^a$ closely, reaches the vicinity of the fold point $Q$, and then follows the direction of the fast flow abruptly away from $Q$, as shown in Fig.\ref{fig:jump_canard}a. 
A canard point is a fold point $Q$ for which $G(Q,\mu,0)=0$ for some $\mu$. In this case, it could happen that the attracting slow manifold $S_\epsilon^a$ will remain in close to the repelling slow manifold $S_\epsilon^r$ for a time of $\mathcal{O}(1)$ (see Fig. \ref{fig:jump_canard}b). Such solutions are known as canards. There is a possibility that there are certain values of $\mu(\epsilon)$ for which the attracting slow manifold $S_\epsilon^a$ connects to the repelling slow manifold $S_\epsilon^r$. The term \lq\lq maximal canard\rq\rq is used to describe such solutions \cite{krupa2001relaxation}. Another well-known occurrence in this setting is relaxation oscillations, when solutions approach to a fold point slowly but then abruptly jump from the fold point to another stable branch of $M_0$, then follow the slow dynamics again until a new fold point is reached, and so on, and finally forming periodic orbits \cite{krupa2001relaxation,zhao2022relaxation,kuehn2015multiple}. A quick shift upon change of a control parameter from a small amplitude limit cycle via canard cycles to a large amplitude relaxation oscillation may occur for the system \eqref{general_slow_fast} within an exponentially narrow range $\mathcal{O}(e^{-1/\epsilon})$ of the control parameter. It is referred to as Canard explosion.

In this article, under the natural assumption that the prey reproduces considerably quicker than the predator, the primary emphasis is on planar slow-fast predator-prey systems with two time scales of the type \eqref{general_slow_fast} where $m=n=1$. A significant amount of research has been put into investigating the canard phenomena and the existence of relaxation oscillations of planar slow-fast predator-prey systems. The followings are just a few instances, by no means exhaustive, where this kind of investigation has been done. Hek \cite{hek2010geometric} applied the Fenichel's theory to biology. Using asymptotic expansion techniques, Kooi and Poggiale \cite{kooi2018modelling} demonstrated how to locate a canard solution at the turning point in the Rosenzweig-MacArthur model on two time scales.  
In Ambrosio et. al \cite{ambrosio2018canard}, authors considered a slow-fast predator-prey model of modified Leslie–Gower type with two time scales. By using the blow-up method, they are able to clearly display the behaviour close to the fold point and  demonstrated that the limit-cycle experiences the canard phenomena while crossing the folded node. The dynamics of a slow-fast predator-prey model are investigated in \cite{atabaigi2021canard}, where the predator is a generalist predator that feeds on both the focal prey and the functional response is Holling type III. Using tools like the theory of normal forms for slow-fast systems, the theory of geometric singular perturbations, and the blow-up method,
the author explores the existence of relaxation oscillations and canard limit cycles bifurcating from singular homoclinic cycles.

The Allee effect has been the subject of several publications on predator-prey system \cite{courchamp2008allee,terry2015predator,zhou2005stability,rahmi2021modified,feng2015dynamics,hadjiavgousti2008allee,pal2015qualitative,gonzalez2011multiple,sardar2022allee}. Most studies among them have considered the Allee impact of the prey population growth. Many observations, however, suggest that the Allee effect is also evident in the population of predators, for instance, Seabirds and the African wild dog (Lycaon pictus) \cite{courchamp2008allee}. There has been little research on the impact of the Allee effect on predator populations \cite{terry2015predator,zhou2005stability,rahmi2021modified,feng2015dynamics}. To the best of our knowledge, there is no literature on slow-fast predator-prey model where predator population growth is affected by weak Allee effect. By the term $\Phi(v)=\frac{v}{v+m}$, often known as the weak Allee effect function with $m$ as the Allee effect constant, we introduce an Allee effect into the predator equation. $\Phi(v)$ measures the probability that a female predator will come into contact with at least one male and mate with him during the reproductive stage. 
This Allee effect function reduces the predator's per capita growth rate from $s$ to $\frac{sv}{v+m}$. The Beddington–DeAngelis functional response $\Psi(u,v)=\frac{mv}{a+bu+cv}$ is comparable to the well-known Holling type II functional response $\Psi(u,v)=\frac{mv}{a+bu}$, with the addition of an additional factor $cv$ in the denominator. Here, $u = u(t)$ and $v = v(t)$ respectively denote the prey and predator population densities, $m$ denotes the maximum per capita consumption rate of a predator, both $a$ and $b$ are prey saturation constants, $c$ is the predator interference. The factor $cv$ reflects the mutual interference between predators.
The Beddington-DeAngelis functional response also avoids the controversial problem that the ratio-dependent functional response $\Psi(u,v)=\frac{mv}{bu+cv}$ have at low population densities. 

We then arrive at the following modified Leslie--Gower predator--prey model with logistic growth for both the prey and Allee effect in predator given by:
\begin{subequations}\label{sf_model1}
	\begin{align}
		\frac{du}{dT}&=ru\left(1- \frac{u}{K}\right) -\frac{muv}{a + bu +cv},\\
		\frac{dv}{dT}&=sv\left(\frac{v}{n+v}-\frac{v}{d+hu}\right),
	\end{align}
\end{subequations}
subjected to initial conditions $u(0)\geq 0$, $v(0)\geq 0$, parameters $(r, K, m, n, a, b, c, s, d, h)\in \IR_+^{10}$ such that $u = u(t)$ and $v = v(t)$ respectively denote the prey and predator population densities at time $t > 0$. The Allee effect is considered in predator population because the predator population is more prone than their prey \cite{terry2015predator}. 
Here, $r$ is the intrinsic per capita growth
rate of prey,
$K$ is the environmental carrying capacity,
  $h$ measures of the food quality,
$d$ is the amount of alternative food available for predators, and the meaning of other parameters are already mentioned above.  

Non-dimensionalizing the system \eqref{sf_model1} by using the following rescaling transformations:
%\begin{subequations}\label{sf_model2}
	\begin{align}
		t = rT,~~ x = \frac{u}{K}, ~~y = \frac{cv}{bK},
	\end{align}
%\end{subequations}
we have
\begin{subequations}\label{sf_model2}
	\begin{align}
		\frac{dx}{dt}&=x\left(1-x\right)-\frac{\alpha xy}{\beta+x+ y}=f(x,y,\mu),\\
		\frac{dy}{dt}&=\epsilon y\left(\frac{y}{y+\gamma}-\frac{y}{\delta + \theta x}\right)=\epsilon g(x,y,\mu),
	\end{align}
	\end{subequations}
where $x, y$ are the new dimensionless variables, $\mu=(\alpha, \beta, \gamma, \delta, \theta)$ with $\alpha=\frac{m}{rc},\beta=\frac{a}{bk}$, $\gamma=\frac{nc}{bK}$, $\delta=\frac{cd}{bK}$, $\theta=\frac{ch}{b}$ and $\epsilon=\frac{s}{r}$. The parameters are positive with $0<\epsilon\ll 1$.

The remaining part of this chapter is organized as follows:
In Sect. \ref{Sec:Preliminaries}, we present some basics results  for the system \eqref{sf_model2}.
The slow-fast system is analysed in Sect. \ref{Sec:sf_analysis}. The existence of the singular Hopf bifurcation and canard cycles are investigated in Sect. \ref{Sec.hopf_canard}.
In Sect. \ref{Sec:hetro_homo_orbits}, we also provide thorough proof of the existence of heteroclinic and homoclinic orbits. In Sect \ref{Sec:relaxation}, we  prove the existence of relaxation oscillation and the bistability phenomenon. The main theoretical predictions are verified using numerical simulations in appropriate sections.  Finally, some brief conclusions of our findings are presented in Sect. \ref{Sec:conclusion}.

\section{Basic Results} \label{Sec:Preliminaries}
In this section, we discuss some basic results, including the invariance, boundedness, existence of equilibria and their nature, and bifurcation scenario for the system \eqref{sf_model2}.  
\begin{lemma}\label{lemma_1}
The first quadrant $\IR^2_+=\{(x,y)\in\IR^2|x\geq 0, y\geq 0\}$ is invariant under the flow generated by the vector field 
$V_{\epsilon, \mu}=f\frac{\partial}{\partial x}+\epsilon g\frac{\partial}{\partial y}$.
\end{lemma}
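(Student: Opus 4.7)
The plan is to establish invariance via the standard boundary-flow argument: verify that the vector field is tangent to each coordinate semi-axis, so that the axes themselves are invariant, and then invoke uniqueness of solutions to conclude that no interior trajectory can cross them.

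First I would observe that the right-hand sides $f$ and $g$ extend to $C^1$ functions on a neighbourhood of $\mathbb{R}^2_+$ because, for all positive parameters in $\mu$, the denominators $\beta + x + y$, $y+\gamma$ and $\delta+\theta x$ are bounded below by positive constants on $\mathbb{R}^2_+$. Hence $V_{\epsilon,\mu}$ is locally Lipschitz there and the Picard–Lindelöf theorem guarantees unique forward solutions through each initial point in $\mathbb{R}^2_+$.

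Next I would check the axes separately. On $\{x=0\}$ one reads off $f(0,y,\mu) = 0\cdot(1-0)-\frac{\alpha\cdot 0\cdot y}{\beta+0+y}=0$, so $V_{\epsilon,\mu}$ is tangent to the $y$-axis; the restricted dynamics is $\dot y=\epsilon y\left[\frac{y}{y+\gamma}-\frac{y}{\delta}\right]$, which is well defined for $y\ge 0$, so the positive $y$-semi-axis is invariant. Likewise on $\{y=0\}$ we have $g(x,0,\mu)=0$, giving $\dot y=0$ and leaving $\dot x=x(1-x)$, so the positive $x$-semi-axis is invariant. By uniqueness, any trajectory starting in the open first quadrant cannot reach either axis in finite time, since doing so would create two distinct solutions through the same boundary point. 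Combining the three pieces (open quadrant plus the two semi-axes) yields the invariance of $\mathbb{R}^2_+$.

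There is no substantive obstacle; the only mild point to be careful about is smoothness of $V_{\epsilon,\mu}$ up to the boundary, which is guaranteed precisely because each denominator stays bounded away from zero on $\mathbb{R}^2_+$ thanks to the positivity of $\beta,\gamma,\delta$. The argument is identical in structure for every $\epsilon\ge 0$, so the same proof covers both the perturbed system and its singular limits used later in the paper.
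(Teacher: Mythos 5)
Your proof is correct. The paper states Lemma \ref{lemma_1} without providing any proof, so there is nothing to compare against; your argument --- observing that $f(0,y)=0$ and $g(x,0)=0$ so the coordinate semi-axes are invariant, that the denominators $\beta+x+y$, $y+\gamma$, $\delta+\theta x$ are bounded below by $\beta,\gamma,\delta>0$ on $\IR^2_+$ so the field is $C^1$ up to the boundary, and then invoking uniqueness to prevent interior trajectories from reaching the axes --- is exactly the standard argument one would supply here. (An equally common variant writes $x(t)=x(0)\exp\bigl(\int_0^t\bigl(1-x-\frac{\alpha y}{\beta+x+y}\bigr)ds\bigr)$ and similarly for $y$, which gives sign preservation directly without appealing to backward uniqueness, but both routes are sound.)
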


\begin{lemma}\label{lemma_2}
All the solutions of the model system \eqref{sf_model2} initiated from the interior of $\IR^2_+$ are bounded.
\end{lemma}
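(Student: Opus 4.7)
The plan is to bound $x$ and $y$ separately, exploiting the sign structure of the right-hand sides. For the prey equation, I will first observe that the predation term $\dfrac{\alpha xy}{\beta+x+y}$ is nonnegative in the interior of the first quadrant (which is positively invariant by Lemma~\ref{lemma_1}), so the scalar differential inequality
\begin{equation*}
\dot{x} \;\le\; x(1-x)
\end{equation*}
holds along any trajectory. A standard comparison argument with the logistic equation then yields $\limsup_{t\to\infty} x(t) \le 1$; more precisely, for every $\eta>0$ there exists $T_\eta>0$ such that $x(t) \le 1+\eta$ for all $t\ge T_\eta$. In particular, $x$ is bounded on $[0,\infty)$ and cannot blow up in finite time.

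For the predator equation, I would rewrite it in the combined form
\begin{equation*}
\dot{y} \;=\; \epsilon\, y^2\,\frac{(\delta+\theta x)-(y+\gamma)}{(y+\gamma)(\delta+\theta x)},
\end{equation*}
which makes it transparent that $\dot{y}<0$ whenever $y>\delta+\theta x-\gamma$. Combined with the asymptotic bound $x(t)\le 1+\eta$ from the previous step, this gives $\dot{y}<0$ as soon as $y>\delta+\theta(1+\eta)-\gamma$, so $y(t)$ is eventually trapped below a constant depending only on the parameters. To handle the transient $[0,T_\eta]$, I would use that $x$ and hence the quantity $\delta+\theta x - \gamma$ are continuous, and apply the same sign argument to conclude that $y$ is bounded by $\max\{y(0),\,\delta+\theta\,\sup_{[0,T_\eta]}x(t)-\gamma\}$ on $[0,T_\eta]$ as well.

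Putting the two bounds together gives a compact forward-invariant absorbing rectangle, establishing global existence and boundedness of all positive solutions. The main delicate point is that the predator equation is not a priori dominated by a simple autonomous scalar inequality (the trapping threshold depends on $x(t)$), so I need the prey bound to be in hand first and then feed it into the predator comparison; this sequencing, together with a careful treatment of the initial transient before $x$ has entered its absorbing interval, is the only nontrivial aspect of the argument.
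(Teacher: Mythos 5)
Your argument is correct: the paper states this lemma without proof, and your two-step comparison (logistic bound $\dot x\le x(1-x)$ giving $x\le\max\{x(0),1\}$, then the sign of $\dot y = \epsilon y^2\frac{(\delta+\theta x)-(y+\gamma)}{(y+\gamma)(\delta+\theta x)}$ trapping $y$ below $\delta+\theta\sup x-\gamma$) is exactly the standard route one would take here. The only simplification worth noting is that the comparison with the logistic equation already gives the uniform bound $x(t)\le\max\{x(0),1\}$ on the whole maximal interval of existence, so the $T_\eta$ bookkeeping and the separate treatment of the transient are unnecessary.
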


The system \eqref{sf_model2} has three equilibria on the co-ordinate axes, namely, the trivial equilibrium $E_0(0,0)$ and the boundary equilibria $E_1(1,0)$, $E_{2}(0,\delta-\gamma)$ where $E_2$ exists if $\delta>\gamma$. We have the following trivial results on the nature of the equilibria on the co-ordinate axes.

\begin{lemma}\label{lemma_5}
\begin{enumerate}[label=(\roman*)]
\item The trivial equilibrium $E_0(0,0)$ is a saddle node.

\item The boundary equilibrium $E_{1b}(1,0)$ is a saddle node. 

\item The boundary equilibrium $E_{2b}(0, \delta-\gamma)$ is a hyperbolic stable node if $\delta>\gamma+\frac{\beta}{\alpha-1}$, a hyperbolic saddle if $\delta<\gamma+\frac{\beta}{\alpha-1}$ and a saddle node if 
$\delta=\gamma+\frac{\beta}{\alpha-1}$ 
\end{enumerate}
\end{lemma}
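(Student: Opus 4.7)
The plan is to linearise the vector field $V_{\epsilon,\mu}=f\,\partial_{x}+\epsilon g\,\partial_{y}$ at each of the three axial equilibria, read off the eigenvalues of the Jacobian, and invoke centre-manifold reduction whenever a zero eigenvalue shows up. A direct computation shows that $g(x,y,\mu)$ carries an overall factor $y^{2}$ near $\{y=0\}$, so both $\partial_{x}g$ and $\partial_{y}g$ vanish on that axis; consequently the Jacobians at $E_{0}$ and $E_{1b}$ are upper triangular with a single zero diagonal entry, while the Jacobian at $E_{2b}$ is lower triangular because $f_{y}(0,y^{*})=0$ when $y^{*}=\delta-\gamma$.

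For $E_{0}$ the $y$-axis is invariant, since $f(0,y)\equiv 0$, and is tangent to the kernel eigenvector $(0,1)^{\top}$, hence it is itself the local centre manifold. The reduced flow reads
\begin{equation*}
\dot y \;=\; \epsilon\,\frac{y^{2}\,(\delta-\gamma-y)}{\delta\,(y+\gamma)},
\end{equation*}
whose leading coefficient $\epsilon(\delta-\gamma)/(\gamma\delta)$ is non-degenerate, giving the saddle-node normal form. For $E_{1b}$ the $x$-axis is invariant and tangent to the stable eigenvector (eigenvalue $-1$), so the centre manifold is a non-trivial graph $x=1+h(y)$ with $h(0)=0$; substituting it into the $y$-equation and collecting leading terms yields
\begin{equation*}
\dot y \;=\; \epsilon\,\frac{\delta+\theta-\gamma}{\gamma\,(\delta+\theta)}\,y^{2}+\mathcal{O}(y^{3}),
\end{equation*}
which is again of saddle-node type.

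For $E_{2b}$ the triangular Jacobian delivers the eigenvalues at once,
\begin{equation*}
\lambda_{1}=\frac{\beta-(\alpha-1)(\delta-\gamma)}{\beta+\delta-\gamma},\qquad \lambda_{2}=-\epsilon\,\frac{(\delta-\gamma)^{2}}{\delta^{2}}<0,
\end{equation*}
so the node/saddle dichotomy is governed by the sign of $\lambda_{1}$, and the critical line $\lambda_{1}=0$ coincides exactly with $\delta=\gamma+\beta/(\alpha-1)$. In that borderline case the plan is to project onto the one-dimensional centre manifold tangent to $\ker J(E_{2b})$, compute the leading quadratic coefficient of the reduced vector field, and appeal to the standard normal-form theorem for saddle-node equilibria.

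The main obstacle will be the centre-manifold computation at $E_{1b}$ and at the critical $E_{2b}$, where the slow direction does not coincide with a coordinate axis: one has to expand the centre-manifold graph to sufficient order and verify that the resulting quadratic coefficient does not vanish throughout the parameter region in which the statement is asserted, rather than only generically. Everything else reduces to routine Jacobian algebra enabled by the triangular structure of $J$.
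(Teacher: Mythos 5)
The paper offers no proof of this lemma (it labels the results ``trivial''), so there is nothing to compare against line by line; your route --- triangular Jacobians forced by the overall factor $y^{2}$ in $g$, plus centre-manifold reduction along the zero eigendirection --- is exactly the standard argument the authors are implicitly invoking, and every eigenvalue and leading coefficient you compute checks out ($f_y$ vanishes on $x=0$, $g_x$ and $g_y$ vanish on $y=0$, and at $E_{1b}$ the $y^{2}$ prefactor even makes the centre-manifold correction $h(y)$ irrelevant to the quadratic coefficient, so that step is easier than you anticipate). Two small caveats: your ``non-degenerate'' quadratic coefficients $\epsilon(\delta-\gamma)/(\gamma\delta)$ at $E_0$ and $\epsilon(\delta+\theta-\gamma)/(\gamma(\delta+\theta))$ at $E_{1b}$ do vanish when $\delta=\gamma$ (a value admitted in the paper's region $R_3$) resp.\ $\delta+\theta=\gamma$, in which case the leading term becomes cubic and $E_0$, $E_{1b}$ are no longer saddle-nodes, so those exclusions should be stated explicitly; and the quadratic coefficient of the reduced flow at the critical $E_{2b}$ (whose centre direction is tangent to $y=(\delta-\gamma)+\theta x$, i.e.\ the predator nullcline) still has to be computed and shown nonzero, which is routine but is the one piece of the claim your proposal leaves genuinely unverified.
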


The interior equilibria are the points of intersection of the non-trivial prey and predator nullclines (see Fig. \ref{fig:nullcline}) given by 
\begin{align*}
y&=\frac{(\beta+x)(1-x)}{\alpha+x-1},\\
y&=\theta x+\delta-\gamma.
\end{align*}

%The Jacobian matrix $J$ evaluated at any interior equilibrium point $E(x,y)$ is given by

%\begin{align*}\label{Jacobian}
 %J|_E&=\left[\begin{array}{cc} 1-2x-\frac{\alpha y(\beta+y)}{(\beta+x+y)^2} %& -\frac{\alpha x(\beta+y)}{(\beta+x+y)^2}\\ 
 %\delta\epsilon & -\epsilon\end {array} \right].
%\end{align*}

Assuming
\begin{align*}
D&=(\alpha\theta+\delta-\gamma+\beta-\theta-1)^2-4(1+\theta)\left(\alpha(\delta-\gamma)-(\delta-\gamma)-\beta\right),
\end{align*}
we consider the following parametric regions 
\begin{subequations}\label{parametric conditions}
\begin{align}
    R_1&=\left\{\mu\middle|D>0,\, 
    \alpha>\frac{1}{1-\beta}, \beta<1, \delta>\gamma+\frac{\beta}{\alpha-1},\theta(\alpha-1)+\frac{\alpha\beta}{\alpha-1}<1\right\},\\
    R_2&=\left\{\mu\middle|D=0,\,\alpha>\frac{1}{1-\beta}, \beta<1, \delta>\gamma+\frac{\beta}{\alpha-1}\right\},\\
    R_3&=\left\{\mu\middle|\alpha>\frac{1}{1-\beta}, \beta<1, \{0\leq\gamma-\delta<1\}\cup\{\gamma<\delta<\gamma+\frac{\beta}{\alpha-1}\}\right\},\\
    R_4&=\left\{\mu\middle|\{D<0\}\cup\{\gamma\geq\delta+1\}\cup\{D>0,\alpha>\frac{1}{1-\beta},\beta<1, \delta>\gamma+\frac{\beta}{\alpha-1},\right.\nonumber\\
    & \left. (\delta-\gamma)+\theta(\alpha-1)>1-\beta\} \right\}.
\end{align}
\end{subequations}

We now state the following results on the existence and stability of the interior equilibria of the system \eqref{sf_model2}.

\begin{lemma}\label{lemma_6}
\begin{enumerate}[label=(\roman*)]
\item If $\mu\in R_1$ then there exist two interior equilibrium points $E_{1*}(x_{1*}, y_{1*})$ and $E_{2*}(x_{2*}, y_{2*})$, where 
\begin{align*}
x_{1*}&= \frac{\theta+1-\alpha\theta-\beta+\gamma-\delta-\sqrt{D}}{2(1+\theta)},\,\, y_{1*}=\delta-\gamma+\theta x_{1*},\\
x_{2*}&=\frac{\theta+1-\alpha\theta-\beta+\gamma-\delta+\sqrt{D}}{2(1+\theta)},\,\, y_{2*}=\delta-\gamma+\theta x_{2*}.
\end{align*}
The equilibrium $E_{1*}$ is a hyperbolic saddle and $E_{2*}$ is a stable equilibrium point if $x_{2*}\geq x_m$. For $x_{2*}<x_m$, the equilibrium $E_{2*}$ will be either stable or unstable, depending on whether ${\text Trace} J|_{E_{2*}}< {\text or} >0$.

\item If $\mu\in R_2$ then there exists only one interior equilibrium point $\bar{E}(\bar{x}, \bar{y})$, where
\begin{align*}
\bar{x}&=\frac{\theta+1-\alpha\theta-\beta+\gamma-\delta}{2(1+\theta)},\,\,
\bar{y}=\delta-\gamma+\theta \bar{x}.
\end{align*}
In this case, the non-trivial predator nullcline touches the non-trivial prey nullcline tangentially at the point $\bar{E}$. The equilibrium $\bar{E}$ is a saddle node.

\item If $\mu\in R_3$ then there exists only one interior equilibrium point $E_{*}(x_{*}, y_{*})$, where
\begin{align*}
x_{*}&=\frac{\theta+1-\alpha\theta-\beta+\gamma-\delta+\sqrt{D}}{2(1+\theta)},\,\, y_{*}=\delta-\gamma+\theta x_{*}.
\end{align*} The equilibrium $E_*$ is stable if $x_*\geq x_m$ and for $x_*<x_m$, it will be either stable or unstable depending on whether ${\text Trace} J|_{E_{*}}< {\text or} >0$.

\item  If $\mu\in R_4$ then the system \eqref{sf_model2} has no interior equilibrium in $\IR^2_+$.
\end{enumerate}
\end{lemma}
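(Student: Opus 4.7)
The plan is to reduce existence to counting admissible roots of a single quadratic in $x$, and stability to a Jacobian computation that, thanks to the algebraic form of the nullclines, simplifies at any interior equilibrium. First, substitute the predator nullcline $y = \theta x + \delta - \gamma$ into the prey nullcline rewritten as $(\beta+x)(1-x) = (\alpha+x-1)y$. This yields
\[
(1+\theta)x^2 + \bigl[\alpha\theta - \theta + \beta - 1 + \delta - \gamma\bigr] x + \bigl[(\alpha-1)(\delta-\gamma) - \beta\bigr] = 0,
\]
whose discriminant is exactly $D$ and whose roots are the $x_{i*}$ appearing in the statement. The four cases then reduce to counting positive solutions that also yield $y = \theta x + \delta - \gamma > 0$, information controlled entirely by $\operatorname{sign}(D)$ together with the product and sum of roots from Vieta.

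Region by region, I would argue as follows. In $R_1$ the hypothesis $\delta > \gamma + \beta/(\alpha-1)$ makes the product $[(\alpha-1)(\delta-\gamma) - \beta]/(1+\theta)$ positive, while $\alpha > 1/(1-\beta)$ combined with $\theta(\alpha-1)+\alpha\beta/(\alpha-1) < 1$ makes the sum positive, so both roots are positive; since the line already has positive $y$-intercept, the corresponding $y_{i*}$ are positive as well. In $R_3$, either the product of roots is negative (when $\gamma < \delta < \gamma + \beta/(\alpha-1)$) or the $y$-intercept $\delta-\gamma$ is non-positive but $\gamma-\delta < 1$ forces the line to lie below the prey nullcline at $x = (\gamma-\delta)/\theta$ and above it at $x = 1$, giving exactly one admissible crossing by the intermediate value theorem. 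In $R_4$ the three alternatives $D < 0$, $\gamma \geq \delta+1$, or the fourth transversality-failure condition rule out every admissible root.

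For stability I compute
\[
f_y = -\frac{\alpha x(\beta+x)}{(\beta+x+y)^2} < 0, \qquad g_x = \frac{\theta y^2}{(\delta+\theta x)^2} > 0,
\]
and observe that at any interior equilibrium the relation $y+\gamma = \delta+\theta x$ collapses $g_y$ to $-y^2/(y+\gamma)^2 < 0$. Hence $\operatorname{sign}(\det J) = \operatorname{sign}(f_x g_y - f_y g_x)$ coincides with the sign of the prey-nullcline slope $-f_x/f_y$ minus the predator-line slope $\theta$, evaluated at the equilibrium. At $E_{1*}$ the straight line pierces the prey nullcline from above to below as $x$ increases, so the prey-nullcline slope exceeds $\theta$ and $\det J < 0$, giving the saddle. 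At $E_{2*}$ (and at $E_*$ in part (iii)) the inequality reverses, $\det J > 0$, and the trace $f_x + \epsilon g_y$ decides stability: since the fold abscissa $x_m$ is exactly where $f_x$ changes sign on the critical manifold, $x_{2*} \geq x_m$ forces $f_x \leq 0$ and a negative trace, whereas $x_{2*} < x_m$ makes the sign of $\operatorname{trace} J$ decisive.

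Finally, for part (ii), when $D = 0$ the two roots coalesce at $\bar E$, producing $\det J = 0$ with a simple zero eigenvalue and $\operatorname{trace} J \neq 0$; the tangential intersection of the nullclines is immediate from $D = 0$. The saddle-node classification then follows either from a center-manifold reduction producing a normal form $\dot u = a u^2 + O(u^3)$ with $a \neq 0$, or equivalently from Sotomayor's nondegeneracy $w^{\top}[D^2 F(\bar E)(v,v)] \neq 0$ on the zero-eigendirections. I expect this nondegeneracy to be the main obstacle, since expanding $f$ to second order at an equilibrium whose coordinates are not closed-form-friendly is algebraically heavy. The cleanest route should be to exploit the fact that $\bar E$ is precisely the transverse collision point of the saddle $E_{1*}$ and the node $E_{2*}$ from part (i) as parameters move from $R_1$ into $R_2$; the transversality of this collision — witnessed for instance by $\partial D/\partial \delta \neq 0$ on $R_2$ — will automatically imply the generic saddle-node configuration.
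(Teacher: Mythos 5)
The paper itself states this lemma without proof, and your overall strategy --- reducing existence to the quadratic $(1+\theta)x^2+(\alpha\theta+\delta-\gamma+\beta-\theta-1)x+(\alpha-1)(\delta-\gamma)-\beta=0$ whose discriminant is $D$, and reducing stability to the slope comparison $\operatorname{sign}(\det J)=\operatorname{sign}(f_yg_y)\cdot\operatorname{sign}\bigl(\theta-\phi'(x)\bigr)$ after noting that $y+\gamma=\delta+\theta x$ collapses $g_y$ to $-y^2/(y+\gamma)^2$ --- is exactly the intended route; the trace argument using the fact that $x_m$ is the zero of $f_x$ along $M_{20}$ is also correct.

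Two existence steps, however, do not go through as written. First, in $R_1$ you claim that $\alpha>1/(1-\beta)$ together with $\theta(\alpha-1)+\alpha\beta/(\alpha-1)<1$ makes the sum of roots positive. It does not: the sum equals $\bigl(1-\beta-\theta(\alpha-1)-(\delta-\gamma)\bigr)/(1+\theta)$, and your two inequalities only give $1-\beta-\theta(\alpha-1)>\beta/(\alpha-1)$, i.e.\ that the interval of admissible $\delta$ is nonempty. For $\delta-\gamma>1-\beta-\theta(\alpha-1)$ both roots are negative and there is no interior equilibrium --- this is precisely the subcase the paper assigns to $R_4$ --- so you must either add the hypothesis $(\delta-\gamma)+\theta(\alpha-1)<1-\beta$ explicitly or read $R_1$ as excluding the overlap with $R_4$; you cannot derive positivity of the sum from the inequalities you cite. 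Second, in the subcase $\delta\le\gamma$ of $R_3$ the relevant threshold is $\gamma-\delta<\theta$, not $\gamma-\delta<1$: evaluating the quadratic at $x=1$ gives $q(1)=\alpha(\theta+\delta-\gamma)$, so the unique positive root lies in $(0,1)$ (and hence yields $y_*>0$) if and only if $\theta>\gamma-\delta$. Your intermediate-value endpoint ``above it at $x=1$'' is exactly this condition and does not follow from $\gamma-\delta<1$ unless $\theta\ge 1$; for the same reason the alternative $\gamma\ge\delta+1$ in $R_4$ does not by itself exclude an interior equilibrium when $\theta>\gamma-\delta$. Finally, in (ii) your dichotomy ``$\det J=0$, $\operatorname{trace}J\ne 0$'' needs the trace hypothesis stated: the paper later identifies the parameter values where $\operatorname{trace}J(\bar E)=0$ as a Bogdanov--Takens point, so the saddle-node classification holds only away from that codimension-two locus, and your transversality-of-collision argument should be presented as a substitute for verifying Sotomayor's quadratic nondegeneracy rather than as implying it automatically.
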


\begin{figure}[H]
\setlength{\belowcaptionskip}{-10pt}
\centering
			\includegraphics[width=10cm,height=8cm]{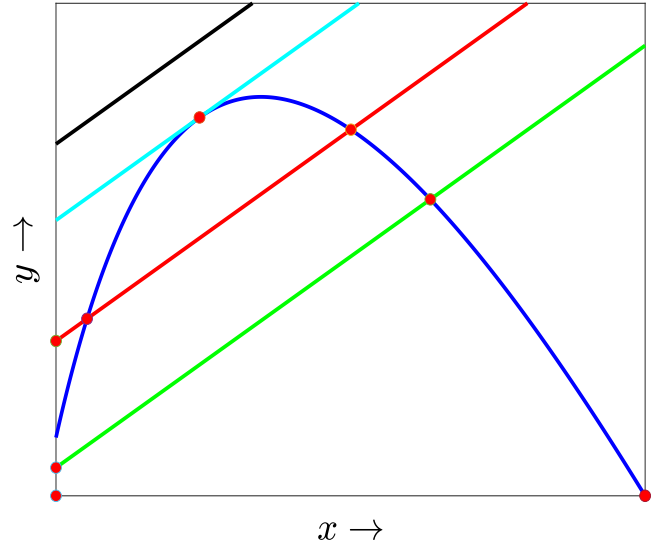}
\caption{In this representation, relative position of the non-trivial nullclines are shown as solid lines and for clarity, we have not included trivial nullclines. The $x$ and $y$ axes are the densities of prey and predator species, respectively. Non-trivial prey nullcline is shown by the blue curve, whereas non-trivial predator nullclines are shown by  straight lines for variable $\delta$. The coexistence equilibria are shown by solid red circles. For the given parameter values of $\alpha=1.5, \beta=0.0207, \gamma=0.3,  \theta=0.3$ and variable $\delta$, the figure shows that the number of interior equilibrium points ranges from $0$ to $2$. Different relative positions of predator nullclines are shown in different colours for various values of $\delta$: black for $\delta=0.55$ ($\mu\in R_4$, no interior equilibrium), cyan for $\delta=0.49576955$ ($\mu\in R_2$, unique interior equilibrium), red for $\delta=0.41$ ($\mu \in R_1$, two interior equilibrium), green for $\delta=0.32$ ($\mu\in R_3$, unique interior equilibrium). (For interpretation of the references to colour in this figure caption, the reader is referred to the web version of this chapter.)}
\label{fig:nullcline}			
\end{figure}

\subsection{Bifurcation Scenario}

The non-trivial prey and predator nullclines intersect the positive $y$-axis at the point $P(0, \frac{\beta}{\alpha-1})$ and $E_2(0, \delta-\gamma)$ and consequently, based on the nature of the non-trivial nullclines we have that if $E_{2b}$ lies below the point $P$ then there always exists a unique interior equilibrium point $E_{*}$, if $E_{2b}$ lies above the point $P$ then under certain parametric conditions (as mentioned in {\bf Lemma} \ref{lemma_6}) there may exist zero, one or two interior equilibrium points. Thus, we see that varying the control parameter $\delta$ it follows that for $\delta=\delta_{TC}=\gamma+\frac{\beta}{\alpha-1}$, the model system \eqref{sf_model2} undergoes a transcritical bifurcation as one interior equilibrium bifurcates from $E_2(0,\delta-\gamma)$ as $\delta$ passes through $\delta=\delta_{TC}$. Assuming the parametric conditions $\alpha>\frac{1}{1-\beta}, \beta<1, \delta>\gamma+\frac{\beta}{\alpha-1},\theta(\alpha-1)+\frac{\alpha\beta}{\alpha-1}<1$, we have that for $D>0$, there exist two interior equilibrium points $E_{1*}$ and $E_{2*}$ where $E_{1*}$ is a hyperbolic saddle point; for $D=0 (\theta=\theta_{SN})$, the two equilibrium points $E_{1*}, E_{2*}$ coalesce at the degenerated saddle node equilibrium point $\bar{E}(\bar{x},\bar{y})$ and for $D<0$ there exists no equilibrium point. Thus, we have saddle node bifurcation of equilibria, i.e., the model system \eqref {sf_model2} undergoes a saddle node bifurcation as $\theta$ passes through $\theta=\theta_{SN}$. For $(\delta, \theta)=(\delta_{TC}, \theta_{SN})$, the model system \eqref {sf_model2} undergoes a saddle-node-transcritical bifurcation topologically equivalent to co-dimension 2 cusp bifurcation as $(\delta, \theta)$ passes through $(\delta, \theta)=(\delta_{TC}, \theta_{SN})$. Now, it may also happen that varying $\delta$, there may take place Hopf bifurcation around $E_{2*}$ (or $E_*$) for $\delta=\delta_H$ and will be studied in the next section in the realm of slow-fast analysis. We also have that for $D=0$, Trace \,$J(\bar{E})=0 \left((\delta, \theta)=(\delta_{H}, \theta_{SN})\right)$, the equilibrium $\bar{E}$ is a Bogdanov--Takens (BT) singularity and thus, varying the parameter $(\delta, \theta)$ in a neighbourhood of $(\delta, \theta)=(\delta_{H}, \theta_{SN})$, various codimension-2 BT bifurcation phenomena (emergence and destruction of periodic orbit, homoclinic orbit) will be observed. Following \cite{kuznetsov1998elements} one can explicitly compute the normal forms of the various bifurcations mentioned here and verify the results analytically. But, as the chapter aims to investigate the dynamics of a slow-fast system in the realm of GSPT and blow-up technique, we present below the two-parameter bifurcation diagram for the various bifurcation results. 

\begin{figure}[H]
\setlength{\belowcaptionskip}{-10pt}
\centering
\includegraphics[width=15cm,height=10cm]{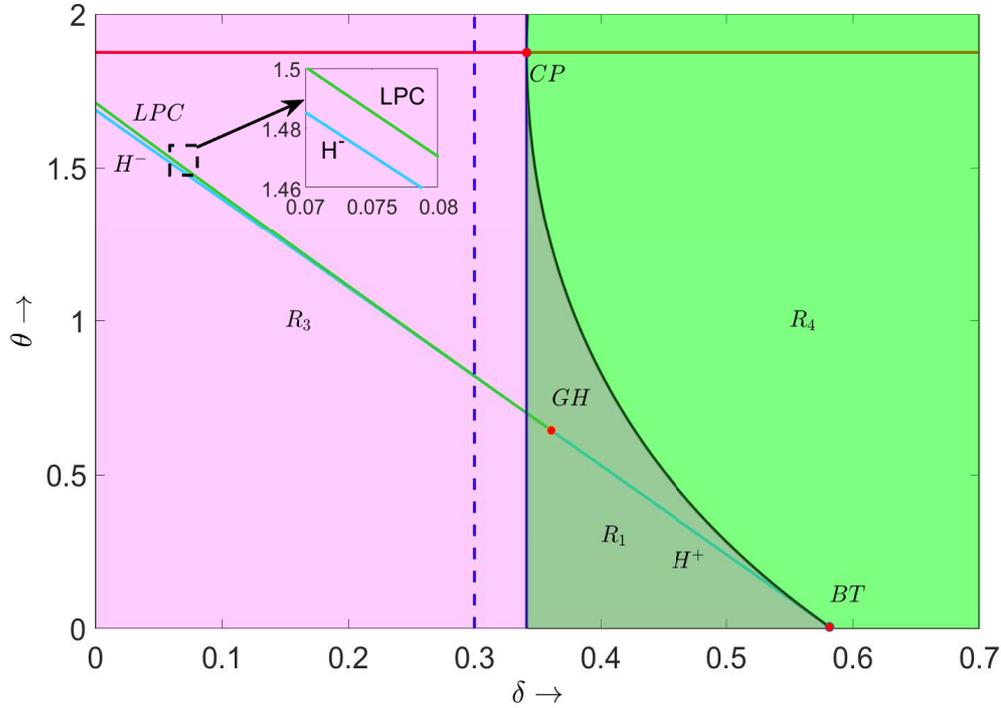}
\caption{Two-parameter Bifurcation diagram in $\delta-\theta$ parameter plane. The Hopf (H) bifurcation curve (cyan) intersects at the Generalized Hopf (GH) bifurcation point located at $(\delta_{GH}, \theta_{GH})=(0.361212, 0.638870)$ in the region $R_1$ with the limit point of cycles (LPC) bifurcation curve (green). The thick black curve represents the saddle-node (SN) bifurcation curve, and the thick blue line is the transcritical bifurcation curve (TC). The TC and SN curve intersect tangentially at a cusp point (CP). The broken blue line is the equation $\delta=\gamma$ determines the existence of boundary equilibrium point $E_{2b}$.  The horizontal red line represents the equation $\theta(\alpha-1)+\frac{\alpha\beta}{\alpha-1}=1$. The SN and H curves approach each other and eventually collide at a Bogdanov-Takens (BT) point located at $(\delta_{BT}, \theta_{BT})=(0.581662, 0.005247)$ when $\delta$ increases. The areas $R_3$, $R_1$, and $R_4$ correspond to the pink, olive green, and green regions, respectively, whereas the region $R_2$ is on the black SN curve. There also exists a Homoclinic curve originating from BT point, but not shown here as its range of existence is very narrow. The other parameter values are $\alpha=1.5$, $\beta=0.0207$ $\gamma=0.3$ and $\epsilon=0.01$. (For the interpretation of the colour references in this figure caption, the reader is referred to the web version of this chapter.)}
\label{fig:regions_bifurcation}			
\end{figure} 
%  LPC=Limit Point (of) Cycles=Saddle-node of cycles

\section{Slow-Fast Analysis}\label{Sec:sf_analysis}
With the time scaling $\tau = \epsilon t$, $0<\epsilon\ll 1$ the system \eqref{sf_model2} transforms to the following topologically equivalent system:
\begin{subequations}\label{sf_model3}
\begin{align}
	\epsilon\frac{dx}{d\tau}&=x(1-x)-\frac{\alpha xy}{\beta+x+y},\\
	\frac{dy}{d\tau}&=y\left(\frac{y}{y+\gamma}-\frac{y}{\delta+\theta x}\right).
\end{align}
\end{subequations}
The model system \eqref{sf_model2} or \eqref{sf_model3} is a standard form of slow--fast system with $t$ as the fast timescale and $\tau$ as the slow timescale, respectively. The variables $x$ and $y$ are referred as fast and slow variables, respectively. In the singular limit $\epsilon\rightarrow 0$, the systems \eqref{sf_model2} and \eqref{sf_model3} transform to the following fast and slow subsystems.
\begin{subequations}\label{sf_fast subsystem}
\begin{align}
	\frac{dx}{dt}&=x(1-x)-\frac{\alpha xy}{\beta+x+y},\\
	\frac{dy}{dt}&=0,
\end{align}
\end{subequations}
and
\begin{subequations}\label{sf_slow subsystem}
\begin{align}
	0&=x(1-x)-\frac{\alpha xy}{\beta+x+y},\\
	\frac{dy}{d\tau}&=y\left(\frac{y}{y+\gamma}-\frac{y}{\delta+\theta x}\right).
\end{align}
\end{subequations}
 
The slow flow corresponding to the slow subsystem \eqref{sf_slow subsystem} is constrained on the critical set $M_0$ given by
\begin{align*}
 M_0&=\left\{\left(x,y\right)\in \IR^2_+ \,\middle|\, f(x,y)=0\right\}.
 \end{align*}
The critical set $M_0$ consists of two kinds of critical manifolds given by
\begin{subequations}\label{sf_manifolds}
\begin{align}
	M_{10}&=\left\{\left(x,y\right)\in \IR^2_+ \,\middle|\, x=0\right\},\\
	M_{20}&=\left\{\left(x,y\right)\in \IR^2_+ \,\middle|\, y =\phi(x)=
	\frac{ (1-x)(\beta + x)}{\alpha+x-1}\equiv \phi(x), \alpha>\frac{1}{1-\beta}, \beta<1\right\}. \label{sf_manifold_m20}
\end{align}
\end{subequations}

We now have the following basic result on the nature of the function $\phi(x)$: 
\begin{lemma}\label{lemma_3}
 \begin{enumerate}[label=(\roman*)]
\item The function $\phi(x)$ decreases strictly in $\IR^2_+$ if $1<\alpha\leq \beta+1$.\\
\item The function $\phi_(x)$ has a local maxima at $x_m=1-\alpha +\sqrt{\alpha\left(\alpha-1-\beta\right)}$ in $\IR^2_+$ if $\alpha>\frac{1}{1-\beta}$, $\beta<1$. 
\end{enumerate} 
\end{lemma}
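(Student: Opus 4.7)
The proof is essentially a single-variable calculus exercise, so I would approach it by explicit differentiation and sign analysis of $\phi'(x)$, treating the two parameter regimes as cases distinguished by the sign of a discriminant-like quantity.

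First I would apply the quotient rule to $\phi(x) = \frac{(1-x)(\beta+x)}{\alpha+x-1}$. After expanding the numerator $(1-x)(\beta+x) = -x^2+(1-\beta)x+\beta$, differentiating, and clearing common factors, the numerator of $\phi'(x)$ simplifies to a quadratic in $x$. The cleanest way to exhibit the structure is to complete the square, which should yield
\begin{equation*}
\phi'(x) = \frac{\alpha(\alpha-1-\beta) - (x+\alpha-1)^2}{(\alpha+x-1)^2}.
\end{equation*}
The denominator is strictly positive for $\alpha>1$ and $x\ge0$, so the sign of $\phi'$ is governed entirely by the numerator $N(x) := \alpha(\alpha-1-\beta)-(x+\alpha-1)^2$, an inverted parabola in $x$ centered at $x=1-\alpha$.

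For part (i), I would observe that the hypothesis $1<\alpha\le\beta+1$ yields $\alpha(\alpha-1-\beta)\le 0$, while $(x+\alpha-1)^2>0$ for every $x>0$ because $\alpha>1$. Hence $N(x)<0$ for all $x>0$, so $\phi'(x)<0$ throughout the positive real axis, which gives strict monotonic decrease. For part (ii), the hypothesis $\alpha>\frac{1}{1-\beta}$ with $\beta<1$ is equivalent to $\alpha(1-\beta)>1$, which I would rearrange to $\alpha(\alpha-1-\beta)>(\alpha-1)^2>0$. Consequently the equation $N(x)=0$ has two real roots $x = 1-\alpha\pm\sqrt{\alpha(\alpha-1-\beta)}$, and the inequality just obtained guarantees the $+$ root is strictly positive while the $-$ root is negative. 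Hence on $(0,\infty)$ the numerator $N$ is positive on $(0,x_m)$ and negative on $(x_m,\infty)$ with $x_m=1-\alpha+\sqrt{\alpha(\alpha-1-\beta)}$, establishing a local maximum at $x_m$ by the first-derivative test.

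There is no real obstacle in this proof; the only mild care is in the algebraic reduction to the completed-square form and in verifying that the positive root in case (ii) is indeed positive, which is exactly the content of the hypothesis $\alpha>\frac{1}{1-\beta}$ rewritten. I would also briefly note, for consistency with the later slow--fast analysis, that $\phi$ is well defined and nonnegative on the biologically relevant range $x\in[0,1]$ since $\alpha+x-1>0$ there.
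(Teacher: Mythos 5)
Your proposal is correct: the completed-square form $\phi'(x)=\dfrac{\alpha(\alpha-1-\beta)-(x+\alpha-1)^2}{(\alpha+x-1)^2}$ checks out (the constant term identity $\alpha(\alpha-1-\beta)-(\alpha-1)^2=\alpha(1-\beta)-1=(1-\beta)(\alpha-1)-\beta$ holds), and the sign analysis in both parameter regimes, including the verification that $x_m>0$ precisely because $\alpha(1-\beta)>1$, is exactly right. The paper states this lemma without proof, and your elementary first-derivative argument is the standard and intended one.
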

%\begin{proof}
%\begin{enumerate}[label=(\roman*)]
%\item We have $\phi'(x)=-\frac{1}{\left(\alpha +x-1\right)^2}\psi(x)$ where 
%\begin{align*}\label{psi_x_expression}
%	\psi(x)=x^2+2x\left(\alpha -1\right)+ \alpha \beta %+1-\alpha.
%\end{align*} 
%Now, for $1<\alpha <\beta +1$, $\psi(x)$ has no real %root and correspondingly, $\phi'(x)<0$ and for %$\alpha=\beta +1$, $\phi(x)=1-x$. which decreases %strictly in $\IR^2_+$.
%\item For $\alpha>\frac{1}{1-\beta}$, $\beta<1$, %$\psi(x)$ has one positive real root given by %$x_m=1-\alpha +\sqrt{\alpha\left(\alpha-1-\beta\right)}%$ and correspondingly, $\phi(x)$ has a local maximum at %$x=x_m$. 
%\end{enumerate}
%\end{proof}

Henceforth, we will be assuming throughout the article the parametric condition 
that $\alpha>\frac{1}{1-\beta},\,\beta<1$, to ensure that the critical manifold $M_{20}$ is of  parabolic shape, increases in $0<x<x_m$ and decreases in $x_m<x<1$. The critical manifold $M_{20}$ looses its normal hyperbolicity at $P\left(0, \frac{\beta}{\alpha-1}\right)$ and $Q(x_m, y_m)$ (maximum point), $y_m=\phi(x_m)$. Consequently, it consists of two branches $S_0^r$ and $S_0^a$ where $S_0^r$ is the branch from $P$ to $Q$ and is hyperbolic repelling; $S_0^a$ is the branch from $Q$ to $R(1,0)$, and is hyperbolic attracting. Thus, 
\begin{align}
	S_0^r&=M_{20}\cap \left\{(x,y)\in \IR^2_+ \,\middle|\, 0<x<x_m\right\},\\
	S_0^a&=M_{20} \cap \left\{(x,y)\in \IR^2_+ \,\middle|\, x_m<x< 1\right\}.
\end{align}

%\begin{lemma}\label{lemma_4}
%\begin{enumerate}[label=(\roman*)] Consider %$0<\epsilon\ll 1$. Then 
%\item $M_{20}$ looses its normal hyperbolicity at $P$ %and $Q$.
%	\item The branches $S_0^{r}$ and $S_0^{a}$ are %normally hyperbolic, attracting and  repelling. 
%	\end{enumerate}
%\end{lemma}
%\begin{proof}
%\begin{enumerate}[label=(\roman*)]
%\item We have, $\left.\frac{\partial f}{\partial x}\right|_{(x,\phi(x))}=-\frac{x(x-x_m)(x-x_n)}{\alpha(\beta+x)}$, where $x_n$ is the negative real root of the equation $\psi(x)=0$. Therefore, $\frac{\partial f}{\partial x}$ has zero eigenvalue at $P$ and $Q$ and consequently $M_{20}$ looses its normal hyperbolicity at $P$ and $Q$. 

%\item Clearly, $\left.\frac{\partial f}{\partial x}\right|_{(x,\phi_(x))}> 0$ for $0<x<x_m$; $\left.\frac{\partial f}{\partial x}\right|_{(x,\phi(x))}< 0$ for $x_m<x<1$. Hence, the branches $S_0^{r}$ and $S_0^{a}$  are normally hyperbolic repelling and attracting, respectively. 
%\end{enumerate}
%\end{proof}
Similarly, the normally hyperbolic repelling and attracting parts of the critical manifold $M_{10}$, denoted by $S_0^{r+}$ and $S_0^{a+}$ are given by
\begin{align}
    S_0^{r+}&=M_{10}\cap\left\{(x,y)\in \IR^2_+ \,\middle|\, 0<y<\frac{\beta}{\alpha-1}\right\},\\
	S_0^{a+}&=M_{10} \cap \left\{(x,y)\in \IR^2_+ \,\middle|\, y>\frac{\beta}{\alpha-1}\right\}.
\end{align}

The slow flow that evolves on the critical manifold $M_{20}$ is given by,
%\begin{subequations}
\begin{align}\label{sf_slow_sub}
		\frac{dx}{d\tau}&=\frac{\phi^2(x)\left((1+\theta)x^2+(\alpha\theta+\delta-\gamma+\beta-\theta-1)x+\alpha(\delta-\gamma)-(\delta-\gamma)-\beta\right)}{\phi'(x)(\delta+\theta x)(-x^2+x(1+\gamma-\beta)+(\alpha\gamma+\beta-\gamma)},
		\end{align}
%\end{subequations}

and is not defined at the point $Q$. The point $Q$ is known as the fold point, because it corresponds to a fold bifurcation for \eqref{sf_fast subsystem} considering $y$ as a parameter. Now, for $0<\epsilon \ll  1$, Fenichel's theorem tells us that, $S_0^{r}$ and $S_0^a$ can be perturbed to $S_{\epsilon}^{r}$ and $S_{\epsilon}^a$ which are within $\mathcal{O}(\epsilon)$ distance from $S_0^{r}$ and $S_0^a$.
\begin{figure}[H]
\setlength{\belowcaptionskip}{-10pt}
\centering
			\includegraphics[width=14cm,height=8cm]{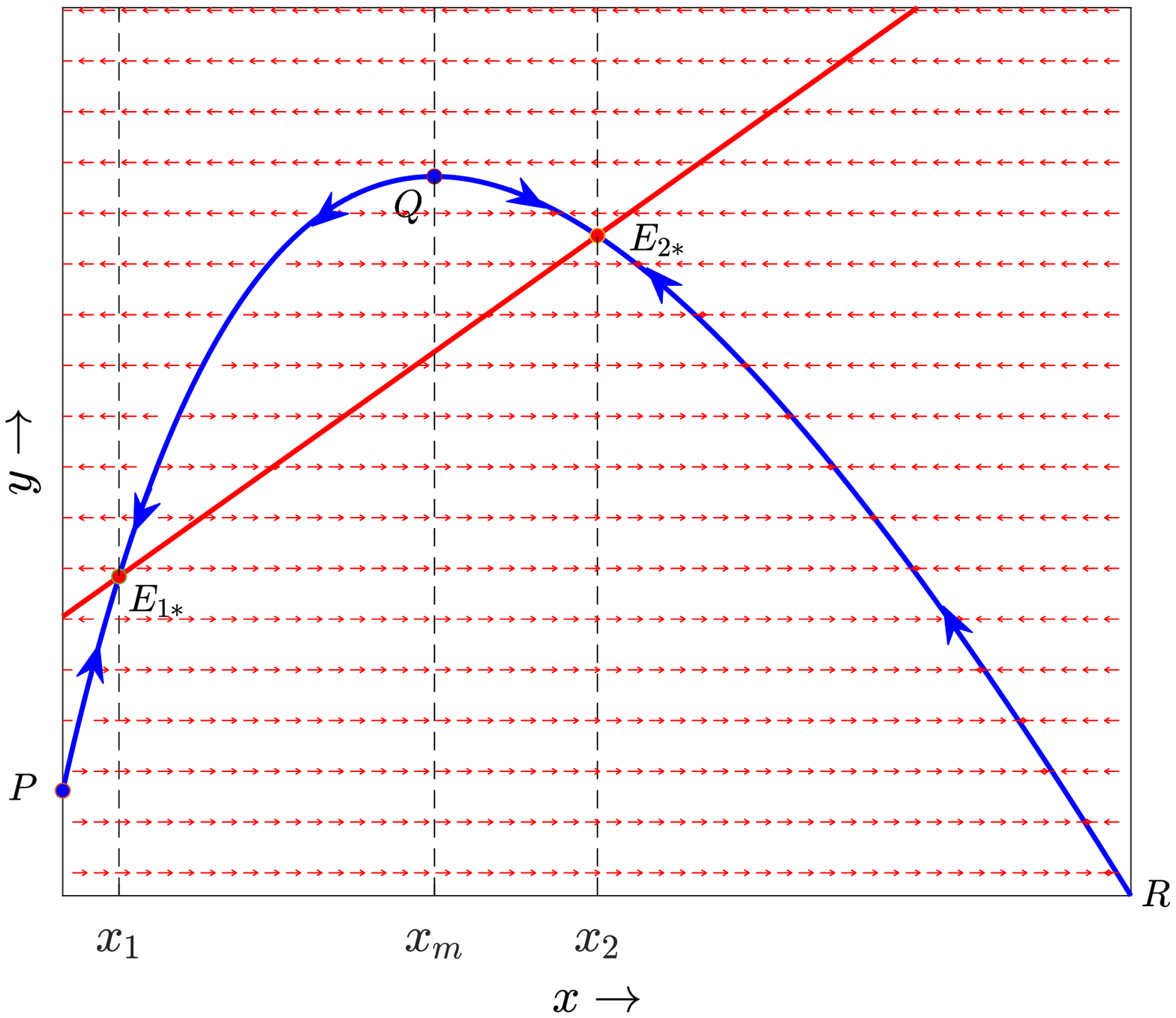}
\caption{The dynamics of the fast and slow subsystems \eqref{sf_fast subsystem} and  \eqref{sf_slow subsystem}, respectively, are illustrated. Two possible interior equilibrium positions are represented by solid red circles, and the non-hyperbolic points on the slow-manifold $M_{20}$ (blue curve) are shown by solid blue circles: the generic transcritical point $P(0,\beta/(\alpha-1) )$ and the generic fold point $Q(x_m, y_m)$. The normally hyperbolic attracting branch $S_0^a$ (from $Q$ to the point $R(1,0)$ for $x_m< x< 1$) and repelling branch $S_0^r$ (from $P$ to $Q$ for $0<x<x_m$) of the critical manifold $M_{20}$ are illustrated. The manifold $M_{10}$ is along the positive $y$-axis. The red arrows (horizontal) indicate fast flow, and the blue arrows on $M_{20}$ indicate slow flow. (For the interpretation of the colour references in this figure caption, the reader is referred to the web version of this chapter.)}
\label{fig:slow_fast_dynamics}			
\end{figure}

\section{Singular Hopf Bifurcation and Canard Cycles} \label{Sec.hopf_canard}
 Here, we assume $\mu\in R_1\cup R_3$ so that the existence of the interior equilibrium $E_{2*}$ ($E_{2*}=E_*$ for $\mu\in R_3$) is ensured. It also follows that for $\delta=\delta_*$, the interior equilibrium $E_{2*}$ coincides with the fold point $Q(x_m, y_m)$, where $\delta_*$ is implicitly given by the equation
\begin{align*}
    2(1+\theta)\left(1-\alpha+\sqrt{\alpha(\alpha-1-\beta)}\right)=\theta+1-\alpha\theta-\beta+\gamma-\delta+\sqrt{D},
\end{align*}
i.e.,
$$\delta_*=\left(\theta +2\right) \left(\alpha -\sqrt{\alpha  \left(\alpha -1-\beta \right)}-1\right)+\gamma -\beta +1.$$

We also observe that 
\begin{align*}
    \left.f(x,y)\right|_{(x_m,y_m,\delta_*)}&=0,\,\,\,\, \left.g(x,y)\right|_{(x_m,y_m,\delta_*)}=0,\\
    \left.\frac{\partial f(x,y)}{\partial x}\right|_{(x_m,y_m,\delta_*)}&=0,\,\,\,\,\left.\frac{\partial f(x,y)}{\partial y}\right|_{(x_m,y_m,\delta_*)}=-\frac{\alpha x_m(\beta+x_m)}{(\beta+x_m+y_m)^2}<0,\\
    \left.\frac{\partial g(x,y)}{\partial x}\right|_{(x_m,y_m,\delta_*)}&=\frac{\theta y_m^2}{(\delta_*+\theta x_m)^2},\,\,\,\,\left.\frac{\partial g(x,y)}{\partial \delta}\right|_{(x_m,y_m,\delta_*)}=\frac{y_m^2}{(\delta_*+\theta x_m)^2}=\frac{y_m^2}{(\gamma+y_m)^2}.
\end{align*}
Further, we assume that 
\begin{align}\label{canard condition}
    \left. \frac{\partial^2f}{\partial x^2}\right|_{(x_m,y_m,\delta_*)}&=2\left(\frac{\alpha y_m(\beta+y_m)}{(\beta+x_m+y_m)^3}-1\right)\neq 0.
\end{align}
Consequently, we have the following 
\begin{align}\label{canard condition 1}
  f(x_m,y_m,\delta_*)=0,\,\,g(x_m,y_m,\delta_*)=0,\,\,\text{and},\,\frac{\partial f}{\partial x}(x_m,y_m,\delta_*)=0,
\end{align}
and
\begin{align}\label{canard condition 2}
  \frac{\partial f}{\partial y}(x_m,y_m,\delta_*)\ne0,\,\,\frac{\partial^2 f}{\partial x^2}(x_m,y_m,\delta_*)\ne 0,\,\,\frac{\partial g}{\partial x}(x_m,y_m,\delta_*)\ne0 \,\,\, \frac{\partial g}{\partial \delta}(x_m,y_m,\delta_*)\ne0.
\end{align}
With the above assumption, the fold point $Q$ is now the non-degenerate canard point or the singular contact point of the system.  Using the transformation $X=x-x_m$, $Y=y-y_m$ and $\lambda=\delta-\delta_*$, the system \eqref{sf_model2} transforms to the following form

\begin{subequations}\label{sf_normal form}
\begin{align}
	\frac{dX}{dt}&=Y\left(a_{01}+a_{11}X+\mathcal{O}(X^2, XY, Y^2)\right)+X^2\left( \frac{a_{20}}{2}+\frac{a_{30}}{6}X+\mathcal{O}(X^2, Y^2)\right),\\
	\frac{dY}{dt}&=\epsilon\left[X\left(b_{10}+\frac{b_{20}}{2}X+b_{11}Y +\mathcal{O}(X^2, XY, Y^2)\right)+\lambda\left(\frac{y_m^2}{(y_m+\gamma)^2}+\mathcal{O}(X,Y,\lambda)\right)\right. \\    \nonumber
	&\left.+Y\left(b_{01}+\frac{b_{02}}{2}Y+\mathcal{O}(X^2, XY, Y^2)\right)\right],
\end{align}
\end{subequations} 
where $\displaystyle a_{ij}=\left.\frac{\partial^{i+j}f}{\partial u^i\partial v^j}\right|_{(u_m,v_m,\delta_*)}$ and $\displaystyle b_{ij}=\left.\frac{\partial^{i+j}g}{\partial u^i\partial v^j}\right|_{(u_m,v_m,\delta_*)}$ i.e.,

\begin{equation*}
  \begin{aligned}
  & a_{01}=-\frac{\alpha x_m(\beta+x_m)}{(\beta+x_m+y_m)^2},\,\, a_{20}=2\left(\frac{\alpha y_m(\beta+y_m)}{(\beta+x_m+y_m)^3}-1\right),\\
      & a_{11}=-\frac{\alpha}{(\beta+x_m+y_m)^3}\left(\beta(\beta+x_m+y_m)+2x_my_m)\right),
      \,\,a_{30}=-\frac{6\alpha y_m(\beta+y_m)}{(\beta+x_m+y_m)^4},\\
       & b_{10}=\frac{\theta y_m^2}{(\gamma+y_m)^2},\,\,b_{01}=-\frac{y_m^2}{(\gamma+y_m)^2},
       \,\,b_{20}=-\frac{2\theta^2 y_m^2}{(\gamma+y_m)^3},\,\, b_{02}=-\frac{2y_m(2\gamma+y_m)}{(\gamma+y_m)^3},\,\,
        b_{11}=\frac{2\theta y_m}{(\gamma+y_m)^2}.
    \end{aligned}
  \end{equation*}

In order to use the theory as developed in \cite{krupa2001relaxation}, we use the following re-scaling
\begin{align*}
    X=aX',\,\,Y=bY',\,\,\, t=ct'
\end{align*}
where
\begin{align*}
    a=-\frac{2b_{10}a_{01}}{a_{20}}\sqrt{-\frac{1}{a_{01}b_{10}}},\,\,b=\frac{2b_{10}}{a_{20}},\,\,\, c=\sqrt{-\frac{1}{a_{01}b_{10}}}.
\end{align*}
The system \eqref{sf_normal form} is then topologically equivalent to the following canonical form
\begin{subequations}\label{sf_standard normal form}
\begin{align}
	\frac{dX'}{dt'}&=-Y'h_1(X',Y')+X'^2 h_2(X', Y')+\epsilon h_3(X', Y'),\\
	\frac{dY'}{dt'}&=\epsilon\left[X'h_4(X',Y')-\lambda' h_5(X',Y',\lambda')+Y'h_6(X',Y')\right],
\end{align}
\end{subequations}
where
\begin{align*}
    h_1(X', Y')&=1-bca_{11}X'+\mathcal{O}(X'^2, X'Y', Y'^2),\\
    h_2(X', Y')&=1+\frac{a^2c}{6}a_{30}X'+\mathcal{O}(X'^2, Y'^2),\\
    h_3(X',Y')&=0,\,\,\, h_4(X', Y')=1+\frac{a^2c}{2b}b_{20}X'+acb_{11}Y'+\mathcal{O}(X'^2, X'Y', Y'^2),\\
    h_5(X',Y',\lambda')&=1+\mathcal{O}(X', Y', \lambda'),\\
    h_6(X', Y')&=cb_{01}+\frac{bc}{2}b_{02}Y'+\mathcal{O}(X'^2, Y'^2),    \lambda'=-\frac{c\lambda}{b}\frac{y_m^2}{(y_m+\gamma)^2}.
\end{align*}
Now, by the formulae (3.12) and (3.13) of \cite{krupa2001relaxation} we have 
\begin{subequations}\label{coefficients}
\begin{align}
   a_1&=\frac{\partial h_3}{\partial X'}(0,0)=0,\\
   a_2&=\frac{\partial h_1}{\partial X'}(0,0)=-bca_{11},\\
   a_3&=\frac{\partial h_2}{\partial X'}(0,0)=\frac{a^2c}{6}a_{30},\\
   a_4&=\frac{\partial h_4}{\partial X'}(0,0)=\frac{a^2c}{2b}b_{20},\\
   a_5&=h_6(0,0)=cb_{01},
   \end{align}
   \end{subequations}
   \begin{align}\label{criticality}
  \text{and}, A&=-a_2+3a_3-2a_4-2a_5=bca_{11}+\frac{a^2c}{2}a_{30}-\frac{a^2c}{b}b_{20}-2cb_{01}.
\end{align}

Hence, following the formulae (3.15) and (3.16) of \cite{krupa2001relaxation} the expansions of singular Hopf bifurcation and maximal canard curves are given by
\begin{align*}
    \lambda_H'(\sqrt{\epsilon})&=-\frac{a_1+a_5}{2}\epsilon+\mathcal{O}(\epsilon^{3/2}),\\
    \lambda_c'(\sqrt{\epsilon})&=-\left(\frac{a_1+a_5}{2}+\frac{A}{8}\right)\epsilon+\mathcal{O}(\epsilon^{3/2}).
\end{align*}
 In terms of original parameters, the singular Hopf and maximal canard curves can be written as
 
 \begin{align}\label{singular Hopf curve}
\delta_H(\sqrt{\epsilon})&=\delta_*+\frac{bb_{01}(y_m+\gamma)^2}{2y_m^2}\epsilon+\mathcal{O}(\epsilon^{3/2}),
\end{align}
\begin{align}\label{canard curve}
\delta_c(\sqrt{\epsilon})&=\delta_*+\frac{b(y_m+\gamma)^2}{4y_m^2}\left(b_{01}+\frac{b}{2}a_{11}+\frac{a^2}{4}a_{30}-\frac{a^2}{2b}b_{20}\right).
\end{align}

We assume $\mu_*= (\alpha, \beta, \gamma, \delta_*, \theta)$ so that for $\mu=\mu_*$, we have $\delta=\delta_*$.  Assuming $\mu_*\in R_1\cup R_3$ and the condition \eqref{canard condition}, we define a continuous family $\Gamma(s)$ of singular canard cycles for the vector field $V_{0,\mu_*}$ passing through the canard point $Q$ and consisting of a part of fast flow $y=s$ and parts of the attracting and repelling manifolds $S_{0}^a$ and $S_0^r$ as shown in Fig. \ref{fig:canard slow-fast cycles}, where $s\in (0, s_*)$ with 
\begin{align}\label{cycles}
s_*=\left\{\begin{array}{c} y_m-y_{1*},\,\,\,\, \mu_*\in R_1\\
y_m-\frac{\beta}{\alpha-1},\,\,\,\,\mu_*\in R_3\end{array}\right.
\end{align}

Assuming that $x_l(s)<x_r(s)$ be the two distinct roots of $\phi(x)=y_m-s$, we can parametrize the family of canard cycles $\Gamma(s)$ for $s\in (0, s_*)$ as follows.
\begin{align*}
\Gamma(s)=\left\{(x, \phi(x)): x\in [x_l(s), x_r(s)]\right\}\cup \left\{(x, y_m-s): x\in [x_l(s), x_r(s)]\right\}.
\end{align*}

The slow-fast cycle $\Gamma(s)$ as defined here is known as canard slow-fast cycle without head. In a similar fashion, assuming $\mu_*\in R_3$ and the condition \eqref{canard condition}, we define a continuous family of canard slow-fast cycles with a head $\bar{\Gamma}(s)$ for the vector field $V_{0,\mu_*}$ passing through the canard point $Q$ as follows (see Fig. \ref{fig:canard slow-fast cycles}).
\begin{align*}
\bar{\Gamma}(s)& =\left\{(x, y_m-s): x\in [0, x_l(s)]\right\}\cup \left\{(0, y_m-s): y\in [y', y_m-s]\right\}\cup \left\{(x, y'): x\in [0, x']\right\}\\
& \cup \left\{(x, \phi(x)): x\in [x_l(s), x']\right\},
\end{align*}
where $s\in \left(\frac{\beta}{\alpha-1}, \frac{2\beta}{\alpha-1}\right)$, $x'=\phi^{-1}(y')$ and $y'$ is defined by \eqref{relaxation_oscillation} in Lemma \ref{lemma_entry-exit}.
\begin{figure}[H]
\setlength{\belowcaptionskip}{-10pt}
\centering
\subfloat[]{\includegraphics[width=8cm,height=7cm]{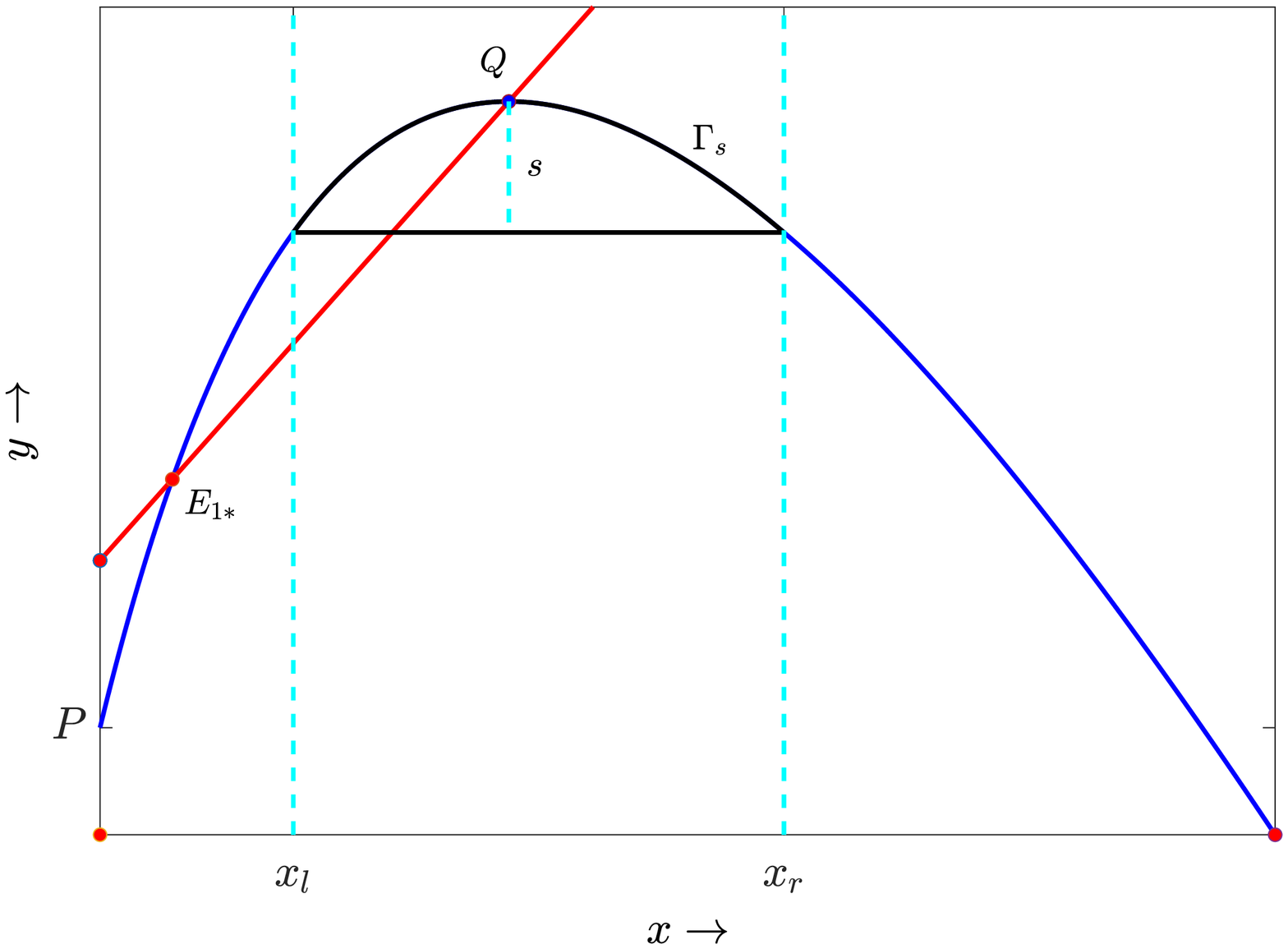}}
\subfloat[]{\includegraphics[width=8cm,height=7cm]{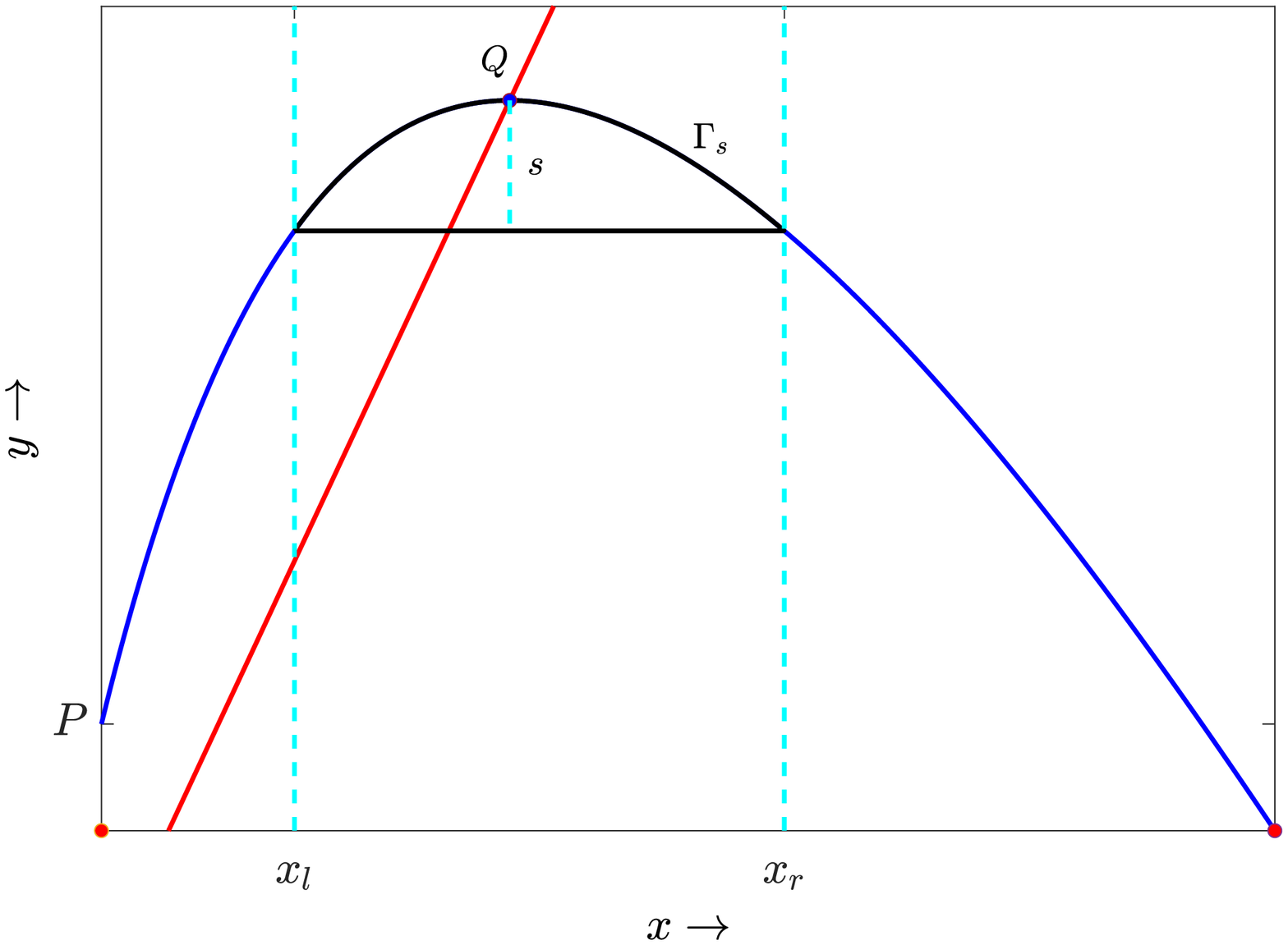}}\\
\subfloat[]{\includegraphics[width=8cm,height=7cm]{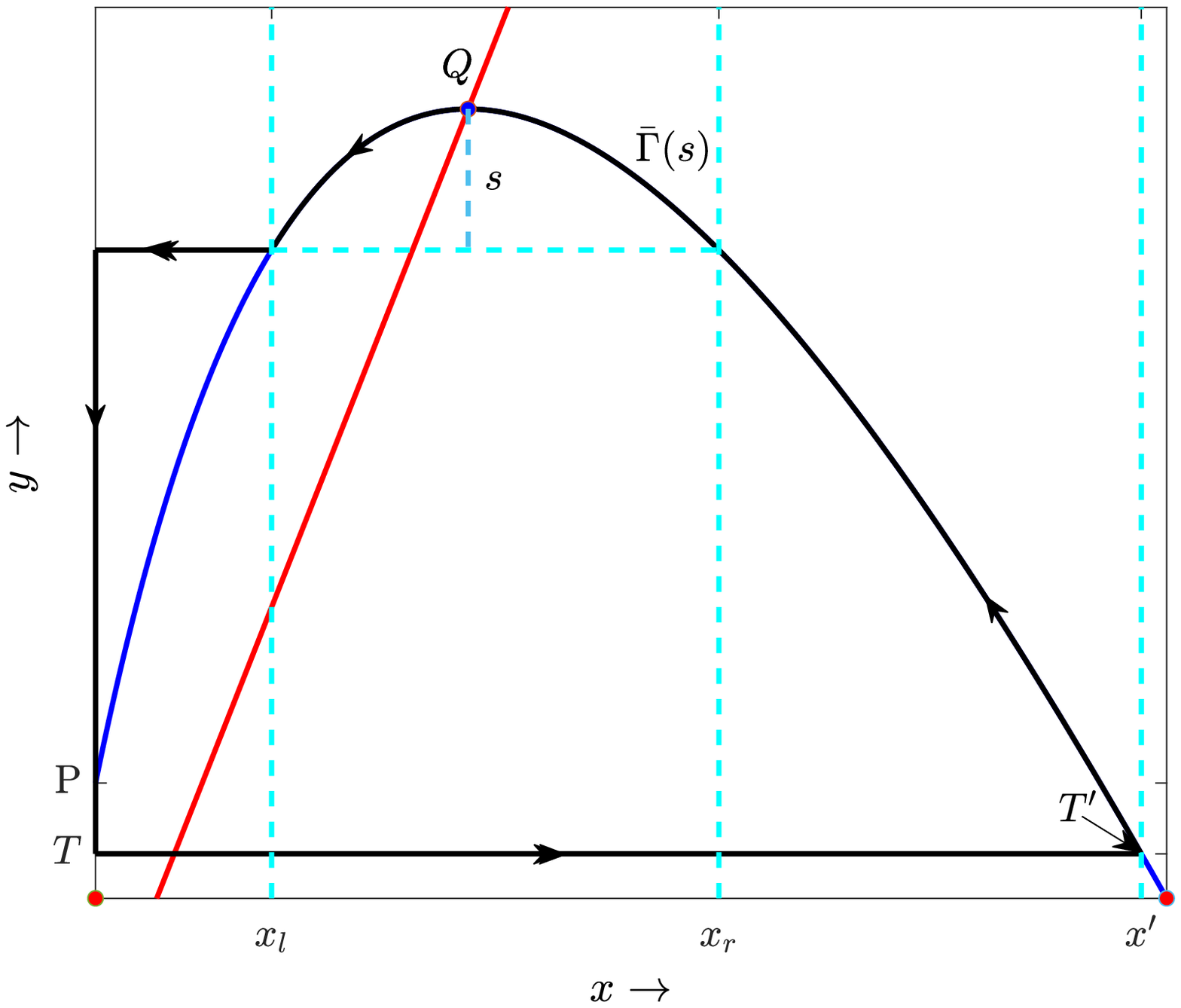}}
\caption{Slow-fast cycles without head through the canard point $Q(x_m, y_m)$ (black curve) of system \eqref{sf_model2} when (a) $\mu_*\in R_1$ (b) $\mu_*\in R_3$. (c) Slow-fast cycle with head through the canard point $Q(x_m, y_m)$ (black curve) of system \eqref{sf_model2}. The double arrows represent fast flow, and the single arrows represent slow flow.}
\label{fig:canard slow-fast cycles}			
\end{figure}

We now state the following results based on theorems (3.3) and (3.5) of \cite{krupa2001relaxation}.
\begin{theorem}\label{sf_singular_Hopf}
Assume $0<\epsilon\ll 1$, $\mu_*\in R_1\cup R_3$ and the condition \eqref{canard condition} hold. Then $\exists$ $\epsilon_0>0$ and $\delta_0>0$ such that for $0<\epsilon<\epsilon_0$ and $|\delta-\delta_*|<\delta_0$, the system \eqref{sf_model2} has an equilibrium point $Q_2$ in a neighbourhood of the fold point $Q$ which converges to $Q$ as $(\epsilon, \delta)\to (0,\delta_*)$. The system \eqref{sf_model2} undergoes a singular Hopf bifurcation at $\delta=\delta_H(\sqrt{\epsilon})$, where $\delta_H(\sqrt{\epsilon})$ is defined in \eqref{singular Hopf curve}. The Hopf bifurcation is non-degenerate when $A\neq 0$. It is supercritical if $A<0$ and sub-critical if $A>0$ where $A$ is given by \eqref{criticality}.
\end{theorem}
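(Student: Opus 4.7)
The plan is to reduce everything to a direct application of the singular Hopf bifurcation theorem of Krupa--Szmolyan (Theorem 3.3 in \cite{krupa2001relaxation}). The excerpt has already done the bulk of the preparatory work: the defining conditions \eqref{canard condition 1}--\eqref{canard condition 2} are precisely the non-degenerate canard point hypotheses of that theorem, and the change of variables $X=x-x_m$, $Y=y-y_m$, $\lambda=\delta-\delta_*$, together with the subsequent affine rescaling with constants $a,b,c$, puts the vector field in the canonical normal form \eqref{sf_standard normal form}. So my first step is to state explicitly that the hypotheses of Krupa--Szmolyan hold at $(x_m,y_m,\delta_*)$, by reading off \eqref{canard condition 1}--\eqref{canard condition 2} and checking that the rescaled coefficients $a_i$ in \eqref{coefficients} indeed arise as the mixed partial derivatives of $h_1,\dots,h_6$ at the origin, as prescribed by their formulas (3.12)--(3.13).

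Second, I would establish the existence of the equilibrium $Q_2$ by an implicit function theorem argument applied to the system $f(x,y,\delta)=0$, $g(x,y,\delta)=0$ near $(x_m,y_m,\delta_*)$. The relevant Jacobian is
\begin{equation*}
\det\begin{pmatrix}\partial_y f & \partial_\delta f \\ \partial_x g & \partial_\delta g\end{pmatrix}(x_m,y_m,\delta_*)
\end{equation*}
after eliminating $x$ via the parabolic branch; using $\partial_y f\neq 0$, $\partial_x g\neq 0$, $\partial_\delta g\neq 0$ from \eqref{canard condition 2} shows transversal intersection of the two nullclines, so a unique equilibrium $Q_2(\epsilon,\delta)\to Q$ persists as $(\epsilon,\delta)\to(0,\delta_*)$ for small $\epsilon$ and $|\delta-\delta_*|$. (Alternatively, one can quote that $Q_2$ is just the continuation of $E_{2*}$ from Lemma~\ref{lemma_6} as $\delta$ moves past $\delta_*$.)

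Third, once the canonical form is verified, the Hopf curve and its criticality follow immediately from formulas (3.15)--(3.16) of Krupa--Szmolyan applied to the coefficients \eqref{coefficients}, which gives $\lambda'_H(\sqrt\epsilon)=-\tfrac12(a_1+a_5)\epsilon+\mathcal O(\epsilon^{3/2})$. Inverting the rescaling $\lambda'=-(c\lambda/b)\,y_m^2/(y_m+\gamma)^2$ with $\lambda=\delta-\delta_*$ reproduces \eqref{singular Hopf curve}. Non-degeneracy and super-/sub-criticality are controlled by the sign of the first Lyapunov-type quantity $A$ in \eqref{criticality}: Krupa--Szmolyan show that $A$ is (up to a positive factor) the first Lyapunov coefficient of the Hopf bifurcation in \eqref{sf_standard normal form}, so $A<0$ yields a supercritical Hopf and $A>0$ a subcritical one.

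The main obstacle is not the bifurcation analysis itself---that is black-boxed from \cite{krupa2001relaxation}---but rather the bookkeeping: verifying that the rescaling constants $a,b,c$ (which require $a_{01}b_{10}<0$, readily true since $a_{01}<0$ and $b_{10}>0$) produce exactly the normal form expected by Krupa--Szmolyan, and that the signs in the coefficients $a_{ij},b_{ij}$ propagate correctly into \eqref{coefficients}--\eqref{criticality}. I would double-check this by computing $h_1,\dots,h_6$ from \eqref{sf_normal form} with the chosen $(a,b,c)$ and matching the leading Taylor coefficients against the targets $h_1(0,0)=h_2(0,0)=h_4(0,0)=h_5(0,0,0)=1$ required by their framework; once this bookkeeping is confirmed, the theorem follows verbatim.
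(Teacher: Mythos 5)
Your proposal is correct and follows essentially the same route as the paper: the paper simply states this theorem as a direct consequence of Theorems 3.3 and 3.5 of Krupa--Szmolyan after the preparatory work of verifying the non-degenerate canard point conditions \eqref{canard condition 1}--\eqref{canard condition 2}, transforming to the canonical form \eqref{sf_standard normal form}, and computing the coefficients \eqref{coefficients} and \eqref{criticality}. Your additional implicit-function-theorem argument for the persistence of $Q_2$ (using $\partial_y f\neq 0$ and $\partial_x g\neq 0$ to offset $\partial_x f=0$ at the fold) is a harmless elaboration of a point the paper leaves to the cited theorem.
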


\begin{theorem}\label{canard explosion}
Assume $0<\epsilon\ll 1$, $\mu_*\in R_1\cup R_3$,  $K>0$ a constant and the condition \eqref{canard condition} hold. Then for every $s\in (0, s_*) (s\in \left(\frac{\beta}{\alpha-1}, \frac{2\beta}{\alpha-1}\right))$, the system \eqref{sf_model2} has a smooth family of canard cycles $s\to (\delta(s,\sqrt{\epsilon}), \Gamma(s,\sqrt{\epsilon})(\bar{\Gamma}(s,\sqrt{\epsilon}))$ bifurcating from the singular canard cycle $\Gamma(s)(\bar{\Gamma}(s))$ where $\delta(s, \sqrt{\epsilon})$ satisfies \begin{align*}
|\delta(s, \sqrt{\epsilon})-\delta_c(\sqrt{\epsilon})|\leq e^{-\frac{1}{\epsilon^K}},
\end{align*}
and $\delta_c(\sqrt{\epsilon})$ is given by \eqref{canard curve}. Moreover, $\Gamma(s,\sqrt{\epsilon})(\bar{\Gamma}(s,\sqrt{\epsilon}))$ approaches to $\Gamma(s)(\bar{\Gamma}(s))$ in the Hausdorff distance as $\epsilon\to 0$.
\end{theorem}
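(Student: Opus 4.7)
My plan is to verify that the hypotheses of Theorems 3.3 and 3.5 of \cite{krupa2001relaxation} are satisfied for the system near the canard point $Q(x_m,y_m)$, and then to translate the conclusions of those theorems back to the original coordinates, extending the argument to the head-case $\bar{\Gamma}(s)$ by combining with the entry-exit analysis near the transcritical point $P$.

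The first block of work is to put the system into the canonical form required by \cite{krupa2001relaxation}. All the preparation has in effect been carried out in equations \eqref{sf_normal form}--\eqref{coefficients}: the translation $(X,Y,\lambda)=(x-x_m,y-y_m,\delta-\delta_*)$ centers the canard point, and the scalings $X=aX'$, $Y=bY'$, $t=ct'$ with the specified $a,b,c$ produce the system \eqref{sf_standard normal form}, which is exactly the canonical form of \cite{krupa2001relaxation}. The non-degeneracy conditions \eqref{canard condition 1}--\eqref{canard condition 2} are precisely those (H1)--(H4) needed there: in particular $a_{01}b_{10}<0$ ensures that the square root defining $c$ is real, and \eqref{canard condition} gives $a_{20}\neq 0$, which is the genericity of the fold. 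Once this identification is made, Theorem 3.3 of \cite{krupa2001relaxation} applies verbatim and yields Theorem \ref{sf_singular_Hopf}; Theorem 3.5 of \cite{krupa2001relaxation} will be the tool used for the present statement.

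For the head-less family $\Gamma(s)$ with $s\in(0,s_*)$, I would then simply invoke Theorem 3.5 of \cite{krupa2001relaxation}: for each such $s$ the attracting and repelling Fenichel manifolds $S_\epsilon^a$ and $S_\epsilon^r$ admit smooth extensions through the blow-up of $Q$, and there is a unique value $\delta(s,\sqrt{\epsilon})$ for which their extensions match up along the fast fiber $y=y_m-s$. The matching equation, developed in the blow-up charts of \cite{krupa2001extending,krupa2001relaxation}, gives the exponential estimate $|\delta(s,\sqrt{\epsilon})-\delta_c(\sqrt{\epsilon})|\le e^{-1/\epsilon^K}$, where $\delta_c(\sqrt{\epsilon})$ is the maximal canard curve computed from the coefficients \eqref{coefficients} via the formula reproduced in \eqref{canard curve}. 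Hausdorff convergence of $\Gamma(s,\sqrt{\epsilon})$ to $\Gamma(s)$ follows from Fenichel's theorem on the branches $S_0^a,S_0^r$ (each perturbed at $\mathcal{O}(\epsilon)$ distance) together with the fact that the fast orbit through the exit point is $\mathcal{O}(\epsilon)$-close to the horizontal segment $y=y_m-s$ for $0<\epsilon\ll 1$.

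The bulk of the novel work, and where I expect the main obstacle, is the head case $\bar{\Gamma}(s)$ with $s\in\left(\tfrac{\beta}{\alpha-1},\tfrac{2\beta}{\alpha-1}\right)$. Here the candidate cycle contains a segment along $M_{10}=\{x=0\}$ crossing the non-hyperbolic transcritical point $P$, so Theorem 3.5 of \cite{krupa2001relaxation} by itself does not close the loop. My plan is: track the attracting manifold $S_\epsilon^a$ along the fast drop from $(x_l(s),y_m-s)$ to a small neighborhood of $M_{10}$, apply the entry-exit map of Lemma \ref{lemma_entry-exit} (whose exit ordinate $y'$ is fixed by \eqref{relaxation_oscillation} and therefore independent of the canard parameter $\delta$) to follow the orbit along $S_0^{a+}$ and jump off at height $y'$, then let the horizontal fast fiber $y=y'$ carry the orbit onto $S_\epsilon^a$; from the junction point the orbit continues along the upper branch of $S_\epsilon^a$ toward $Q$. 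One then performs, at $Q$, precisely the canard matching of Theorem 3.5 of \cite{krupa2001relaxation}. Because the entry-exit map is smooth in $\delta$ on the compact $s$-interval considered and because the horizontal connection depends on $\delta$ only through exponentially small corrections inherited from Fenichel, the canard matching condition retains the same leading-order form as for the headless family. This yields $\delta(s,\sqrt{\epsilon})=\delta_c(\sqrt{\epsilon})+\mathcal{O}(e^{-1/\epsilon^K})$ for any $K>0$ after shrinking $\epsilon_0$, and the Hausdorff convergence of $\bar{\Gamma}(s,\sqrt{\epsilon})$ follows branch by branch, using Fenichel on $S_0^a,S_0^r,S_0^{a+}$ and the entry-exit estimate near $P$. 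The technical obstacle is precisely verifying that the contribution of the $P$-passage to the canard matching equation is uniformly exponentially small in $s$ on the stated interval, which requires the smoothness and monotonicity properties of the entry-exit function in Lemma \ref{lemma_entry-exit} together with the usual $\mathcal{O}(\epsilon\log\epsilon)$ contraction estimates near a transcritical singularity.
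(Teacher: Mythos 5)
Your proposal follows essentially the same route as the paper: the paper offers no written proof of this theorem beyond the remark that it is ``based on theorems (3.3) and (3.5) of \cite{krupa2001relaxation}'', the normal-form preparation \eqref{sf_normal form}--\eqref{coefficients} being exactly the verification of hypotheses you describe, so your invocation of Theorem 3.5 there is precisely what the authors intend. Your additional treatment of the head case $\bar{\Gamma}(s)$ --- splicing the canard matching at $Q$ with the entry-exit passage along $M_{10}$ through the transcritical point $P$ --- supplies detail the paper omits entirely; the only imprecision (inherited from the paper's own definition of $\bar{\Gamma}(s)$) is that the exit ordinate should be $p_0(y_m-s)$, depending on $s$ and weakly on $\delta$, rather than the fixed $y'$ of Lemma \ref{lemma_entry-exit}.
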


The phenomenon is manifested as follows: for $\mu_*\in R_1\cup R_3$ a supercritical singular Hopf bifurcation produces a small limit cycle, which quickly expands for the increase of the value of $\delta$. The shape of the cycle distorted during this expansion, and finally  a large amplitude stable oscillation is formed. This finding indicates that the instantaneous change from small to big cycles happen over an exponentially small parameter interval of $\delta$. This event is called a "canard explosion" (see Fig. \ref{fig:canard}b).

\begin{figure}[H]
\setlength{\belowcaptionskip}{-10pt}
\centering
	\subfloat[]{\includegraphics[width=8cm,height=7cm]{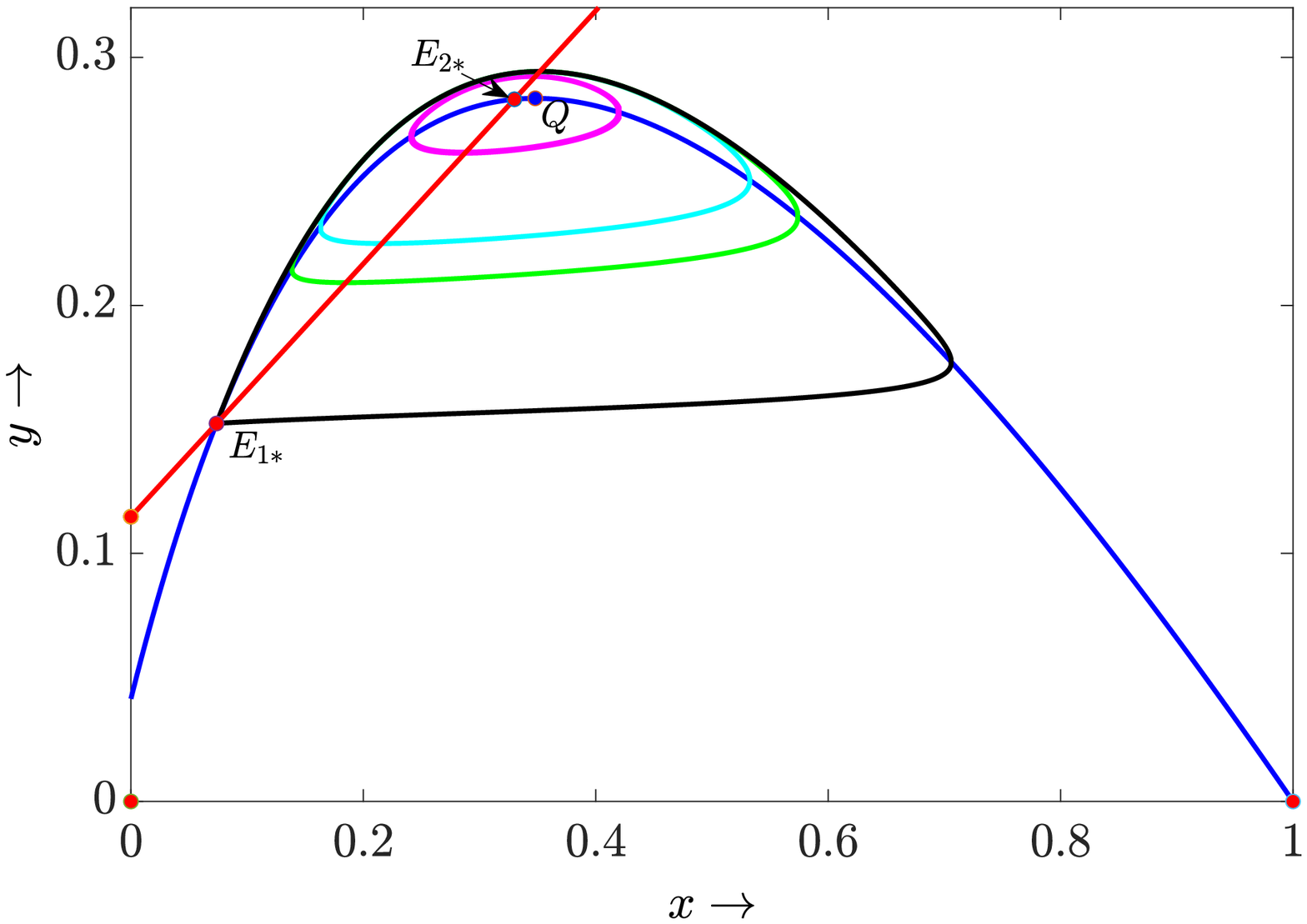}}
		\subfloat[]{\includegraphics[width=8cm,height=7cm]{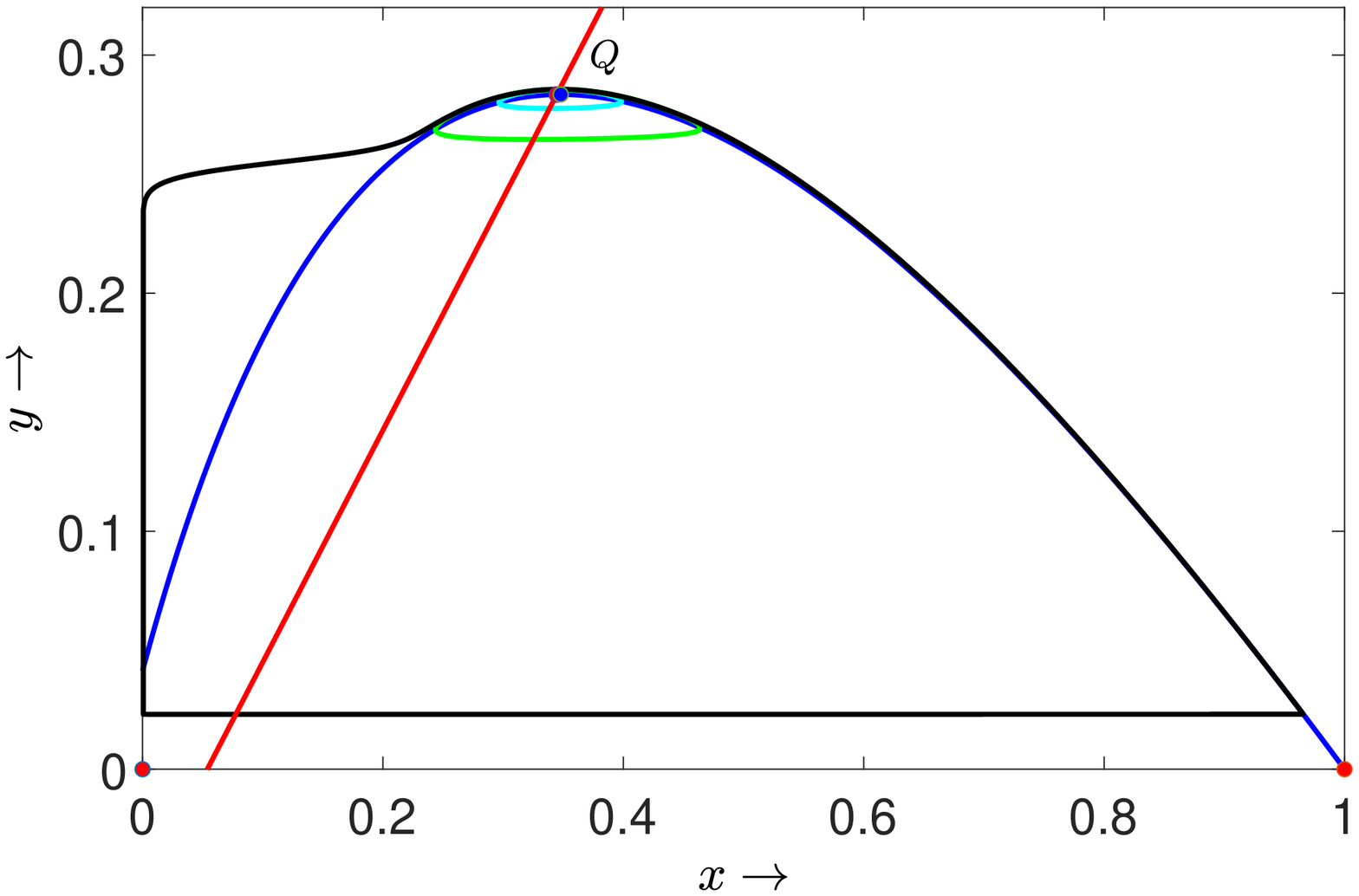}}
\caption{(a) For the case of $\mu\in R_1$, canard cycles and the birth of a homoclinic orbit at the canard point are illustrated. 
For four distinct values of $\delta$, namely $\delta=0.416$ (magenta), $0.0.4149$ (cyan), $0.41483$ (green), and $0.41481573598$ (black), four unstable periodic orbits are shown, with the amplitude of the orbits increasing with decreasing $\delta$. The figure evidently depicts the formation of a homoclinic orbit (black periodic orbit) via canard point for $\delta=0.41481573598$, whereby the homoclinic orbit connects the saddle equilibrium point $E_{1*}$. The other parameter values are $\alpha=1.5, \beta=0.0207, \gamma=0.3, \theta=0.51, \epsilon=0.1$. 
(b) For $\mu\in R_3$, the existence of the stable canard cycles and canard explosion phenomenon are seen in the diagram with small $\epsilon>0$.  For various values of $\delta$, such as $0.247$ (cyan), $0.24746$ (green), $0.24747$ (black), the stable periodic orbits (canard cycles) that arose through a supercritical singular Hopf bifurcation are displayed. When the bifurcation parameter $\delta$ is raised exponentially very small parameter interval from $0.24746$ (the green small amplitude cycle) to $0.24747$ (the black large amplitude canard cycle with a head), the figure shows that the amplitude of the orbit dramatically increases (canard explosion). The other parameter values are $\alpha=1.5, \beta=0.0207, \gamma=0.3, \theta=0.975,$ and  $\epsilon=0.01$.  (For interpretation of the references to colour in this figure caption, the reader is referred to the web version of this chapter.)} 
\label{fig:canard}			
\end{figure}

For $0<\epsilon\ll 1$, and $\mu\in R_3$, Fig. \ref{fig:bif_amplitude_max_canard}a depicts a simplified representation of the $\delta-\epsilon$ parametric plane that separates it into five distinct regions based on the locations of the threshold curves $\delta=\delta_H(\sqrt{\epsilon})$ (blue line), $\delta=\delta(s,\sqrt{\epsilon})$ (dashed black line),
$\delta=\delta_c(\sqrt{\epsilon})$ (red line), and $\delta=\delta_r(\sqrt{\epsilon})$ (dashed black line).
In domain $\raisebox{.5pt}{\textcircled{\raisebox{-.9pt} {1}}}$, when $\delta <\delta_H$, there exists no periodic orbit. After crossing the singular Hopf bifurcation threshold $\delta=\delta_H$ with the increase of $\delta$ and $0<\epsilon\ll 1$ fixed, as one moves from domain $\raisebox{.5pt}{\textcircled{\raisebox{-.9pt} {1}}}$ into domain $\raisebox{.5pt}{\textcircled{\raisebox{-.9pt} {2}}}$, the small-amplitude ($\mathcal{O}(\epsilon)$) stable periodic orbit develop for $\delta_H\sqrt{\epsilon}<\delta(s,\sqrt{\epsilon})$. The periodic orbit transforms into a canard cycle with or without a head when $\delta$ approaches the dashed line $\delta=\delta(s,\sqrt{\epsilon})$.
The size of the canard cycle increases on increasing $\delta$. Along the curve $\delta=\delta_r(\sqrt{\epsilon})$, the family of canard cycles ends at a stable relaxation oscillation (existence of relaxation oscillation has been shown in Sect. \ref{Sec:relaxation}) surrounding an unstable interior equilibrium point. The beginning and the end of the canard explosion are shown by the two black dashed lines $\delta=\delta(s,\sqrt{\epsilon})$ and $\delta=\delta_r(\sqrt{\epsilon})$ for sufficiently small $\epsilon$. Canard explosion describes the sudden change from a small canard cycle to a bigger relaxation oscillation within a limited range of the parameter $\delta$. In Fig. \ref{fig:bif_amplitude_max_canard}b, a numerical example is provided to better demonstrate this approach. It is clear from this numerical illustration that the amplitude of the periodic solution (the vertical axis) evolves more rapidly from small-amplitude canard cycles to large-amplitude relaxation oscillations when the governing parameter $\delta$ increases within an exponentially small range.

\begin{figure}[H]
\setlength{\belowcaptionskip}{-10pt}
\centering
	\subfloat[]{\includegraphics[width=8cm,height=7cm]{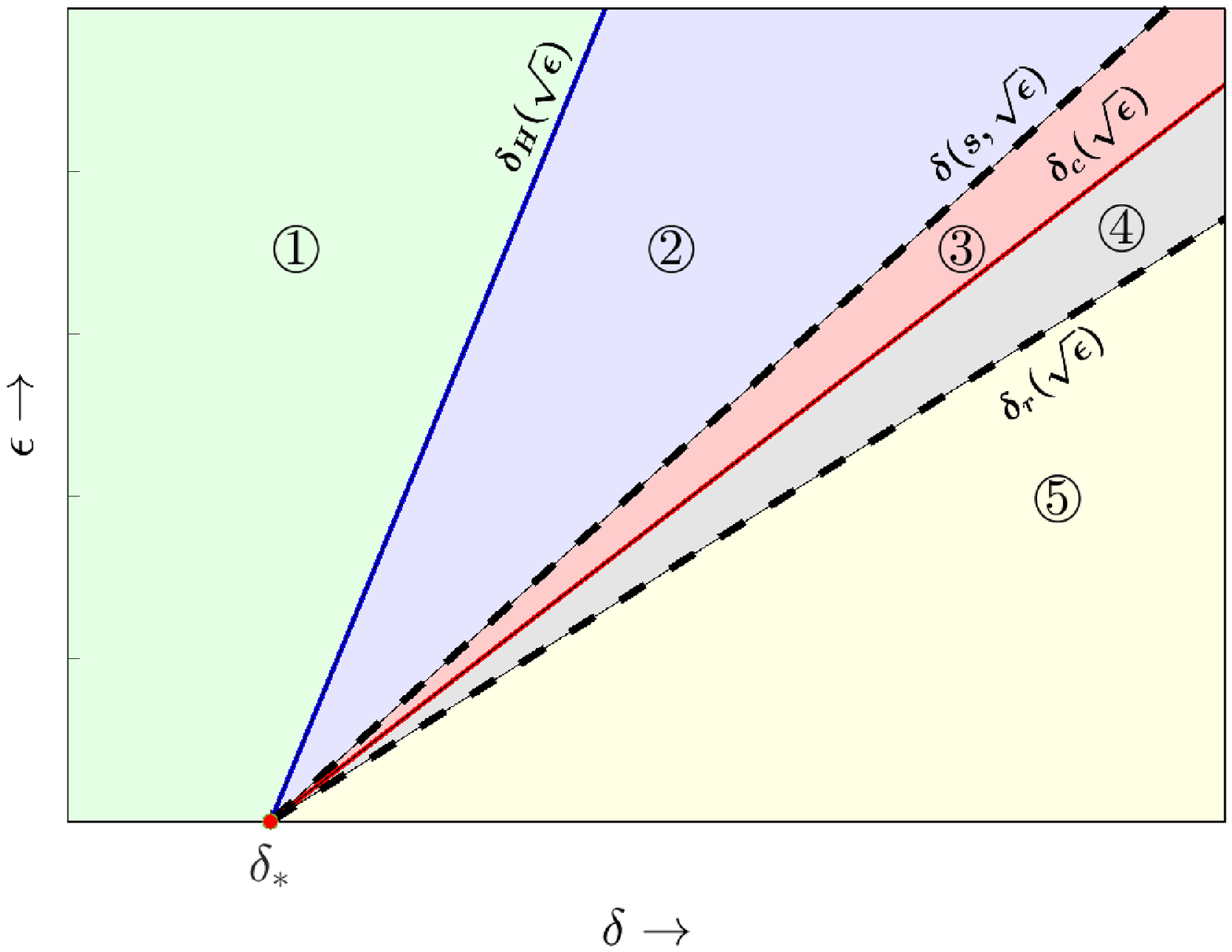}}\subfloat[]{\includegraphics[width=8cm,height=7cm]{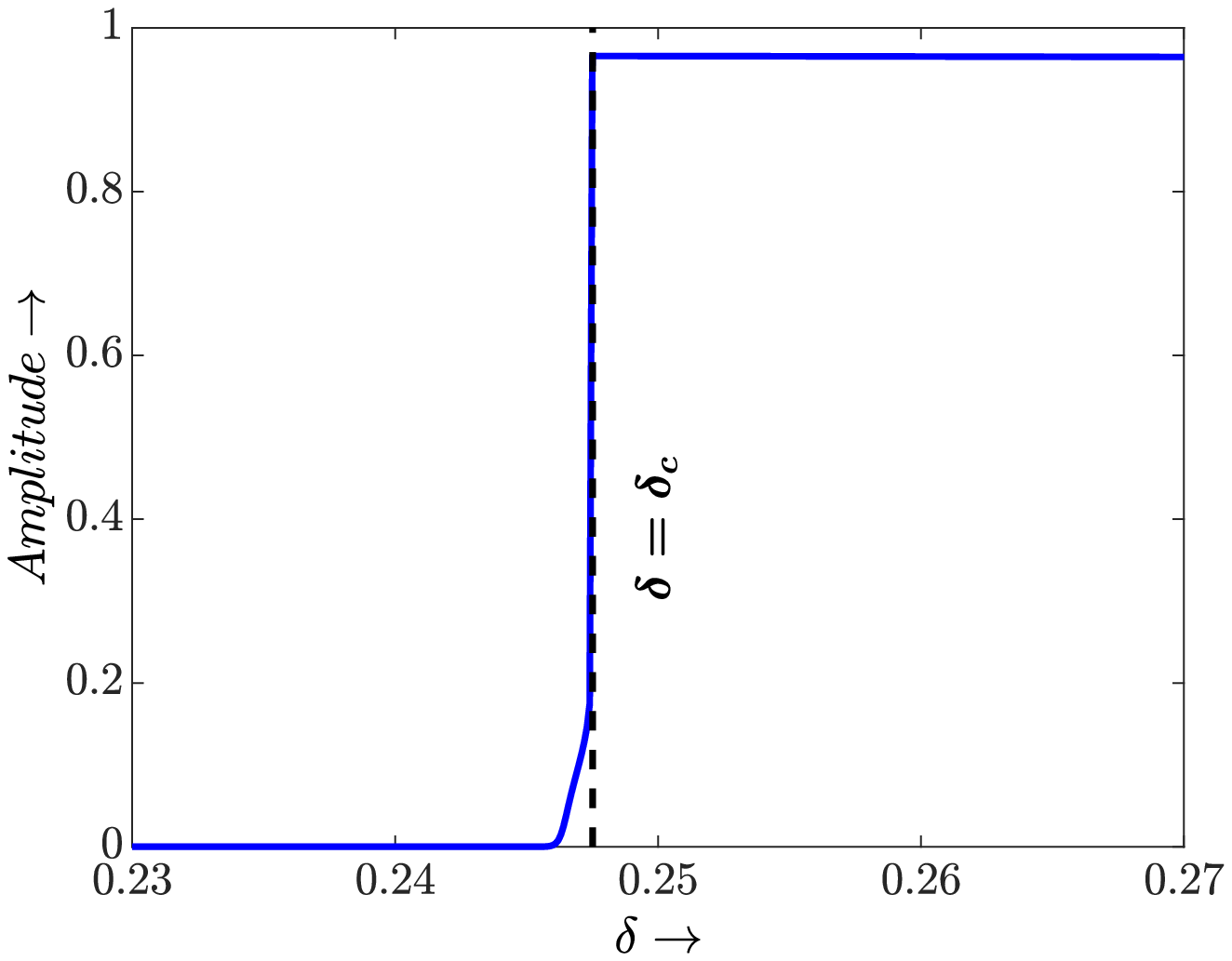}}
	\caption{(a) Schematic diagram depicting the singular Hopf bifurcation curve (blue), the maximum Canard curve (red), and the relaxation oscillation cycle (dashed black curve). (b)  A bifurcation diagram corresponding to the supercritical singular Hopf bifurcation for system \eqref{sf_model2} depicting the change in the amplitude of the canard cycles with respect to the variation of $\delta$ for fixed parameter values $\alpha=1.5$, $\beta=0.0207$, $\gamma=0.3$, $\epsilon=0.01$. Here $\delta_c=0.2475079340$. (For interpretation of the references to colour in this figure caption, the reader is referred to the web version of this chapter.)} 
\label{fig:bif_amplitude_max_canard}		
\end{figure}

Following the singular Hopf bifurcation, it follows that the Hopf-bifurcation threshold given by \eqref{singular Hopf curve} depends on $\epsilon$ and by computing the asymptotic expansion of the first Lyapunov coefficient in the blow-up coordinates \cite{kuehn2010from}, one can observe that the leading order term i.e., $A$ in the expansion of the first Lyapunov coefficient determines the criticality of the singular Hopf bifurcation with respect to $\epsilon\to 0$. Thus, the criticality of the singular Hopf bifurcation will be changed if $A$ changes its sign, i.e., the singular Hopf bifurcation will be degenerate if $A=0$. Using MATCONT, we uncover the occurrence of the codimension two generalized Hopf (GH) bifurcation. For the parameter values $\alpha=1.5$, $\beta=0.0207$ $\gamma=0.3$ and $\epsilon=0.01$, the generalized Hopf bifurcation (GH) point located at $(\delta_{GH}, \theta_{GH})=(0.361212, 0.638870)$ in the region $R_1$ (see Fig. \ref{fig:regions_bifurcation}), at which the critical point $(0.347909, 0.283481)$ has a pair of purely imaginary eigenvalues, $A=-2.2\times 10^{-8}$ i.e., $A$ is very near to zero and the leading order term of the second Lyapunov coefficient is positive. Consequently, the system undergoes a generalized Hopf bifurcation and the generalized Hopf bifurcation threshold is given by $(\delta_{GH}, \theta_{GH})=(0.361212 0.638870)$. Referring to the Fig. \ref{fig:regions_bifurcation}, the GH bifurcation occurs at the transition between supercritical ($H^-$) and sub-critical ($H^+$) Hopf bifurcations, the Hopf curve $H^+$ below the GH point is sub-critical whereas the Hopf curve above the GH is supercritical. The existence of a Limit Point of Cycles (LPC) curve has been detected emanating from the GH point propagates outward from the GH point towards the $H^-$ curve with a very close in distance with $H^-$ (see Fig. \ref{fig:regions_bifurcation}). It is to mention that in the region $R_1$ bounded by the LPC curve and the Hopf curve $H^{-}$, there exist two canard cycles (unstable and stable canard cycles). The unstable and the stable canard cycles collide and disappear via a saddle-node bifurcation of limit cycles on the LPC curve. It has been shown in Lemma \ref{lemma_5} that for the system parameters belonging to the region $R_1$ the prey-free equilibrium $E_{2b}$ is a hyperbolic stable node and correspondingly, when we have the existence of two canard cycles (stable and unstable) emerging due to the generalized Hopf bifurcation, the system exhibits a bi-stability phenomenon i.e., the system can either approach to ``prey extinction",  or ``oscillating coexistence" depending on the initial population size.

\section{Heteroclinic and Homoclinic Orbits}\label{Sec:hetro_homo_orbits}

In this section, we aim to show the existence of heteroclinic and homoclinic orbits for the system \eqref{sf_model2} under various parametric conditions.

\begin{proposition}\label{heteroclinic_1}
Assume $0<\epsilon\ll 1$, $\mu\in R_1$ and $x_{2*}>x_m$. Then we have the following results.
There exists one heteroclinic orbit connecting each pair of the equilibria $(E_0, E_{1b})$, $(E_0, E_{1*})$, $(E_{1b}, E_{2*})$, $(E_{1*}, E_{2b})$, $(E_{1*}, E_{2*})$ and infinitely many heteroclinic orbits connecting the pair of equilibria $(E_0, E_{2b})$, $(E_0, E_{2*})$. Moreover, the system \eqref{sf_model2} has no canard cycle and relaxation oscillation.

 \end{proposition}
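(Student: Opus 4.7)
The plan is to decompose the phase portrait via (i) the invariance of the coordinate axes from Lemma \ref{lemma_1}, (ii) Fenichel's theorem applied to the attracting/repelling branches $S_\epsilon^a,\,S_\epsilon^r$ of $M_{20}$ and $S_\epsilon^{a+},\,S_\epsilon^{r+}$ of $M_{10}$, and (iii) tracking the stable and unstable manifolds of the hyperbolic saddle $E_{1*}$ and the non-hyperbolic saddle-nodes $E_0,\,E_{1b}$ as $\epsilon\downarrow 0$. The two orbits on the axes come for free: on the $x$-axis $\dot x=x(1-x)$ has a unique trajectory running from $E_0$ to $E_{1b}$, while on the $y$-axis the restricted equation $\dot y=\epsilon y^2(\delta-\gamma-y)/[\delta(y+\gamma)]$ is positive on $(0,\delta-\gamma)$, giving one particular orbit from $E_0$ to $E_{2b}$.

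Next I would build the heteroclinics involving the saddle $E_{1*}$. Since $0<x_{1*}<x_m$, the saddle sits on $S_0^r$ and Fenichel's theorem gives that $W^s(E_{1*})$ is tangent to $S_\epsilon^r$ at $E_{1*}$ while $W^u(E_{1*})$ leaves horizontally along the fast direction. The right branch of $W^u(E_{1*})$ is captured by $S_\epsilon^a$, and since (by Lemma \ref{lemma_6}(i) with $x_{2*}>x_m$) the slow flow on $S_0^a$ is directed monotonically toward the stable equilibrium $E_{2*}$, it produces the unique orbit $(E_{1*},E_{2*})$. Because $y_{1*}=\phi(x_{1*})>\phi(0)=\beta/(\alpha-1)$, the left branch of $W^u(E_{1*})$ is captured by the attracting portion $S_\epsilon^{a+}$ of the $y$-axis and drifts up to $E_{2b}$, yielding $(E_{1*},E_{2b})$. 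The backward branch of $W^s(E_{1*})$ along $S_\epsilon^r$ runs into the transcritical point $P$; a blow-up of Krupa--Szmolyan type, adapted to the exchange-of-stability geometry at $P$, should extend $S_\epsilon^r$ through $P$ and match it with $S_\epsilon^{r+}$, along which the backward trajectory continues down to $E_0$, producing the unique $(E_0,E_{1*})$ connection. The orbit $(E_{1b},E_{2*})$ is obtained from the saddle-node $E_{1b}=(1,0)$: its Jacobian has eigenvalues $-1$ (horizontal, stable) and $0$ (vertical, center), and the reduced center dynamics $\dot y\approx\epsilon y^2(\delta+\theta-\gamma)/[\gamma(\delta+\theta)]$ is positive for small $y>0$ (using $\delta>\gamma+\beta/(\alpha-1)>\gamma$ on $R_1$); thus a unique center-unstable trajectory leaves $E_{1b}$ upward into the interior, is absorbed onto $S_\epsilon^a$ at its right endpoint $R=(1,0)$, and is carried by the slow flow to $E_{2*}$.

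For the ``infinitely many'' assertions and the absence of periodic orbits, note that at $E_0$ the Jacobian has eigenvalues $1$ and $0$ with eigenvectors along the axes, and the center-manifold reduction on the $y$-axis is $\dot y>0$ for small $y>0$; hence $E_0$ behaves as a degenerate source in $\IR^2_+$, and every orbit in a small neighbourhood of $E_0$ in the open first quadrant has $\alpha$-limit $\{E_0\}$. The one-dimensional curve $W^s(E_{1*})$ constructed above, together with the $x$-axis, splits this neighbourhood into two open sectors; trajectories in the sector adjacent to the $y$-axis are steered into the basin of $E_{2b}$, while trajectories in the complementary sector are carried, via attraction to $S_\epsilon^a$ and the slow flow thereon, to $E_{2*}$. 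Each sector contains a continuum of orbits with $\alpha$-limit $E_0$, giving infinitely many heteroclinics $(E_0,E_{2b})$ and $(E_0,E_{2*})$. The absence of canard cycles and relaxation oscillations follows because under $\mu\in R_1$ with $x_{2*}>x_m$ the equilibrium $E_{2*}$ lies strictly in the interior of $S_0^a$, bounded away from the fold $Q$, so it is a hyperbolic stable equilibrium on $S_\epsilon^a$ attracting all of $S_\epsilon^a\cap\IR^2_+$; in particular the singular Hopf threshold \eqref{singular Hopf curve} is not crossed, and there is no unstable interior equilibrium on the middle repelling branch that a relaxation cycle could enclose.

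The main obstacle is the $(E_0,E_{1*})$ connection, which forces one to extend $S_\epsilon^r$ through the non-hyperbolic transcritical point $P$ where $M_{10}$ and $M_{20}$ cross with an exchange of stability on $M_{10}$. This is not a generic fold, so the blow-up of \cite{krupa2001extending,krupa2001relaxation} has to be reworked for the transcritical canonical form before the matching $S_\epsilon^r\leftrightarrow S_\epsilon^{r+}$ and the resulting connection to $E_0$ can be justified rigorously. Once this step is in place, every remaining heteroclinic follows from Fenichel's theorem and standard hyperbolic invariant-manifold arguments, and the nonexistence of periodic orbits is automatic from the monotonic slow flow on $S_\epsilon^a$ toward $E_{2*}$.
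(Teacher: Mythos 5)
Your proposal is correct and follows essentially the same route as the paper: invariance of the coordinate axes, Fenichel tracking of the separatrices of $E_0$, $E_{1b}$ and $E_{1*}$ along the perturbed slow manifolds $S_\epsilon^a$, $S_\epsilon^{a+}$, and a sector argument near the degenerate source $E_0$ for the infinitely many connections, with the non-existence of canard cycles and relaxation oscillations following from the absence of a canard point and of any singular slow--fast cycle. The only substantive differences are that you make explicit the passage of $W^s(E_{1*})$ through the transcritical point $P$ via a Krupa--Szmolyan-type blow-up (the paper simply asserts that one stable separatrix has $\alpha$-limit $E_0$, invoking the transcritical extension result of \cite{krupa2001extending1} only later, in the relaxation-oscillation proof), and the minor slip that since $y_{1*}=\delta-\gamma+\theta x_{1*}>\delta-\gamma$ the left branch of $W^u(E_{1*})$ drifts \emph{down} along $S_\epsilon^{a+}$ to $E_{2b}$, not up, which does not affect the conclusion.
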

 \begin{proof}

 For $\mu\in R_1$, the equilibria $E_0, E_{1b}$ are repelling and attracting saddle nodes, $E_{1*}$ is a hyperbolic saddle, $E_{2b}$ is a stable node and $E_{2*}$ is a stable equilibrium point. $E_0$  and $E_{1b}$ being saddle-node equilibria, a neighbourhood of them consists of two hyperbolic and one parabolic sectors. Every trajectory starting in $\IR_+^2$ and in a neighbourhood of $E_0$ move away from $E_0$ whereas, for $E_{1b}$, two hyperbolic sectors are separated by the two stable separatrices and an unstable separatrix. It also follows by Fenichel's theorem that for $0<\epsilon\ll 1$, the normally hyperbolic manifolds $S_0^r, S_0^a$ perturb to $S_{\epsilon}^{r}$ and $S_{\epsilon}^a$  which are within $\mathcal{O}(\epsilon)$ distance from $S_0^{r}$ and $S_0^a$ and the same scenario for the normally hyperbolic manifolds $S_0^{r+}$ and $S_0^{a+}$. The $x$ and $y$ axes are invariant under the flow and accordingly, the unstable orbit for $E_0$ along the positive $x$-axis gets connected to $E_{1b}$ forming a heteroclinic connection joining $E_0$ and $E_{1b}$.

$E_{1*}$ being a saddle equilibrium, the $\alpha$-limit set of one of its stable separatrix will be $E_0$ and hence, it forms a heteroclinic connection joining $E_0$ and $E_{1*}$. For  $0<\epsilon\ll 1$, the unstable separatrix of $E_{1b}$ follows $S_{\epsilon}^a$ slowly and gets attracted to $E_{2*}$ forming a heteroclinic orbit joining $E_{1b}$ and $E_{2*}$. One unstable separatrix of $E_{1*}$ is first attracted to the slow manifold $S_{\epsilon}^a$ in a fast timescale and then follows it slowly and, finally, gets attracted to the stable equilibrium $E_{2*}$, forming a heteroclinic connection joining $E_{1*}$ and $E_{2*}$. Another unstable separatrix of $E_{1*}$ is first attracted to the slow manifold $S_\epsilon^{a+}$ in a fast speed and then follows it in slow speed and finally gets attracted to the stable node $E_{2b}$, forming a heteroclinic orbit joining $E_{1*}$ and $E_{2b}$.

For the trajectories starting in the region bounded by the heteroclinic orbits joining the pair of equilibria $(E_0, E_{1b})$, $(E_0, E_{1*})$ and 
$(E_{1*}, E_{2*})$ the $\alpha$-limit set is $E_0$ and $\omega$-limit set is $E_{2*}$ as because all such trajectories will first get attracted to $S_{\epsilon}^a$ following approximately layers of the fast subsystem \eqref{sf_fast subsystem} and then follow $S_\epsilon^a$ in slow time and finally, attracted to the stable equilibrium $E_{2*}$. Hence, we have infinitely many heteroclinic orbits joining $E_0$ and $E_{2*}$. Similarly, all the trajectories starting in the region bounded by the heteroclinic orbits joining the pair of equilibria $(E_0, E_{1*})$ and $(E_0, E_{2b})$ have $E_0$ as their $\alpha$-limit set and $E_{2b}$ as their $\omega$-limit set and consequently, there exist infinitely many heteroclinic orbits joining $E_0$ and $E_{2b}$.

Under the parametric conditions as mentioned, the system has no canard point and no slow-fast cycle. Consequently, the system has no canard cycle and relaxation oscillation.

\end{proof}
\begin{figure}[H]
\setlength{\belowcaptionskip}{-10pt}
\centering
	\subfloat[]{\includegraphics[width=8cm,height=7cm]{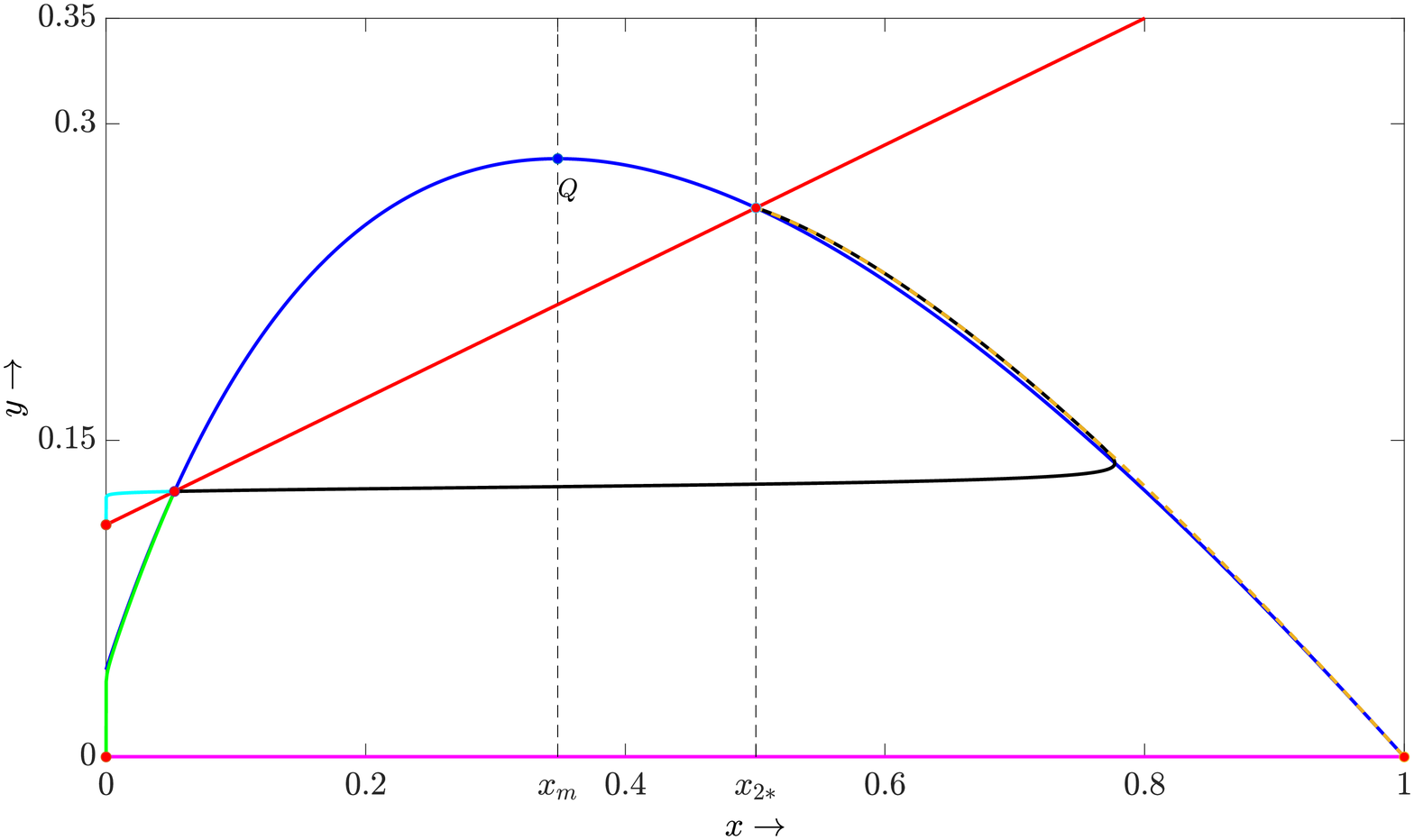}}
		\subfloat[]{\includegraphics[width=8cm,height=7cm]{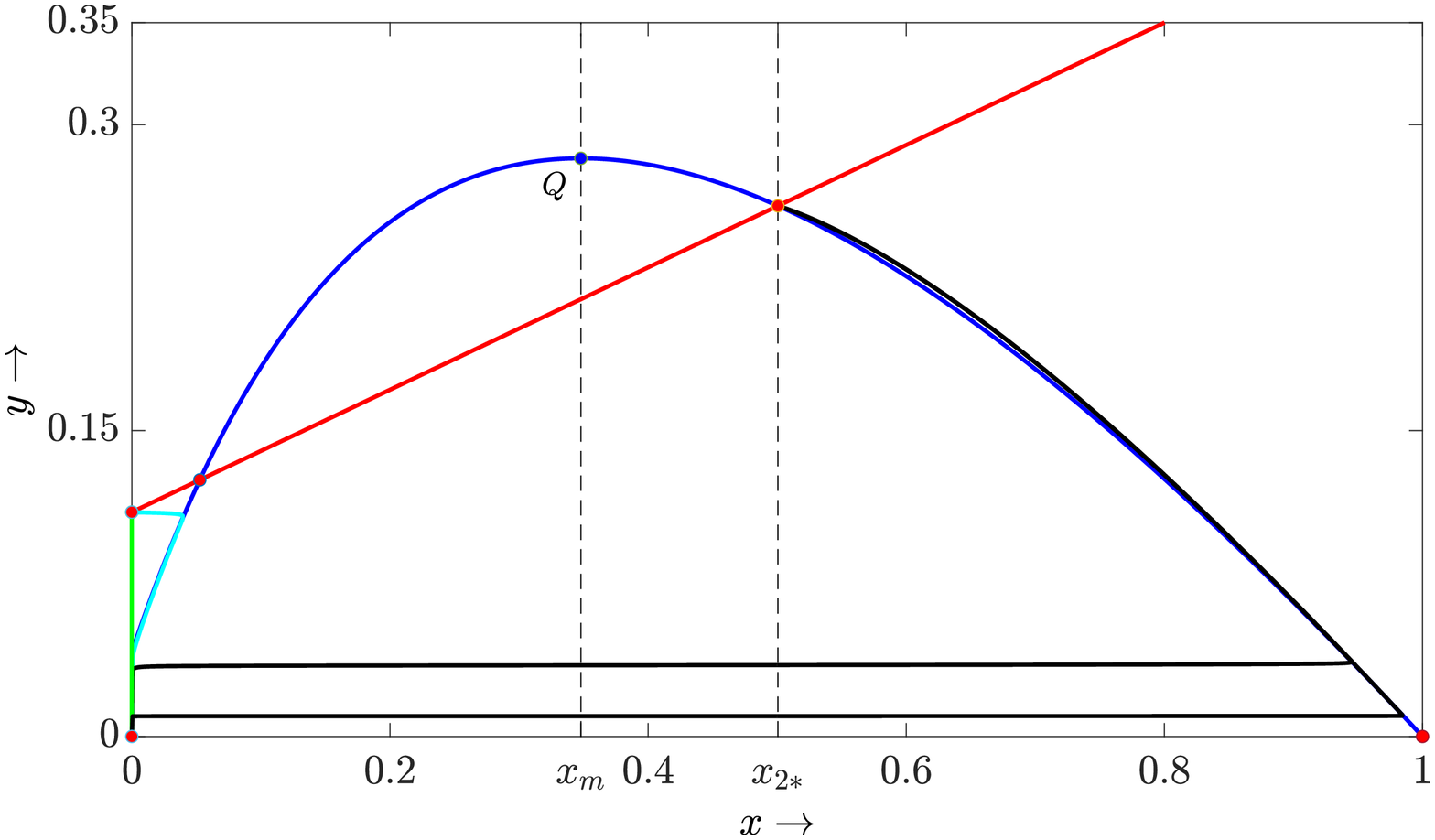}}
\caption{Numerical illustration for the existence of heteroclinic orbits for $\mu\in R_1$, $x_{2*}>x_m$, and $0<\epsilon\ll 1$. (a) Each set of equilibrium points $(E_0, E_{1b})$, $(E_0, E_{1*})$, $(E_{1b}, E_{2*})$, $(E_{1*}, E_{2b})$, and $(E_{1*}, E_{2*})$ is connected by a single heteroclinic orbit presented by solid magenta, solid green, dark yellow, solid cyan, and solid black curve respectively. (b) both green and cyan curves are the heteroclinic orbits connecting the equilibrium points $E_0$ and $E_{2b}$, and the two black curves are the 
 heteroclinic orbits connecting the equilibria $E_0$ and $E_{2*}$.  The parameter values are $\alpha=1.5, \beta=0.0207, \gamma=0.3,
 \delta=0.41, \theta=0.3, \epsilon=0.1$.  (For interpretation of the references to colour in this figure caption, the reader is referred to the web version of this chapter.)} 
\label{fig:heteroclinic_prop_1}			
\end{figure}

\begin{proposition}\label{homoclinic_1}
Assume $0<\epsilon\ll 1$, $\mu\in R_1$ and $x_{2*}=x_m$ and the condition \eqref{canard condition} hold. Then the system \eqref{sf_model2} has a unique homoclinic orbit connecting to the saddle equilibrium $E_{1*}$ if and only if $\delta=\delta_c(\sqrt{\epsilon})$, where $\delta_c(\sqrt{\epsilon})$ is given by \eqref{canard curve}. Furthermore, if this is the case then there exist infinitely many heteroclinic orbits connecting $E_0$ and $E_{2b}$, one heteroclinic orbit connecting each pair of the equilibria $(E_{1b}, E_{2b}), (E_0, E_{1*}), (E_0, E_{1b})$ and $(E_{1*}, E_{2b})$.
\end{proposition}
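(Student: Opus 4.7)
The plan is to realise the homoclinic orbit as the limiting member of the canard family produced by Theorem \ref{canard explosion}, and then to adapt the heteroclinic analysis of Proposition \ref{heteroclinic_1} to the fact that the stable node $E_{2*}$ has merged with the fold point. The hypothesis $x_{2*}=x_m$ forces $E_{2*}$ to coincide with $Q(x_m,y_m)$, which together with \eqref{canard condition} makes $Q$ a non-degenerate canard point at $\delta=\delta_*$, so the canard-explosion machinery applies in a neighbourhood of $\delta_c(\sqrt{\epsilon})$.

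I would first locate $E_{1*}$ in the singular geometry: since $x_{1*}<x_{2*}=x_m$, the saddle $E_{1*}(x_{1*},y_{1*})$ lies on the repelling branch $S_0^r$. For $0<\epsilon\ll 1$, by standard hyperbolic linearisation combined with Fenichel's theorem, the stable manifold $W^s(E_{1*})$ is $\mathcal{O}(\epsilon)$-close to $S_\epsilon^r$ locally near $E_{1*}$ (its slow component being tangent to $S_\epsilon^r$), while the unstable manifold $W^u(E_{1*})$ aligns with the fast direction and splits into a right branch $W_+^u$ into $\{x>x_{1*}\}$ and a left branch $W_-^u$ into $\{x<x_{1*}\}$. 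A short computation from the identity $g(x,\phi(x))=\phi(x)^2(\delta+\theta x-\phi(x)-\gamma)/((\phi(x)+\gamma)(\delta+\theta x))$, together with the saddle hypothesis on $E_{1*}$, shows that the slow flow on $S_0^r$ transports trajectories from a neighbourhood of $Q$ leftward into $E_{1*}$, so the branch of $S_\epsilon^r$ descending from $Q$ is precisely the branch of $W^s(E_{1*})$ that will serve as the return portion of the homoclinic.

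The central step is to track $W_+^u$ globally at $\delta=\delta_c(\sqrt{\epsilon})$. By Fenichel, $W_+^u$ is pulled in fast time onto $S_\epsilon^a$, and the slow flow on $S_\epsilon^a$ carries it upward toward $Q$. Precisely at the maximal canard parameter, Theorem \ref{canard explosion} provides the connection of $S_\epsilon^a$ with $S_\epsilon^r$ through $Q$ (with error $e^{-1/\epsilon^K}$); following the continuation onto $S_\epsilon^r$, the trajectory descends to $E_{1*}$ along $W^s(E_{1*})$ and closes the loop. In the singular limit $\epsilon\to 0$, this homoclinic is the cycle $\Gamma(s_*)$ of \eqref{cycles} with $s_*=y_m-y_{1*}$, whose horizontal fast-return segment collapses onto the saddle. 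Uniqueness and the necessity of $\delta=\delta_c(\sqrt{\epsilon})$ then follow from the monotone parametrisation of the canard family by $s$: the transverse intersection of $W_+^u$ with the codimension-one $W^s(E_{1*})$ in $(x,y,\delta)$-space selects a single value of $\delta$, and by the exponential bound of Theorem \ref{canard explosion} this value agrees with $\delta_c(\sqrt{\epsilon})$ up to $e^{-1/\epsilon^K}$, i.e.\ up to the unavoidable ambiguity inherent in the asymptotic expansion \eqref{canard curve}.

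For the remaining heteroclinic connections the arguments parallel those of Proposition \ref{heteroclinic_1}: invariance of the $x$-axis yields the $(E_0,E_{1b})$ orbit; invariance of the $y$-axis together with the hyperbolic repelling structure of $S_0^{r+}$ between $E_0$ and $P$ yields the $(E_0,E_{1*})$ orbit along the branch of $W^s(E_{1*})$ whose $\alpha$-limit is $E_0$; the unstable separatrix of the saddle-node $E_{1b}$ follows $S_\epsilon^a$ to $Q$ and then, by the maximal canard, continues along $S_\epsilon^r$ until it jumps to $M_{10}$, where it is captured by $S_\epsilon^{a+}$ and limits on $E_{2b}$, giving the $(E_{1b},E_{2b})$ connection; the same mechanism applied to $W_-^u$ of $E_{1*}$ produces the $(E_{1*},E_{2b})$ connection. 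The infinitely many $(E_0,E_{2b})$ orbits arise from all trajectories in the open region bounded by $(E_0,E_{1*})$ and the canard homoclinic, each of which is pulled onto $S_\epsilon^a$ and then onto $S_\epsilon^{a+}$ and must therefore limit on $E_{2b}$ by absence of other attractors. The principal obstacle throughout is the rigorous control of the maximal canard: one must justify that $W_+^u$ tracks $S_\epsilon^a$ globally up to $Q$ and then continues on $S_\epsilon^r$ down to $E_{1*}$ without a spurious early exit, and that the corresponding transversality with $W^s(E_{1*})$ holds; both of these amount to the Melnikov-type computation underlying the derivation of \eqref{canard curve}.
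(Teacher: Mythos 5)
Your proposal follows essentially the same route as the paper's proof: identify one stable separatrix of $E_{1*}$ with the repelling slow manifold $S_\epsilon^r$ descending from $Q$, track the right unstable separatrix onto $S_\epsilon^a$ and up to the canard point, and invoke the maximal-canard connection of $S_\epsilon^a$ with $S_\epsilon^r$ at $\delta=\delta_c(\sqrt{\epsilon})$ (Theorem~3.2 of Krupa--Szmolyan) to close the homoclinic loop, with the heteroclinic connections obtained by the same separatrix-tracking as in Proposition~\ref{heteroclinic_1}. Your additional remarks -- identifying the singular limit as $\Gamma(s_*)$ with $s_*=y_m-y_{1*}$, and flagging that the connection value of $\delta$ agrees with the asymptotic series \eqref{canard curve} only up to an exponentially small error -- are refinements of, not departures from, the paper's argument, and if anything state the ``only if'' direction more carefully than the paper does.
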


\begin{proof}
Under the said conditions $E_{1*}$ is a saddle, $E_{2*}$ is a canard point and the system \eqref{sf_model2} undergoes a singular Hopf bifurcation for $\delta=\delta_H(\sqrt{\epsilon})$. As in the Theorem \ref{heteroclinic_1}, by the Fenichel's theorem $S_0^r$ and $S_0^a$ perturb to the slow manifolds $S_\epsilon^r$ and $S_\epsilon^a$ respectively. One of the stable separatrices of $E_{1*}$, say, $W^s$ exactly follows, $S_\epsilon^r$ and the other stable separatrix of $E_{1*}$ has $E_0$ as its $\alpha$-limit set. Consequently, we have a heteroclinic orbit connecting $E_0$ and $E_{1*}$. For $ 0<\epsilon\ll 1$, one of the unstable separatrices  of $E_{1*}$, say $W^u$  first follows a layer of the fast subsystem \eqref{sf_fast subsystem} and then gets attracted to the slow manifold $S_{\epsilon}^a$ and finally, it reaches near the canard point $Q$ and by Theorem 3.2 in \cite{krupa2001relaxation} the slow manifolds $S_\epsilon^r$ and $S_\epsilon^a$ get connected for  $\delta=\delta_c(\sqrt{\epsilon})$ given by \eqref{canard curve}. This shows that for $\delta= \delta_c(\sqrt{\epsilon})$ given by \eqref{canard curve}, $W^s$ and $W^u$ get connected and form a homoclinic orbit homoclinic to $E_{1*}$.

Now, one of the unstable separatrices of $E_{1b}$ follows exactly $S_\epsilon^a$ and reaches in a neighbourhood of $S_\epsilon^{a+}$ passing the canard point $Q$ and finally, gets attracted to the stable node $E_{2b}$. Hence, we have a heteroclinic connection joining $E_{1b}$ and $E_{2b}$. Similarly, the other unstable separatrix of $E_{1*}$ follows a layer of the fast subsystem  \eqref{sf_fast subsystem} until it reaches in a neighbourhood of $S_\epsilon^{a+}$ and finally, gets attracted to the stable node $E_{2b}$ forming a heteroclinic orbit joining $E_{1*}$ and $E_{2b}$. 

Finally, following the same reason as in Proposition \ref{heteroclinic_1}, all the trajectories staring in $\IR^2_+$ and in a neighbourhood of $E_0$ except the heteroclinic connections joining $E_0$ and $E_{1b}$ along the $x$-axis and $E_0$ and $E_{1*}$ along the stable manifold of $E_{1*}$ have $E_{2b}$ as their $\omega$-limit set and $E_0$ as $\alpha$-limit set. Hence, we have infinitely many heteroclinic orbits joining $E_0$ and $E_{2b}$.
\end{proof}
\begin{figure}[H]
\setlength{\belowcaptionskip}{-10pt}
\centering
	\subfloat[]{\includegraphics[width=8cm,height=7cm]{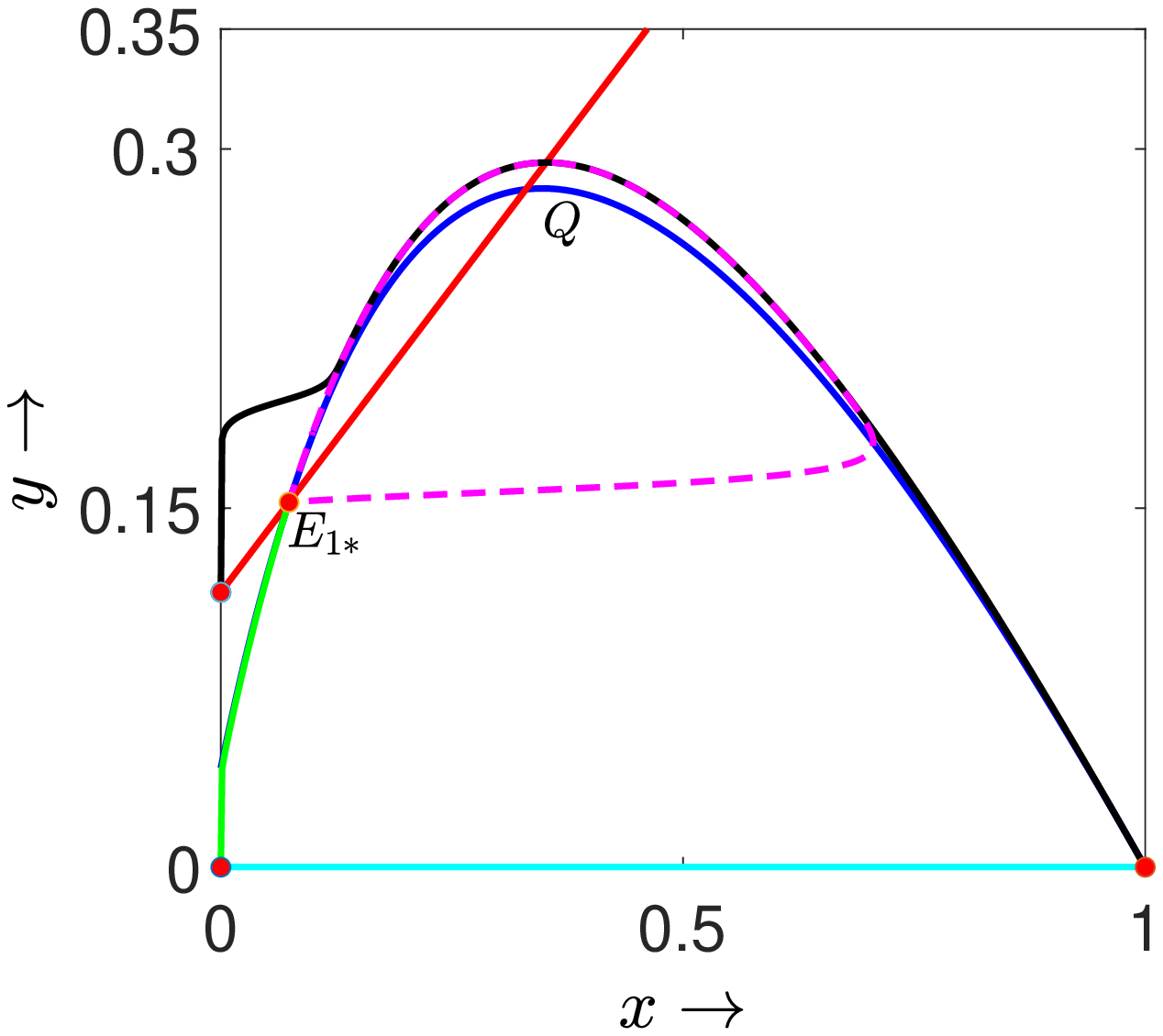}}
		\subfloat[]{\includegraphics[width=8cm,height=7cm]{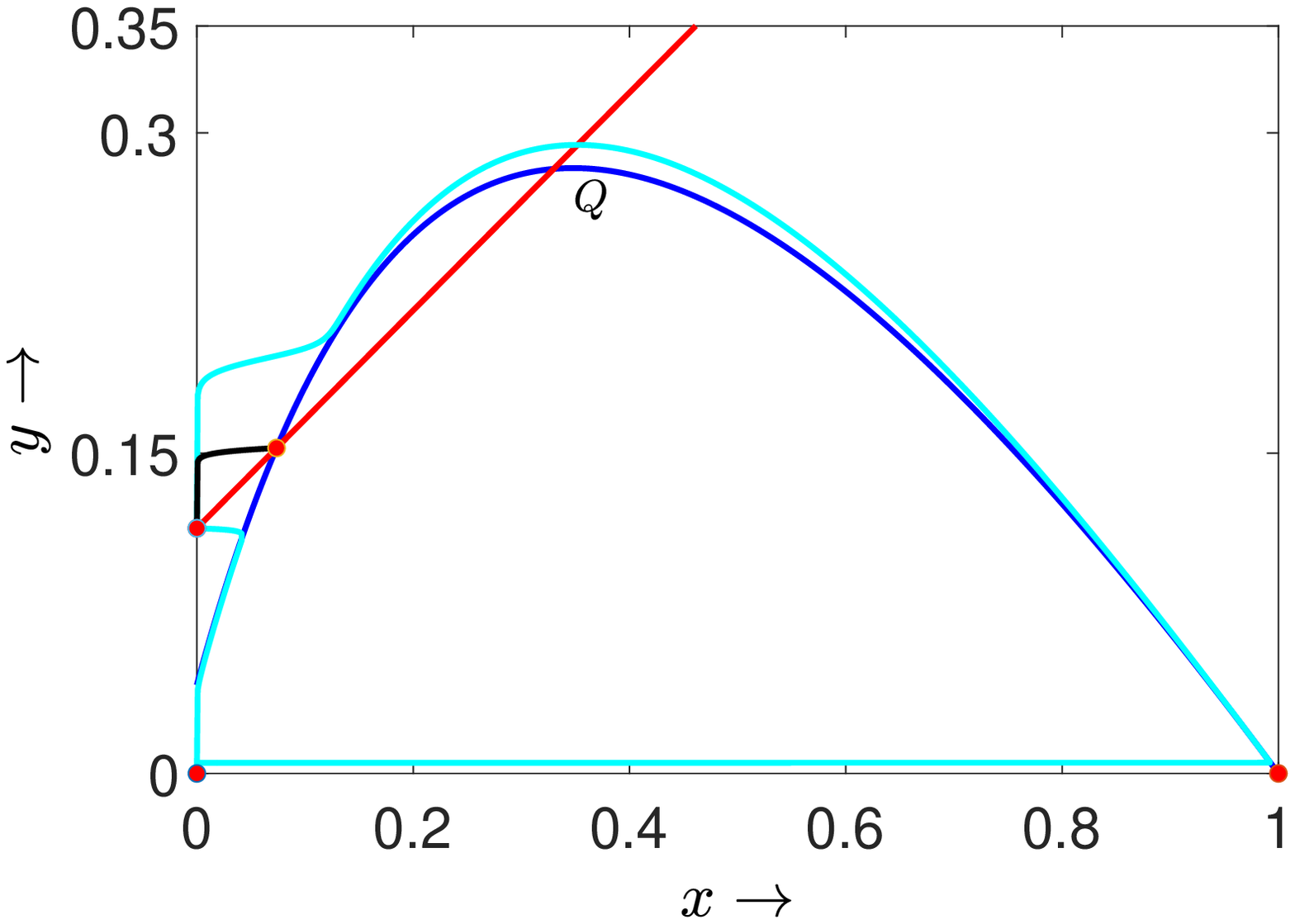}}
\caption{Numerical illustration for the existence of heteroclinic orbits for $\mu\in R_1$, $x_{2*}=x_m$, and $0<\epsilon\ll 1$. (a) Each set of equilibrium points $(E_{1b}, E_{2b})$, $(E_0, E_{1*})$, and $(E_0, E_{1b})$ is connected by a single heteroclinic orbit presented by solid black, solid green, and solid cyan curve respectively. The dashed magenta curve is the homoclinic orbit connecting the equilibrium point $E_{1*}$ to itself. (b) Both the cyan curves are the heteroclinic orbits connecting the equilibrium points $E_0$ and $E_{2b}$. The black curve is the unique heteroclinic orbit connecting the equilibria  $E_{1*}$ and $E_{2b}$.  The parameter values are $\alpha=1.5, \beta=0.0207, \gamma=0.3, \delta=0.41481573598, \theta=0.51, \epsilon=0.1$. (For interpretation of the references to colour in this figure caption, the reader is referred to the web version of this chapter.)} 
\label{fig:heteroclinic_prop_2}			
\end{figure}

\begin{proposition}\label{heteroclinic_2}
Assume $0<\epsilon\ll 1$, $\mu\in R_2$. Then the system \eqref{sf_model2} has one heteroclinic orbit connecting each pair of equilibria $(E_0, E_{1b}), (E_0, \bar{E}), (E_{1b}, E_{2b})$ and infinitely many heteroclinic orbits joining the pair of equilibria $(E_0, E_{2b})$ and $(\bar{E}, E_{2b})$. Furthermore, the system has neither canard cycle nor relaxation oscillation. 
\end{proposition}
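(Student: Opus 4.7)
The plan is to follow the template of Propositions \ref{heteroclinic_1} and \ref{homoclinic_1}. First I classify the equilibria for $\mu\in R_2$ using Lemmas \ref{lemma_5} and \ref{lemma_6}: $E_0$ and $E_{1b}$ are saddle-nodes on the axes, $E_{2b}$ is a hyperbolic stable node (the hypothesis $\delta>\gamma+\beta/(\alpha-1)$ being built into $R_2$), and the unique interior equilibrium $\bar{E}$ is a saddle-node arising from the tangential contact of the prey and predator nullclines, obtained as the coalesced limit of $E_{1*}$ and $E_{2*}$ as $D\to 0$. I will need the standard local picture of a planar saddle-node — two hyperbolic sectors separated by a hyperbolic stable (or unstable) separatrix and a parabolic sector along the center manifold — to identify the relevant local orbits near $\bar{E}$ and near $E_0$, $E_{1b}$.

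Next I invoke Fenichel's theorem for $0<\epsilon\ll 1$: the branches $S_0^{a}$ and $S_0^{r}$ of $M_{20}$ perturb to $S_\epsilon^{a}$ and $S_\epsilon^{r}$, and likewise $S_0^{a+}$, $S_0^{r+}$ on $M_{10}$ perturb to $S_\epsilon^{a+}$, $S_\epsilon^{r+}$. With $D=0$, the quadratic in the numerator of the slow equation \eqref{sf_slow_sub} factors as $(1+\theta)(x-\bar{x})^2$, so the reduced slow flow on $M_{20}$ has a semi-stable (double) zero at $\bar{E}$ and is sign-definite on each connected component of $(S_0^{a}\cup S_0^{r})\setminus\{\bar{E}\}$. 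This observation is the key structural fact that both drives the heteroclinic connections and forbids the existence of slow-fast cycles.

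I then construct the heteroclinic orbits. The pair $(E_0,E_{1b})$ is immediate from invariance of the positive $x$-axis. For $(E_0,\bar{E})$ I track the hyperbolic stable separatrix of $\bar{E}$: near $\bar{E}$ it lies $\mathcal{O}(\epsilon)$-close to $S_\epsilon^{r}$, and tracing it backward past $P$ along $S_\epsilon^{r}$ and then through the axis region via Fenichel for $M_{10}$, it enters the parabolic sector of $E_0$ and has $E_0$ as its $\alpha$-limit. For $(E_{1b},E_{2b})$ I follow the unstable separatrix of $E_{1b}$ into $\IR_+^2$; it is rapidly attracted to $S_\epsilon^{a}$, flows along it toward the fold $Q$ (not being absorbed by $\bar{E}$, which sits on $S_0^{r}$ in the generic sub-case and is bypassed by a fast layer otherwise), jumps horizontally near $Q$, is captured by $S_\epsilon^{a+}$, and converges to $E_{2b}$. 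The infinitely many orbits connecting $(E_0,E_{2b})$ and $(\bar{E},E_{2b})$ follow from a basin-of-attraction argument: trajectories emerging from the parabolic sectors of $E_0$ and of $\bar{E}$ are all eventually trapped by $S_\epsilon^{a+}$ and tend to $E_{2b}$ as $\omega$-limit, exactly as in Propositions \ref{heteroclinic_1} and \ref{homoclinic_1}.

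Finally, I rule out canard cycles and relaxation oscillations by noting that the constructions in Section \ref{Sec.hopf_canard} (singular canard cycles through $Q$) and the standard closure of relaxation cycles both require a non-degenerate canard/jump point together with a transverse slow flow, whereas at $\bar{E}$ the slow flow has a double zero and acts as a one-sided sink of slow time on $M_{20}$; consequently no candidate singular cycle can close. The main obstacle I expect is the case analysis around $\bar{E}$ — confirming that the unstable separatrix of $E_{1b}$ genuinely reaches the fold region and lands in the basin of $S_\epsilon^{a+}$ regardless of whether $\bar{x}<x_m$ or $\bar{x}>x_m$, which will likely require a local blow-up at $\bar{E}$ in the spirit of \cite{krupa2001extending} to desingularize the double zero of the slow flow and allow clean Fenichel-style orbit tracking through a neighbourhood of $\bar{E}$.
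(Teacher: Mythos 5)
Your strategy is the paper's: classify $E_0,E_{1b}$ as axial saddle--nodes, $E_{2b}$ as a stable node, $\bar E$ as an interior saddle--node with one parabolic and two hyperbolic sectors, perturb by Fenichel, and obtain the single connections by tracking separatrices and the infinitely many $(E_0,E_{2b})$, $(\bar E,E_{2b})$ connections from the parabolic sectors. Two places, however, need repair. The ``main obstacle'' you defer --- the case split $\bar x<x_m$ versus $\bar x>x_m$ and the proposed blow-up at $\bar E$ --- dissolves once you note that for $\mu\in R_2$ the predator nullcline $y=\theta x+\delta-\gamma$ is tangent to $y=\phi(x)$ at $\bar x$, so $\phi'(\bar x)=\theta>0$ and hence $\bar x<x_m$: the degenerate equilibrium always sits on the repelling branch $S_0^r$. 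The unstable separatrix of $E_{1b}$ therefore runs along $S_\epsilon^a$ to the jump point $Q$ with nothing in its way, and your fallback ``bypassed by a fast layer otherwise'' would in any case be wrong, since $\bar E$ is an equilibrium of the full system and an orbit on $S_\epsilon^a$ converging to it cannot jump past it. Relatedly, the orbit realizing $(E_0,\bar E)$ is the unique stable branch of the centre manifold of $\bar E$ lying along $S_\epsilon^r$ (the hyperbolic direction at $\bar E$ is the fast, repelling one), not a ``hyperbolic stable separatrix''; it is the same curve, but misnaming it obscures which sectors it bounds.

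The second issue is your justification of the final clause. The double zero of the slow flow at $\bar E$ is not what excludes canard cycles and relaxation oscillations, because the candidate singular cycles do not pass through $\bar E$. A canard cycle requires $Q$ to be a canard point, i.e. $g(Q)=0$; for $\mu\in R_2$ the interior equilibrium is at $\bar E\neq Q$, so $g(Q)\neq 0$ and $Q$ is a jump point, which already rules out the families $\Gamma(s)$, $\bar\Gamma(s)$. A singular relaxation cycle of the type in Section~\ref{Sec:relaxation} would require the slow flow on $S_0^{a+}$ to pass the transcritical point $P$ and exit below $y=\beta/(\alpha-1)$; this is blocked because $E_{2b}(0,\delta-\gamma)$ with $\delta-\gamma>\beta/(\alpha-1)$ (built into $R_2$) is a stable rest point of the slow flow on the $y$-axis, so no orbit on $S_0^{a+}$ reaches $P$ in forward time. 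These two observations are what the paper's terse ``no canard point and no slow-fast cycle'' stands for; your ``one-sided sink of slow time on $M_{20}$'' argument, as stated, would not by itself close off either possibility.
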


\begin{proof}
Under the said condition the interior equilibria $E_{1*}$ and $E_{2*}$ get merged to the single equilibrium $\bar{E}$ which is saddle node in nature and a neighbourhood of $\bar{E}$ consists of two hyperbolic and one parabolic sectors (infinitely many centre manifolds and one unstable manifold). For $0<\epsilon\ll 1$, any orbit in the parabolic sector follows a layer of the fast subsystem \eqref{sf_fast subsystem} until it arrives in a neighbourhood of $S_\epsilon^a$ or $S_\epsilon^{a+}$. Now, if the orbit reaches a neighbourhood of $S_\epsilon^{a+}$, it is then attracted to the stable node, $E_{2b}$ forming a heteroclinic orbit connecting $\bar{E}$ and $E_{2b}$. Else, the orbit reaches in the vicinity of $S_\epsilon^a$, and following it passes the fold point by Theorem (2.1) of \cite{krupa2001relaxation}. The orbit then arrives in a $\mathcal{O}(\epsilon)$ neighbourhood of $S_\epsilon^{a+}$ following a layer of the fast subsystem \eqref{sf_fast subsystem} and finally, gets attracted to the stable node, $E_{2b}$ forming a heteroclinic connection joining $\bar{E}$ and $E_{2b}$. This phenomenon is true for all the orbits emanating from the parabolic sector of the saddle node $\bar{E}$. Hence, we have infinitely many heteroclinic orbits joining $\bar{E}$ and $E_{2b}$.

For $0<\epsilon\ll 1$, the unique stable branch of the infinitely many centre manifolds of $\bar{E}$ exactly follows $S_\epsilon^r$ and has $E_0$ as its $\alpha$-limit set. Therefore, the system \eqref{sf_model2} has a heteroclinic orbit connecting $E_0$ and $\bar{E}$. Proceeding in the same way as in the Proposition \ref{heteroclinic_2}, we have the existence of one heteroclinic orbit joining the pair of the equilibria $(E_0, E_{1b})$ and $(E_{1b}, E_{2b})$.

Finally, all the trajectories staring in $\IR^2_+$ and in a neighbourhood of $E_0$ except the heteroclinic connections joining $E_0$ and $E_{1b}$ along the $x$-axis and $E_0$ and $\bar{E}$ along the unique stable branch of the centre manifolds of $\bar{E}$ have $E_{2b}$ as their $\omega$-limit set and $E_0$ as $\alpha$-limit set. Consequently, we have infinitely many heteroclinic orbits joining $E_0$ and $E_{2b}$.

Under the said condition, the system has no canard point and no slow-fast cycle. Consequently, the system has no canard cycle and relaxation oscillation.
\begin{figure}[H]
\setlength{\belowcaptionskip}{-10pt}
\centering
	\subfloat[]{\includegraphics[width=8cm,height=7cm]{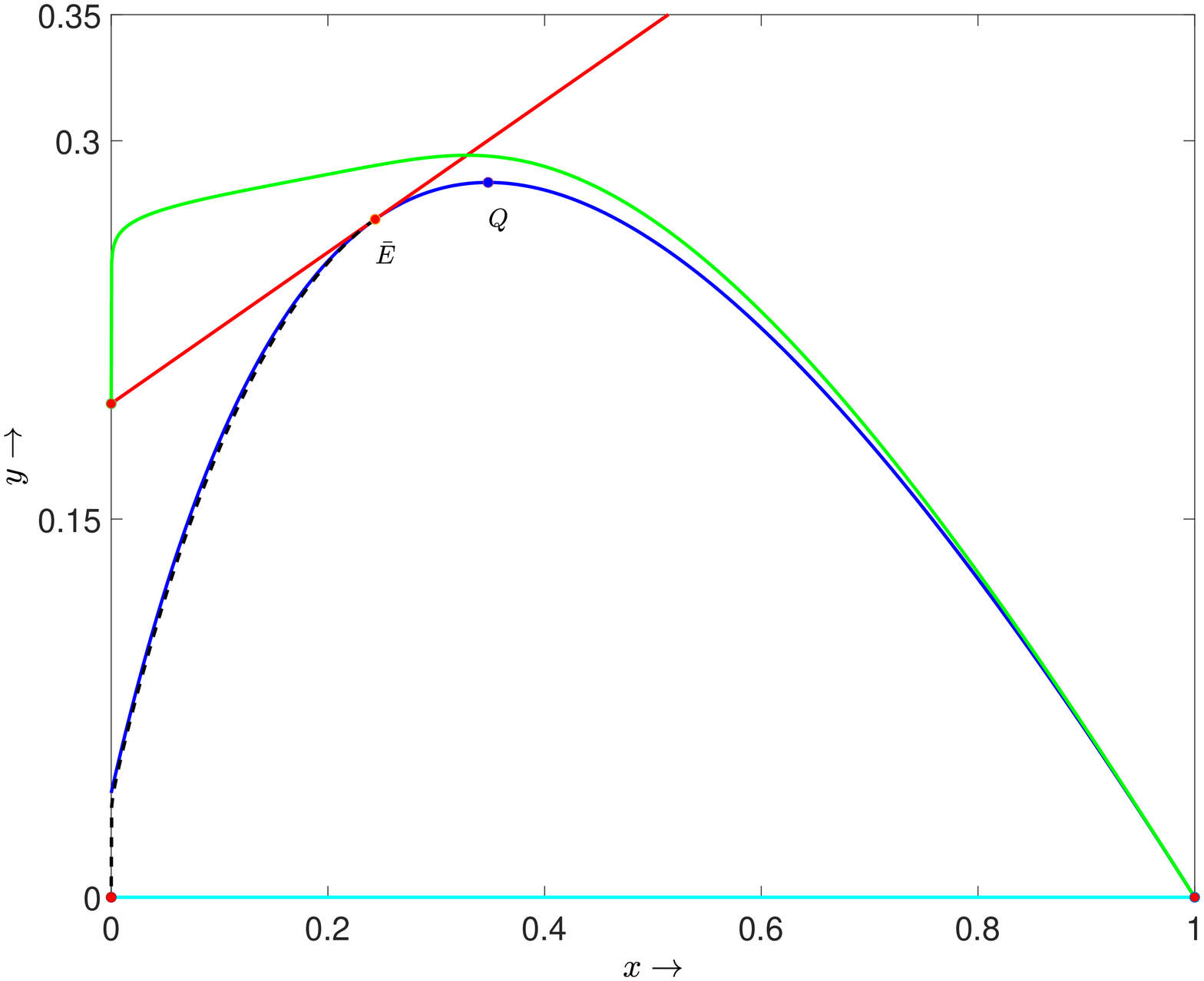}}
		\subfloat[]{\includegraphics[width=8cm,height=7cm]{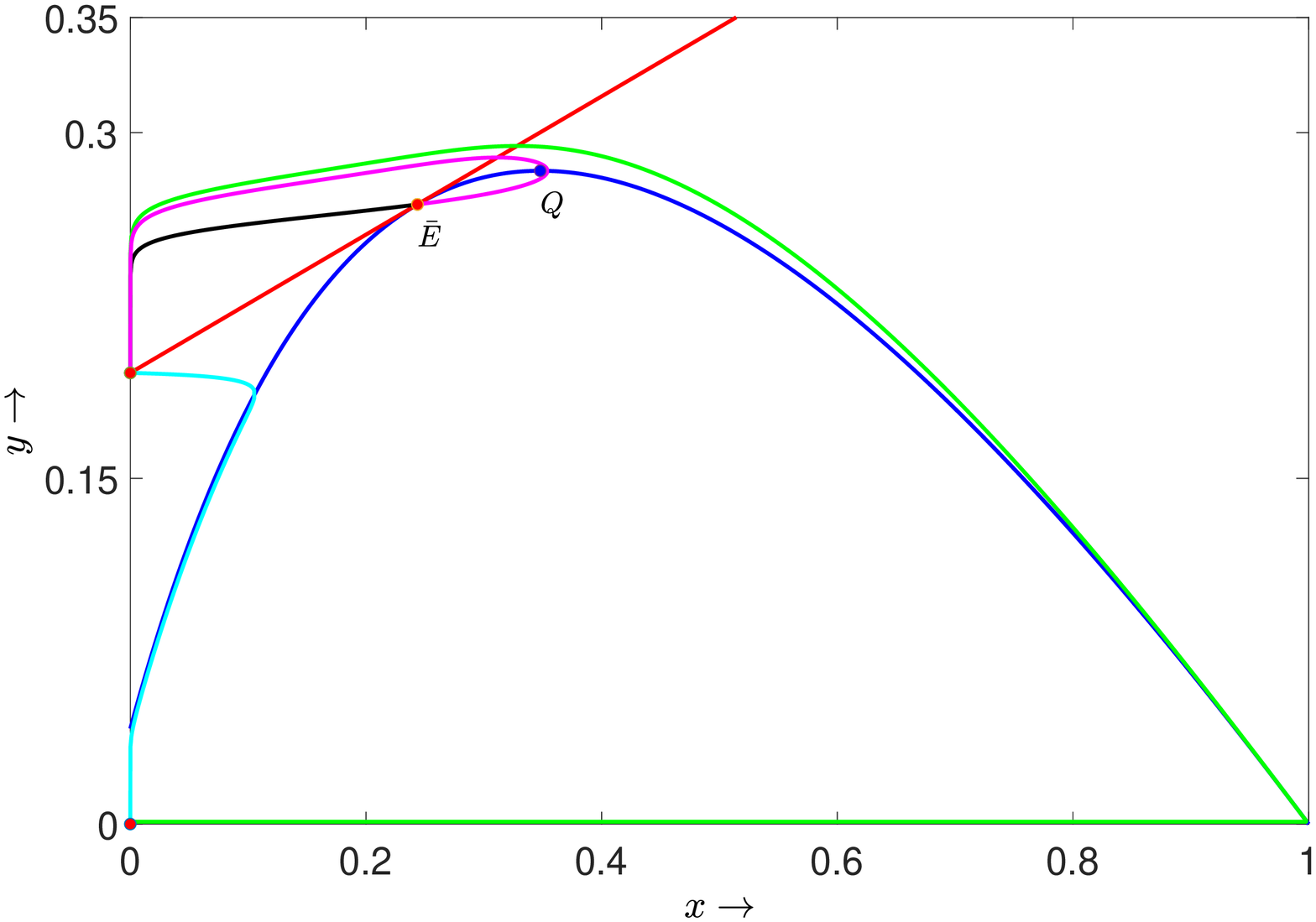}}
\caption{Numerical illustration for the existence of heteroclinic orbits for $\mu\in R_2$, and $0<\epsilon\ll 1$. (a) Each set of equilibrium points $(E_{0}, E_{1b})$, $(E_0, \bar{E})$, and $(E_{1b}, E_{2b})$ is connected by a single heteroclinic orbit presented by solid cyan, dashed black, and solid green curve respectively. (b) Both the magenta and black curves are the heteroclinic orbits connecting the equilibrium points $\bar{E}$ and $E_{2b}$. The cyan and green curves are the heteroclinic orbits connecting the equilibrium points $E_0$ and $E_{2b}$. The parameter values are $\alpha=1.5, \beta=0.0207, \gamma=0.3, \delta=0.49576955, \theta=0.3, \epsilon=0.1$. (For interpretation of the references to colour in this figure caption, the reader is referred to the web version of this chapter.)} 
\label{fig:heteroclinic_prop_3}			
\end{figure}
\end{proof}

\begin{proposition}\label{heteroclinic_3}
	Assume $0<\epsilon\ll 1$, $\mu\in R_3$ and $x_*>x_m$. Then the system \eqref{sf_model2} has a heteroclinic orbit connecting each pair of the equilibria $(E_0, E_{1b})$, $(E_0, E_{2b})$, $(E_{1b}, E_*)$, $(E_{2b}, E_*)$ and infinitely many heteroclitic orbits connecting $E_0$ and $E_*$. Moreover, if this is the case, then the interior equilibrium $E_*(x_*, y_*)$ is globally stable in the interior of $\IR^2_+$.
\end{proposition}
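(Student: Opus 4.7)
The plan is to build each heteroclinic connection separately using invariance of the coordinate axes, Fenichel's theorem, and the local normal forms at the non-hyperbolic boundary equilibria, and then to upgrade to global stability of $E_{*}$ via the Poincar\'{e}--Bendixson theorem once periodic orbits and boundary polycycles are ruled out.

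We first catalogue the local dynamics. Since $\mu\in R_{3}$ with all of $E_{0},E_{1b},E_{2b},E_{*}$ present, we are in the subcase $\gamma<\delta<\gamma+\frac{\beta}{\alpha-1}$: then $E_{0}$ and $E_{1b}$ are saddle-nodes (Lemma~\ref{lemma_5}(i)(ii)), $E_{2b}$ is a hyperbolic saddle with unstable eigenvector in the $+x$-direction (Lemma~\ref{lemma_5}(iii)), and $E_{*}$ is stable because $x_{*}>x_{m}$ (Lemma~\ref{lemma_6}(iii)). Invariance of the $x$- and $y$-axes combined with the sign computations $\dot{x}=x(1-x)>0$ on $(0,1)\times\{0\}$ and $\dot{y}=\epsilon y^{2}(\delta-\gamma-y)/[\delta(y+\gamma)]>0$ on $\{0\}\times(0,\delta-\gamma)$ immediately yields the axis-connections $E_{0}\to E_{1b}$ and $E_{0}\to E_{2b}$.

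To obtain $E_{1b}\to E_{*}$ and $E_{2b}\to E_{*}$ we invoke Fenichel's theorem: for $0<\epsilon\ll 1$, $S_{0}^{a}$ perturbs to a locally invariant slow manifold $S_{\epsilon}^{a}$, and the slow flow \eqref{sf_slow_sub} restricted to $S_{0}^{a}$ has $E_{*}$ as its unique attracting equilibrium. The unstable parabolic direction of the saddle-node $E_{1b}$ (into $+y$) perturbs to a trajectory that enters $S_{\epsilon}^{a}$ and is swept to $E_{*}$. The unstable separatrix of $E_{2b}$, emitted in the $+x$-direction, is driven in fast time by the layer $\dot{x}=f(x,\delta-\gamma)$; because $\delta-\gamma<\frac{\beta}{\alpha-1}<y_{m}$ this layer has a unique stable equilibrium $x=\phi^{-1}(\delta-\gamma)\in(x_{m},1)$ on $S_{0}^{a}$, which the separatrix reaches before sliding along $S_{\epsilon}^{a}$ to $E_{*}$. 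The infinitely many connections $E_{0}\to E_{*}$ follow from the same Fenichel argument applied to initial conditions in a small punctured interior neighborhood of $E_{0}$ away from the two invariant axes.

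For global stability we apply Poincar\'{e}--Bendixson on the bounded, positively invariant set $\IR^{2}_{+}$ (Lemmas~\ref{lemma_1} and~\ref{lemma_2}). The $\omega$-limit set of any interior trajectory is an equilibrium, a periodic orbit, or a polycycle of heteroclinic arcs. Since $E_{*}$ is the only interior equilibrium and is a sink, no polycycle can contain it; a polycycle consisting only of $\{E_{0},E_{1b},E_{2b}\}$ is impossible because the unstable directions at $E_{1b}$ and $E_{2b}$ point strictly into the interior rather than along the axes. Periodic orbits are excluded as follows: Theorems~\ref{sf_singular_Hopf}--\ref{canard explosion} confine singular Hopf and canard cycles to an $\mathcal{O}(\sqrt{\epsilon})$ window around $\delta=\delta_{*}$ (which corresponds to $x_{*}\to x_{m}$), so for $x_{*}>x_{m}$ strictly there are no small-amplitude cycles; monotonicity of the slow flow \eqref{sf_slow_sub} on $S_{\epsilon}^{a}$ (read from the sign of its numerator together with $\phi'(x)<0$ on $(x_{m},1)$) ensures that $E_{*}$ is a hyperbolic sink of the reduced slow problem; and a large-amplitude relaxation oscillation is impossible because it would require $E_{*}$ to lie on the repelling branch $S_{0}^{r}$. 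The main obstacle is making the non-existence of periodic orbits rigorous and uniform in $\epsilon$ throughout $\{\mu\in R_{3}:x_{*}>x_{m}\}$: near the singular Hopf threshold $\delta_{H}(\sqrt{\epsilon})$ one must combine the Lyapunov-coefficient information from the normal form \eqref{sf_standard normal form} with a trapping-region argument based on $S_{\epsilon}^{a}$, and for trajectories with $y\gg y_{m}$ one needs an entry-exit analysis along the repelling portion of the $y$-axis to certify that the trajectory eventually lands on $S_{\epsilon}^{a}$ and is thereafter swept into $E_{*}$.
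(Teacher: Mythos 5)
Your constructions of the individual heteroclinic connections follow essentially the same route as the paper (invariance of the axes, the layer dynamics at height $y=\delta-\gamma$, and Fenichel's theorem to slide along $S_\epsilon^{a}$ into $E_*$), and that part is sound. The genuine gap is in the global-stability step, and you have in fact flagged it yourself: none of your three bullet points actually excludes periodic orbits. Theorems~\ref{sf_singular_Hopf}--\ref{canard explosion} are \emph{existence} statements for cycles in an $\mathcal{O}(\sqrt{\epsilon})$ window of $\delta$ near $\delta_*$; they say nothing about non-existence away from that window, and in particular they do not rule out an $\mathcal{O}(1)$-amplitude (e.g.\ unstable or semistable) limit cycle surrounding the sink $E_*$ --- index theory happily permits a periodic orbit around a stable node or focus, and ``a relaxation oscillation would require $E_*\in S_0^{r}$'' is a fact about the specific cycles constructed in Theorem~\ref{sf_relaxation}, not about all possible periodic orbits. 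Likewise, hyperbolicity of $E_*$ for the reduced slow flow controls only a neighbourhood of $E_*$.

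What is missing is an actual non-existence argument, and the paper supplies one in two pieces. First, on $D_2=\{(x,y): x>x_m,\ y>0\}$ it applies the Dulac criterion with $H(x,y)=\frac{\beta+x+y}{x(\alpha+x-1)y^2}$, for which
\begin{align*}
\frac{\partial (fH)}{\partial x}+\epsilon\frac{\partial (gH)}{\partial y}
=\frac{\phi'(x)}{y^2}-\frac{\beta+x+y}{x(\alpha+x-1)(y+\gamma)^2}<0 \quad\text{on } D_2,
\end{align*}
since $\phi'<0$ for $x>x_m$; this kills every cycle contained in $D_2$. Second, a cycle meeting $D_1=\{(x,y):0<x\le x_m,\ y>0\}$ is excluded by tracking: a trajectory entering $D_1$ above $M_{20}$ is attracted to $S_\epsilon^{a+}$, passes the transcritical point $P$ (which is a jump point), is ejected by the fast flow \eqref{sf_fast subsystem} onto $S_\epsilon^{a}\subset D_2$, and thereafter converges to $E_*$; a trajectory below $M_{20}$ lands on $S_\epsilon^{a}$ directly. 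Hence every forward orbit eventually enters $D_2$ and converges to $E_*$, so no periodic orbit can intersect $D_1$ either. If you wish to keep your Poincar\'e--Bendixson framing, this Dulac-plus-tracking argument is exactly the ingredient to insert in place of the appeal to the canard theorems; it also disposes of the boundary polycycles at a stroke, since it shows every interior orbit has $\omega$-limit set $\{E_*\}$.
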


\begin{proof}
	For $0<\epsilon\ll 1$, the equilibrium $E_{2b}$ is a saddle whereas the unique interior equilibrium $E_*$ on the normally hyperbolic attracting manifold
	$S_0^a$ is a stable singularity. The stable manifold of $E_{2b}$ is exactly the critical manifold $M_{10}$ along the $y$-axis and as $y$-axis is invariant, there exists a heteroclinic connection between $E_0$ and $E_{2b}$. One of the unstable separatrices of $E_{2b}$ arrived in a neighbourhood of the slow manifold $S_\epsilon^a$ following a layer of the fast subsystem \eqref{sf_fast subsystem} in fast time and finally, following the slow manifold $S_\epsilon^a$ gets attracted to the stable singularity $E_*$. Hence, there exists a heteroclinic connection joining $E_{2b}$ and $E_*$. Proceeding as in the previous propositions, there also exists a heteroclinic connection joining the  equilibria $E_0$ and $E_{1b}$;  $E_{1b}$ and $E_*$. Finally, all the infinitely many centre manifolds in the parabolic sector of $E_0$ in $\IR^2_+$ have $E_{*}$ as their $\omega$-limit set and $E_0$ as $\alpha$-limit set. This shows that there exist infinitely many heteroclinic orbits connecting the singularities $E_0$ and $E_*$.
	
	One of the unstable separatrices of $E_{2b}$ first follows a layer of the fast subsystem \eqref{sf_fast subsystem} and reaches a neighbourhood of $S_\epsilon^a$, and following $S_\epsilon^a$ it gets attracted to the stable singularity $E_*$. From the geometry of the $S^a_0,$ it follows that $E_*$ is locally asymptotically stable. To claim that $E_*$ is globally stable, we need to show that there does not exist any periodic orbit in the interior of $\IR^2_+$. We consider a vertical line $x=x_m$ which divides the interior of $\IR^2_+$ into two domains $D_1$ and $D_2$, where 
	\begin{align*}
	D_1=\{(x,y):0<x\leq x_m, y>0\}\,\,\text{and}\,\,
	D_2=\{(x,y):x>x_m, y>0\}.
	\end{align*}
	In the domain $D_2$, we define the Dulac function $H:D_2\to\IR$ by $H(x,y)=\frac{\beta+x+y}{x(\alpha+x-1)y^2}$. Now, as in $D_2$, the critical manifold $M_{20}$ decreases, we have 
	\begin{align*}
	\frac{\partial (fH)}{\partial x} +\epsilon\frac{\partial(gH)}{\partial y}=
	\frac{1}{y^2}\phi'(x)-\frac{(\beta+x+y)}{x(\alpha+x-1)(y+\gamma)^2}<0,\,\, \forall\,\,(x,y)\in D_2.  
	\end{align*}
	Hence, by the Dulac criterion, the system \eqref{sf_model2} has no periodic orbit which entirely lies in $D_2$. Consequently, it follows that $E_{*}$ is the only $\omega$-limit point of every trajectory starting in $D_2$. For $0<\epsilon\ll 1$, we consider tracking of trajectories which start in $D_1$. This will suffice our claim if we can show that $E_*$ is the only $\omega$-limit point of the two trajectories $\Gamma_1$ and $\Gamma_2$ which start above and below the critical manifold $M_{20}$ but in $D_1$. The trajectory $\Gamma_1$ which starts in $D_1$ but above $M_{20}$ arrives in the vicinity of $S_\epsilon^{a+}$ as 
	$S_\epsilon^{a+}$ is hyperbolic attracting and passes the fold point $P$. Now, as the fold point $P$ is also a jump point, the trajectory $\Gamma_1$ then moves away from the normally hyperbolic repelling manifold $S_\epsilon^{r+}$ following a fast layer of the subsystem \eqref{sf_fast subsystem} and reaches a $\mathcal{O}(\epsilon)$ neighbourhood of $S_\epsilon^{a}$ which lies in the region $D_2$. Finally, the trajectory gets attracted to $E_*$ following $S_\epsilon^{a}$, i.e., $E_*$ is the $\omega$-limit point of the trajectory $\Gamma_1$. Similarly, the trajectory $\Gamma_2$ starting below $M_{20}$ but in $D_1$ arrives in the vicinity of $S_\epsilon^{a}$ following a fast layer of the subsystem \eqref{sf_fast subsystem} and gets attracted to $E_*$ following $S_\epsilon^{a}$. Thus, any trajectory starting in $D_1$ or $D_2$ converges to $E_*$ showing that $E_*$ is globally stable for $0<\epsilon\ll 1$.

\begin{figure}[H]
\setlength{\belowcaptionskip}{-15pt}
\centering
	\subfloat[]{\includegraphics[width=8cm,height=6cm]{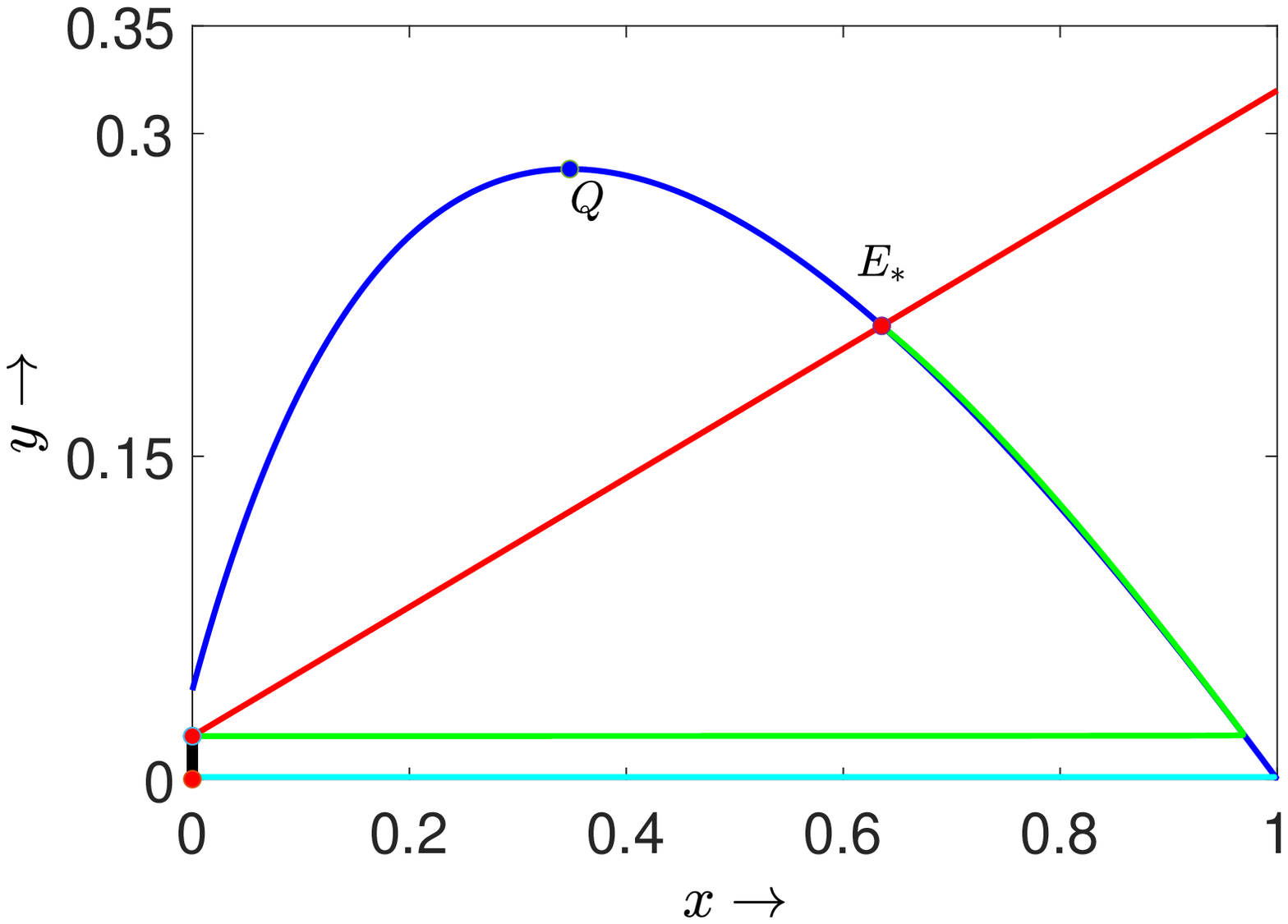}}
		\subfloat[]{\includegraphics[width=8cm,height=5.8cm]{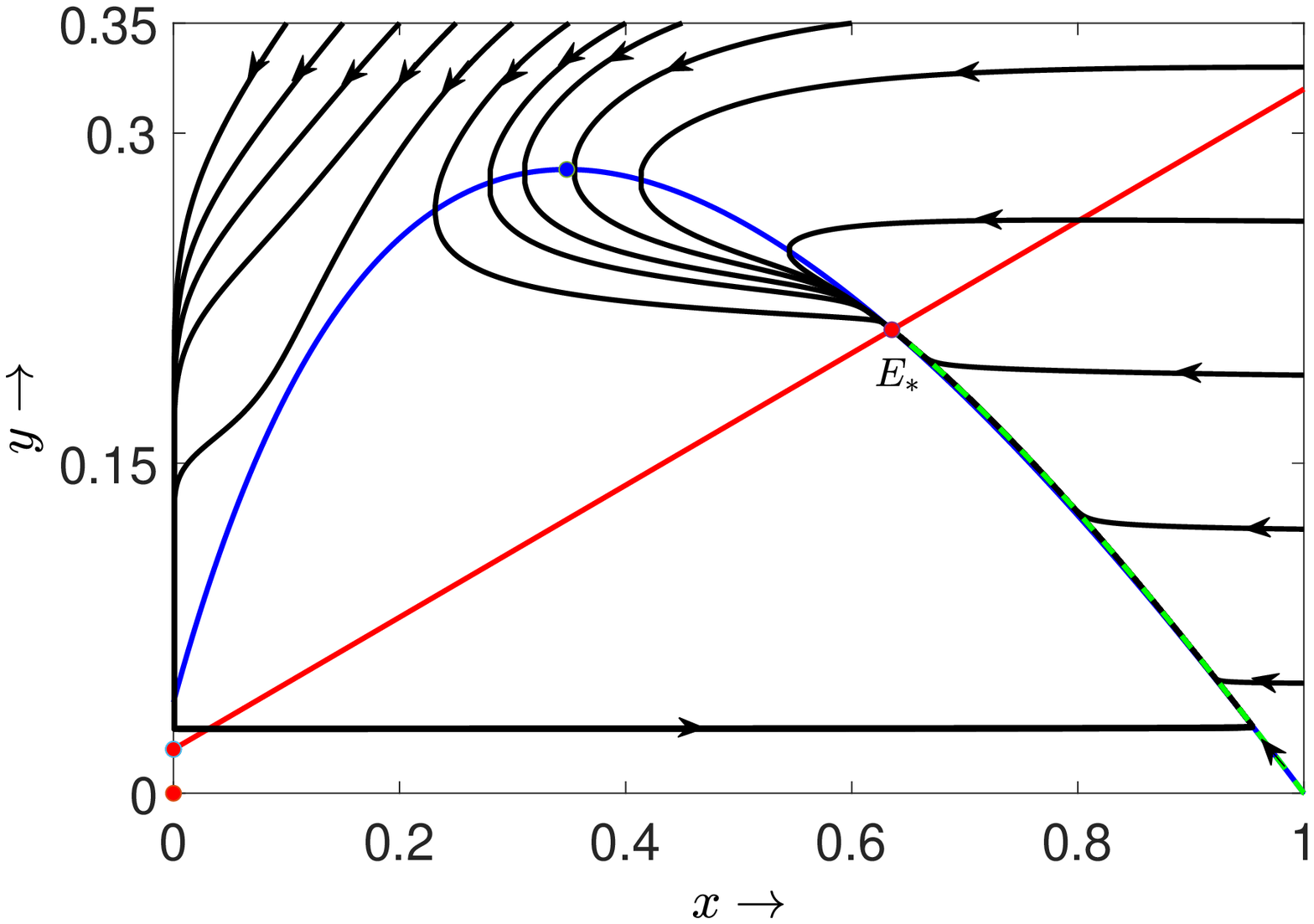}}\\
  \subfloat[]{\includegraphics[width=8.5cm,height=6cm]{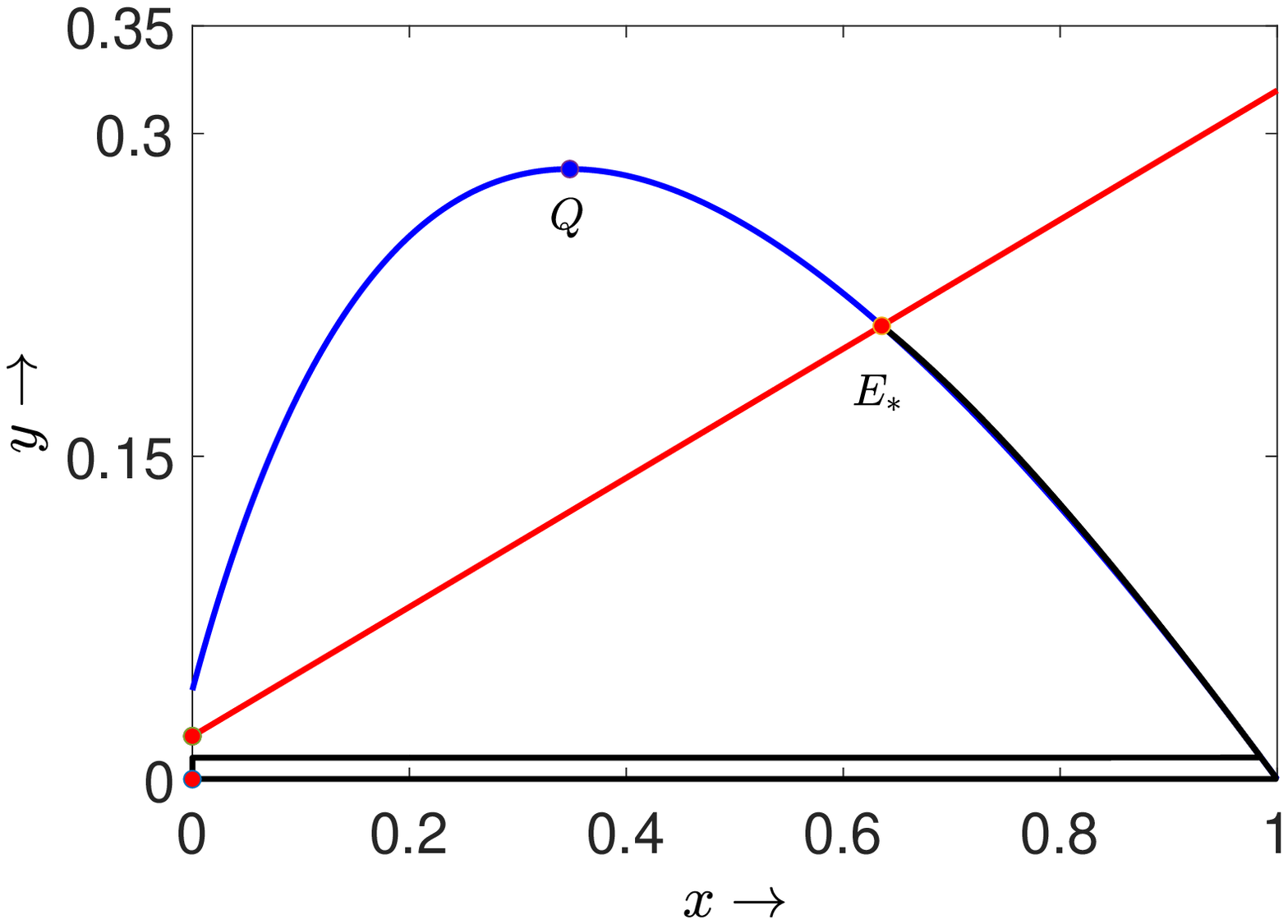}}
\caption{Numerical illustration for the existence of heteroclinic orbits for $\mu\in R_3$, $x_{*}>x_m$ and $0<\epsilon\ll 1$. (a) Each set of equilibrium points $(E_{0}, E_{1b})$, $(E_0, E_{2b})$ and $(E_{2b}, E_{*})$ is connected by a single heteroclinic orbit presented by solid cyan, black and green curve respectively. (b) The equilibrium point $(E_{1b}, E_{*})$ is connected by a unique heteroclinic orbit presented by broken green. $E_*$ is globally asymptotically stable, and there does not exist any periodic orbit in the first quadrant. (c)
Then the system \eqref{sf_model2} has infinitely many heteroclitic orbits (using black curves, we only present two here) connecting $E_{0}$ and $E_*$. The parameter values are $\alpha=1.5, \beta=0.0207, \gamma=0.3, \delta=0.32, \theta=0.3, \epsilon=0.1$. (For interpretation of the references to colour in this figure caption, the reader is referred to the web version of this chapter.)} 
\label{fig:heteroclinic_prop_4}			
\end{figure}
\end{proof}

\begin{proposition}
	Assume $0<\epsilon\ll 1$, $\mu\in R_3$ and $x_*<x_m$. Then the system \eqref{sf_model2} has a unique stable relaxation oscillation.
\end{proposition}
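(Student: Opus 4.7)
The plan is to construct a singular slow-fast relaxation cycle $\bar{\Gamma}_0$ in the limit $\epsilon=0$ and then perturb it to a unique stable limit cycle of \eqref{sf_model2} for $0<\epsilon\ll 1$ using the tools of geometric singular perturbation theory. Under the hypotheses $\mu\in R_3$ and $x_*<x_m$, the unique interior equilibrium $E_*$ lies on the normally hyperbolic repelling branch $S_0^r$, so by Fenichel's theorem it persists as a hyperbolic unstable equilibrium of the perturbed system; thus $E_*$ cannot trap nearby trajectories and will sit inside the candidate cycle.

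First, I would assemble the singular cycle $\bar{\Gamma}_0$ from four pieces: (i) a slow arc on $S_0^a$ running from $(x',y')$ to the fold point $Q(x_m,y_m)$; (ii) a horizontal fast jump at height $y=y_m$ from $Q$ to $(0,y_m)\in S_0^{a+}$; (iii) a slow arc on $M_{10}$ descending along $S_0^{a+}$, crossing the transcritical point $P(0,\beta/(\alpha-1))$, and continuing along $S_0^{r+}$ down to the exit height $y'$ determined by the entry-exit relation of Lemma \ref{lemma_entry-exit} and \eqref{relaxation_oscillation}; and (iv) a horizontal fast jump at height $y=y'$ from $(0,y')$ back to $(x',y')$ on $S_0^a$, where $x'=\phi^{-1}(y')\in(x_m,1)$. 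The condition $x_*<x_m$ is essential here: it guarantees that $y_m=\phi(x_m)$ exceeds the transcritical level $\beta/(\alpha-1)$ and that the return point $(x',y')$ lies strictly to the right of $Q$, so the cycle closes without being obstructed by $E_*$.

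Next, I would handle the three non-normally-hyperbolic features of $\bar{\Gamma}_0$ separately. Away from $Q$ and $P$, Fenichel's theorem produces smooth slow manifolds $S_\epsilon^a,\, S_\epsilon^{a+},\, S_\epsilon^{r+}$ within $\mathcal{O}(\epsilon)$ of their singular counterparts, with exponential contraction in the fast direction along the attracting parts. At the generic fold point $Q$, Theorem 2.1 of Krupa and Szmolyan \cite{krupa2001relaxation} ensures that the trajectory tracking $S_\epsilon^a$ reaches $Q$ and departs along a fast fibre within an $\mathcal{O}(\epsilon^{2/3})$ neighbourhood of the singular horizontal fibre $y=y_m$. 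At the transcritical singularity $P$, the earlier entry-exit analysis (Lemma \ref{lemma_entry-exit}) supplies the exponentially delayed exit height $y'$ from the $y$-axis matching the singular construction. Composing these three passages defines a Poincaré first-return map $\Pi_\epsilon$ on a cross-section transverse to the fast fibre at $Q$, and the strong attraction along $S_\epsilon^a$ and $S_\epsilon^{a+}$ yields $|\Pi_\epsilon'|=\mathcal{O}(e^{-c/\epsilon})$ for some $c>0$. The Banach fixed-point theorem then delivers a unique fixed point, and the exponential contraction upgrades it to an asymptotically stable periodic orbit; as $\epsilon\to 0$ it converges to $\bar{\Gamma}_0$ in Hausdorff distance, which is precisely the claimed stable relaxation oscillation.

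The chief obstacle is the passage through the transcritical point $P$, where the critical manifolds $M_{10}$ and $M_{20}$ intersect transversally and the standard Fenichel/Krupa--Szmolyan fold theory does not directly apply. The trajectory must slide along the attracting branch $S_\epsilon^{a+}$, cross $P$, and then continue along the repelling branch $S_\epsilon^{r+}$ for an exponentially long delayed time before being released by a horizontal fast fibre; the resulting exit height $y'$ fixes the closing point of the cycle. This step relies on the earlier Lemma \ref{lemma_entry-exit} and the implicit relation \eqref{relaxation_oscillation}, and it is what ultimately ties down both the existence and the uniqueness of the relaxation oscillation.
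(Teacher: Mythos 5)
Your proposal follows essentially the same route as the paper's proof in Section~\ref{Sec:relaxation}: the same four-segment singular cycle with jump at the fold $Q$ and exit height $y'$ fixed by the entry--exit relation of Lemma~\ref{lemma_entry-exit}, Fenichel theory combined with the Krupa--Szmolyan fold and transcritical extension theorems, and an exponentially contracting return map whose unique fixed point gives the stable relaxation oscillation. One small correction: $y_m>\beta/(\alpha-1)$ is automatic (since $y_m$ is the maximum of $\phi$ and $\phi(0)=\beta/(\alpha-1)$); the actual role of $x_*<x_m$ is to keep the unique interior equilibrium on the repelling branch $S_0^r$ so the slow flow along $S_0^a$ reaches $Q$ unobstructed, which you do capture elsewhere in your argument.
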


\begin{proof}
The proof has been shown in the next  section with the help of entry-exit function.
\end{proof}

\section{Relaxation Oscillation}\label{Sec:relaxation}
Here, our target is to show the existence of relaxation oscillation for the system \eqref{sf_model2} for $0<\epsilon\ll 1$ whenever $\mu\in R_3$ and $x_{2*}<x_m$ with the help of entry-exit function. A relaxation oscillation for the system \eqref{sf_model2} is a periodic orbit $\Gamma_\epsilon$ which converges to a piece-wise smooth singular closed orbit $\Gamma_0$ consisting of slow fast segments as $\epsilon\to 0$ in the Hausdorff distance. 

We know that the critical manifold $M_{10}$ i.e., the $y$-axis is normally hyperbolic attracting for $y>\frac{\beta}{\alpha-1}$ and normally hyperbolic repelling for $y<\frac{\beta}{\alpha-1}$. We consider the system \eqref{sf_model2} and observe that for $\epsilon=0$, the $y$- axis consists of equilibria, attracting for $y>\frac{\beta}{\alpha-1}$ and repelling for $y<\frac{\beta}{\alpha-1}$. For $\epsilon>0$, very small, a trajectory starting at $(x_0,y_0)$, $x_0>0$, very small, $y_0>\frac{\beta}{\alpha-1}$ gets attracted towards the $y$-axis and then drifts downward and when cross the line $y=\frac{\beta}{\alpha-1}$ gets repelled from the $y$-axis. Thus, for $\epsilon>0$, very small, the trajectory re-intersects the line $x=x_0$ at $(x_0, p_\epsilon(y_0))$ such that $\displaystyle\lim_{\epsilon\to 0}p_\epsilon(y_0)=p_0(y_0)$, where $p_0(y_0)$ is determined by 
\begin{align}
\int_{y_0}^{p_{0(y_0)}}\frac{1-\frac{\alpha y}{\beta}}{y^2\left(\frac{1}{y+\gamma}-\frac{1}{\delta}\right)}dy=0.
\end{align}
The function $y_0\to p_0(y_0)$ is said to be an entry-exit function.
\begin{lemma}\label{lemma_entry-exit}

If $\gamma-\delta\geq 0$ or $\gamma-\delta< 0$ then there exists a unique $y'$, where $0<y'<\frac{\beta}{\alpha-1}$ or $\delta-\gamma<y'<\frac{\beta}{\alpha-1}$ such that 
\begin{align}\label{relaxation_oscillation}
J(y')=\int_{y'}^{y_m} \frac{1-\frac{\alpha y}{\beta+y}}{y^2\left(\frac{1}{y+\gamma}-\frac{1}{\delta}\right)}dy=0.
\end{align}
\end{lemma}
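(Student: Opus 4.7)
The plan is to study the monotonicity and boundary behaviour of the continuous function
$$J(y')=\int_{y'}^{y_m} I(y)\,dy,\qquad I(y)=\frac{\delta\,(y+\gamma)\bigl(\beta-(\alpha-1)y\bigr)}{y^{2}(\beta+y)(\delta-\gamma-y)},$$
and conclude existence and uniqueness of a root via the intermediate value theorem. Since $\dfrac{dJ}{dy'}(y')=-I(y')$, the sign pattern of $I$ directly controls that of $J$. Before the sign analysis I would record the key geometric fact $y_{m}>y^{*}:=\frac{\beta}{\alpha-1}$: indeed $\phi(0)=y^{*}$ and, by Lemma \ref{lemma_3}(ii), $\phi$ attains its maximum $y_{m}$ at an interior point $x_{m}>0$, so the upper limit $y_{m}$ of $J$ always lies above $y^{*}$. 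The numerator factor $\beta-(\alpha-1)y$ vanishes only at $y^{*}$, while the denominator factor $\delta-\gamma-y$ vanishes at $y=\delta-\gamma$; all remaining factors are positive on $(0,y_m)$.

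\emph{Case A} $(\gamma-\delta\ge 0)$. Here $\delta-\gamma-y<0$ throughout $(0,y_{m})$, so $I<0$ on $(0,y^{*})$ and $I>0$ on $(y^{*},y_{m})$. Consequently $J$ is strictly increasing on $(0,y^{*})$, and $J(y^{*})=\int_{y^{*}}^{y_{m}}I(y)\,dy>0$. Near the origin, a direct expansion gives $I(y)\sim \frac{\delta\gamma}{\delta-\gamma}\cdot\frac{1}{y^{2}}$ when $\delta<\gamma$ (with an even stronger $\sim -C/y^{3}$ blow-up in the borderline case $\delta=\gamma$), so the improper integral $\int_{0}^{\varepsilon}I(y)\,dy$ diverges to $-\infty$, forcing $J(y')\to-\infty$ as $y'\to 0^{+}$. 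By IVT there is a unique $y'\in(0,y^{*})$ with $J(y')=0$.

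\emph{Case B} $(\gamma-\delta<0)$. Since $\mu\in R_{3}$ we have $0<\delta-\gamma<y^{*}$, so $I$ has a simple pole at $\delta-\gamma$. A direct sign check yields $I>0$ on $(0,\delta-\gamma)$, $I<0$ on $(\delta-\gamma,y^{*})$, $I>0$ on $(y^{*},y_{m})$, whence $J$ is strictly increasing on $(\delta-\gamma,y^{*})$ with $J(y^{*})>0$. Near $y=(\delta-\gamma)^{+}$ a Laurent expansion gives
$$I(y)\sim -\frac{\delta^{2}\bigl(\beta-(\alpha-1)(\delta-\gamma)\bigr)}{(\delta-\gamma)^{2}(\beta+\delta-\gamma)}\cdot\frac{1}{y-(\delta-\gamma)},$$
where the constant in front is strictly positive (the inequality $\delta-\gamma<y^{*}$ makes $\beta-(\alpha-1)(\delta-\gamma)>0$). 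Integrating this against $dy$ produces a $-\log$ singularity, so $J(y')\to-\infty$ as $y'\to(\delta-\gamma)^{+}$. IVT and strict monotonicity again give a unique root $y'\in(\delta-\gamma,y^{*})$.

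The principal technical nuisance is the endpoint singularity analysis: one must pin down the leading-order asymptotics of $I$ at $0$ (Case A) or at $\delta-\gamma$ (Case B) and confirm that the improper integral, and not merely the integrand, diverges to $-\infty$. Once this is in hand, strict monotonicity of $J$ on the relevant sub-interval automatically delivers both existence and uniqueness, so no further work is required.
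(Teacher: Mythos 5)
Your proof is correct and follows essentially the same route as the paper's: rewrite the integrand in factored rational form, show $J\to-\infty$ at the left endpoint ($y'\to 0^{+}$ or $y'\to(\delta-\gamma)^{+}$), establish strict monotonicity of $J$ on the relevant interval from the sign of the integrand, verify $J\bigl(\tfrac{\beta}{\alpha-1}\bigr)>0$, and conclude by the intermediate value theorem. The only cosmetic difference is that you obtain the endpoint divergence from the leading-order asymptotics of the integrand at the singularity, whereas the paper does so via an explicit partial-fraction decomposition; both computations are sound.
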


\begin{proof} We have, 
\begin{align*}
J(y)&=\int_{y}^{y_m} \frac{1-\frac{\alpha y}{\beta+y}}{y^2\left(\frac{1}{y+\gamma}-\frac{1}{\delta}\right)}dy\\
&= -\delta\int_y^{y_m}\frac{(\beta+(1-\alpha) y)(y+\gamma)}{y^2(\beta+y)(y+\gamma-\delta)}dy\\
&=-\delta\left[\frac{\alpha\gamma(\delta-\gamma)-\beta\delta}{\beta(\gamma-\delta)^2}\int_y^{y_m}\frac{1}{y}dy+\frac{\gamma}{\gamma-\delta}\int_y^{y_m}\frac{1}{y^2}dy+\frac{\alpha(\beta-\gamma)}{\beta(\beta-\gamma+\delta)}\int_y^{y_m}\frac{1}{\beta+y}dy\right.\\
&\left.+\frac{\delta(\alpha-1)(\gamma-\delta)+\beta\delta)}{(\beta-\gamma+\delta)(\gamma-\delta)^2}\int_y^{y_m}\frac{1}{y+\gamma-\delta}dy\right]\\
&\rightarrow -\infty \,\, {\rm as}\,\, y\to 0^+, \, \gamma-\delta\geq 0\,\,{\rm or}\,\,{\rm as}\,\, y\to(\delta-\gamma)^+,\, \gamma-\delta<0.
\end{align*}

Further,
\begin{align*}
J'(y)=\delta\int_y^{y_m}\frac{(\beta-(\alpha-1) y)(y+\gamma)}{y^2(\beta+y)(y+\gamma-\delta)}>0, 
\end{align*}
${\rm either \,\,for}\,\,0<y<\frac{\beta}{\alpha-1},\,\gamma-\delta\geq 0\,\, {\rm or\,\, for }\,\,\delta-\gamma<y<\frac{\beta}{\alpha-1},\, \gamma-\delta<0$.
Hence, $J(y)$ increases strictly for $0<y<\frac{\beta}{\alpha-1}$, $\gamma-\delta\geq 0$ or for $\delta-\gamma<y<\frac{\beta}{\alpha-1}$, $\gamma-\delta<0$.

We also have,
\begin{align*}
J\left(\frac{\beta}{\alpha-1}\right)=
-\delta\int_{\frac{\beta}{\alpha-1}}^{y_m}\frac{(\beta-(\alpha-1) y)(y+\gamma)}{y^2(\beta+y)(y+\gamma-\delta)}dy>0. 
\end{align*}
Thus, it follows that there exists a unique $y'$ where $0<y'<\frac{\beta}{\alpha-1}$, $\gamma-\delta\geq 0$ or $\delta-\gamma<y'<\frac{\beta}{\alpha-1}$, $\gamma-\delta<0$ such that $J(y')=0$.
\end{proof}
The critical manifolds $M_{10}$ and $M_{20}$ lose its normal hyperbolicity at $P(0,\frac{\beta}{\alpha-1})$ and $Q(x_m, y_m)$. The point $Q(x_m,y_m)$ is a generic fold point for the system \eqref{sf_model2} and also a jump point as at this point the fast flow \eqref{sf_fast subsystem} is moved away from the critical manifold $M_{20}$ and gets attracted toward $S_0^{a+}$, the attracting branch of the critical manifold $M_{10}$. For the point $P$, we have 
\begin{align*}
\left.\frac{\partial f}{\partial x}\right|_{P}= 0 =\left.\frac{\partial f}{\partial y}\right|_{P}, ~~\left.\frac{\partial^2 f}{\partial x^2}\right|_{P}=2\frac{\alpha(1-\beta)-1}{\alpha\beta}>0, ~~ \left.g\right|_P=\frac{\beta^2}{(\alpha-1)^2}\left(\frac{(\delta-\gamma)(\alpha-1)-\beta}{
\delta(\beta+\gamma(\alpha-1))}\right)<0,
\end{align*}
and
\begin{align*}
	{\rm det} \left[ \begin {array}{cc} \frac{\partial^2 f}{\partial x^2} 	&\frac{\partial^2 f}{\partial x\partial y} \\
	\noalign{\medskip}\frac{\partial^2 f}{\partial x \partial y}	&\frac{\partial^2 f}{\partial y^2}\end {array} \right]_{P} =-\frac{(\alpha-1)^4}{\alpha^2\beta^2}<0. 
\end{align*}
and hence,  $P$ is a generic transcritical point for the system \eqref{sf_model2}. The point $P$ is also a jump point, as at this point the fast flow \eqref{sf_fast subsystem} is moved away from the critical manifold $M_{10}$. 

We now consider a singular slow-fast cycle $\Gamma_0$ as follows: 
From $S(0, y_m)$, follow the slow flow \eqref{sf_slow subsystem} down the $y$-axis to $T(0,y')$, follow the fast flow \eqref{sf_fast subsystem} to intersect the attracting branch $S_0^{a}$ at $T'(x', y')$, follow the slow flow \eqref{sf_slow subsystem} along $S_0^{a}$ to $Q$ and then follow the fast flow \eqref{sf_fast subsystem} to the left of $Q$ to $S(0, y_m)$. Consequently, we have a singular orbit $\Gamma_0$ consisting of slow and fast segments for which $T$, $Q$ are jump points and $T', S$ are drop points as at these points, the fast flow is moved toward the critical manifolds.  

\begin{theorem}\label{sf_relaxation}
Let $\mu\in R_3$, $x_{2*}<x_m$ and $N$ be a tubular neighbourhood of $\Gamma_0$. Then for each fixed $0 < \epsilon \ll 1$, the system \eqref{sf_model2} has a unique relaxation oscillation $\Gamma_\epsilon\subset N$ which is strictly attracting with characteristic multiplier bounded by $-K/\epsilon$ for some constant $K > 0$. Moreover, the cycle $\Gamma_\epsilon$ converges to $\Gamma_0$ in the Hausdorff distance as $\epsilon \to 0$. 
\end{theorem}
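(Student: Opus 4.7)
The plan is to construct a Poincar\'e return map on a suitable transversal section and show it is a strong contraction with a unique fixed point, which then corresponds to the desired relaxation oscillation $\Gamma_\epsilon$. The construction follows the blueprint developed by Krupa--Szmolyan for relaxation oscillations, adapted to handle both non-hyperbolic singularities $Q$ (generic fold/jump) and $P$ (generic transcritical) on the boundary of the cycle.

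First, I would decompose the singular cycle $\Gamma_0$ into four pieces: (i) the slow segment on $S_0^{a+}\subset M_{10}$ from $S(0,y_m)$ down to $T(0,y')$; (ii) the fast horizontal jump from $T$ to $T'(x',y')\in S_0^a$; (iii) the slow segment on $S_0^a$ from $T'$ up to the fold point $Q$; (iv) the fast horizontal jump from $Q$ back to $S$. Away from $P$ and $Q$, Fenichel's theorem perturbs $S_0^{a+}$ and $S_0^{a}$ smoothly to attracting slow manifolds $S_\epsilon^{a+}$ and $S_\epsilon^{a}$ at $\mathcal{O}(\epsilon)$ distance, with exponentially strong normal contraction. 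The fast segments are simply regular perturbations of layers of the fast subsystem \eqref{sf_fast subsystem}, so they persist robustly.

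The two non-hyperbolic transitions are the heart of the matter. Near the generic fold $Q(x_m,y_m)$, since $Q$ is a jump point (not a canard point, as the canard condition $g(Q,\mu,0)\ne 0$ holds here for $\mu\in R_3$ with $x_{2*}<x_m$), I would invoke the blow-up analysis of Krupa--Szmolyan (their Theorem~2.1 in \cite{krupa2001relaxation}): an attracting slow manifold extending $S_\epsilon^{a}$ passes through an $\mathcal{O}(\epsilon^{2/3})$ neighbourhood of $Q$ and then follows the fast fibre to the left toward $M_{10}$, landing on $S_\epsilon^{a+}$. Near the transcritical point $P(0,\beta/(\alpha-1))$ the key input is Lemma~\ref{lemma_entry-exit}: a trajectory entering $M_{10}$ near $S$ drifts downward along $S_\epsilon^{a+}$, crosses $P$, follows $S_\epsilon^{r+}$ for an $\mathcal{O}(1)$ slow time, and exits at height $y'$ (the root of $J(y')=0$), then is launched horizontally by the fast flow, reaching an $\mathcal{O}(\epsilon)$ neighbourhood of $S_\epsilon^a$ near $T'$. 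Together, these four arcs close up into a singular loop whose $\epsilon$-perturbation is well-controlled.

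With the transitions in hand, I would pick a transversal section $\Sigma$ to $\Gamma_0$ at, say, $x=x_0$ for some small $x_0>0$ intersecting the jump from $T$ to $T'$, and define the Poincar\'e map $\Pi_\epsilon:\Sigma\to\Sigma$. Tracking a small interval of initial conditions through the four segments, the exponential contraction along the attracting slow manifolds $S_\epsilon^{a+}$ and $S_\epsilon^{a}$ (each traversed for slow time $\mathcal{O}(1)$, giving contraction of order $e^{-K/\epsilon}$) dominates any expansion acquired near the fold point (which by Krupa--Szmolyan is only algebraically large, of order $\epsilon^{-1/3}$) and near the transcritical passage (controlled by the entry-exit function, which by Lemma~\ref{lemma_entry-exit} is monotone and smooth in $y_0$). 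Hence $\Pi_\epsilon$ maps a small interval on $\Sigma$ strictly into itself and has derivative bounded by $e^{-K/\epsilon}$; by the contraction mapping theorem it admits a unique fixed point, giving the unique attracting periodic orbit $\Gamma_\epsilon$ with characteristic multiplier of the claimed order. Convergence $\Gamma_\epsilon\to\Gamma_0$ in Hausdorff distance as $\epsilon\to 0$ follows from the uniform closeness of each segment to its singular limit.

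The main obstacle is the passage through the transcritical point $P$: this is not a standard fold, so the classical Krupa--Szmolyan blow-up does not apply directly, and one must instead verify that the entry-exit mechanism of Lemma~\ref{lemma_entry-exit} yields a smooth, uniformly contracting transition map from a section just above $P$ on $S_\epsilon^{a+}$ to a section just below $P$ near $T$. Establishing this smoothness and the requisite estimates on the derivative of the exit map with respect to the entry point (which ultimately furnishes the exponential contraction factor $e^{-K/\epsilon}$) is the technically most delicate part; everything else reduces to assembling standard Fenichel and Krupa--Szmolyan results.
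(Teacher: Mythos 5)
Your proposal is correct and follows essentially the same route as the paper: decompose $\Gamma_0$ into the same four slow/fast segments, combine Fenichel's theorem with the Krupa--Szmolyan fold-point passage and the entry-exit function of Lemma~\ref{lemma_entry-exit}, and obtain the cycle as the unique fixed point of an exponentially contracting return map on a vertical section. The ``main obstacle'' you flag at the transcritical point $P$ is resolved in the paper simply by invoking Theorem~2.1 of Krupa--Szmolyan's companion paper on extending slow manifolds through transcritical singularities \cite{krupa2001extending1}, together with the entry-exit computation you already describe.
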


\begin{proof}
 Conditions stated in the theorem ensure that the system \eqref{sf_model2} has a unique interior equilibrium $E_{2*}(x_{2*}, y_{2*})$ and the equilibrium lies to the left of the generic fold point $Q$. For $\epsilon>0$ very small, following the Fenichel's theorem $S_0^a$, $S_0^{a+}$, perturb to nearby slow manifolds $S_\epsilon^{a}$ and $S_\epsilon^{a+}$ and by theorem (2.1) of \cite{krupa2001extending}, the slow manifolds $S_\epsilon^a$ can be continued beyond the generic fold point $Q$ and by theorem (2.1) of  \cite{krupa2001extending1}, the slow manifold $S_\epsilon^{a+}$ can be continued beyond the generic transcritical point $P$. The slow manifold $S_\epsilon^a$ (resp. $S_\epsilon^{a+}$) lies close to $S_0^a$ (resp. $S_0^{a+}$) until it arrives at the vicinity of the generic fold point $Q$ (resp. generic transcritical point $P$).

We consider a small vertical section $\Delta=\{(x_0, y)|y\in[y_m-\epsilon_0, y_m+\epsilon_0]\}$, $0<\epsilon_0\ll 1$ 
%(see figure\ref{fig:Relaxation_new}). 
We know that for every point $(0,y_0)$, $y_0\in [y_m-\epsilon_0, y_m+\epsilon_0]$ we can define $p_0(y_0)$ such that $0<p_0(y_0)<\frac{\beta}{\alpha-1}$ for $\gamma-\delta\geq 0$ or $\delta-\gamma<p_0(y_0)<\frac{\beta}{\alpha-1}$ for $\gamma-\delta<0$ by the result derived in lemma \ref{lemma_entry-exit}.
          
We now follow tracking two trajectories $\Gamma_\epsilon^{1,2}$ starting on $\Delta$ at the points $(x_0, y^{1,2})$. For $0 < \epsilon \ll 1$, it  follows by Fenichel's theorem that $\Gamma_\epsilon^{1,2}$ get attracted toward the slow manifold $S_\epsilon^{a+}$ exponentially with a rate $\mathcal{O}(e^{-1/\epsilon})$ and move downward slowly. Then by theorem (2.1) of \cite{krupa2001extending} $\Gamma_\epsilon^{1,2}$ pass by the generic transcritical point $P$ contracting exponentially toward each other and leave the repelling branch $S_0^{r+}$ of the critical manifold $M_{10}$ at the points $(0, p_0(y^{1,2}))$ and then jump horizontally to $(x_0, p_\epsilon(y^{1,2}))$ where $\displaystyle\lim_{\epsilon\to 0}p_\epsilon(y^{1,2})= p_0(y^{1,2})$. The trajectories then follow two layers of the fast flow \eqref{sf_fast subsystem} and get attracted towards the slow manifold  $S_\epsilon^a$ and pass the generic fold point $Q$ contracting exponentially and thus, finally return to $\Delta$.

Tracking the forward trajectories, we thus have a return map $\Pi: \Delta \to \Delta$ inducted by the flow of \eqref{sf_model2} for $0 < \epsilon \ll 1$. The return map $\Pi$ is a contraction map as the trajectories contract toward each other with rate $\mathcal{O}(e^{-1/\epsilon})$ and by the contraction mapping theorem $\Pi$ has a unique fixed point which is stable. This fixed point is the desired limit cycle $\Gamma_\epsilon$ which exists in a tubular neighbourhood of the singular slow-fast cycle $\Gamma_0$ and as the contraction is exponential, the characteristic multiplier of $\Gamma_\epsilon$ is bounded above by $-K/\epsilon$ for some $K>0$. Again applying Fenichel's theorem, theorem (2.1) of \cite{krupa2001extending1} and theorem (2.1) of \cite{krupa2001extending}, we conclude that the periodic orbit $\Gamma_\epsilon$ converges to the singular orbit $\Gamma_0$ as $\epsilon\to 0$ in the Hausdorff distance.
\end{proof}

For a geometrical description of the proof of theorem \ref{sf_relaxation}, 
see Fig. \ref{fig:Relaxation}.
\begin{figure}[H]
\setlength{\belowcaptionskip}{-10pt}
\centering
	\subfloat[]{\includegraphics[width=8cm,height=6cm]{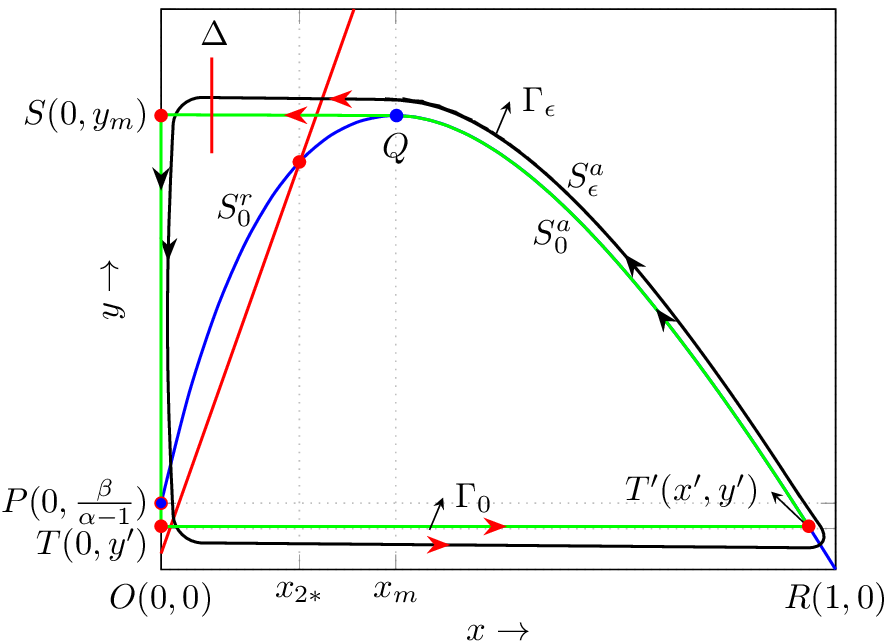}}
	\subfloat[]{\includegraphics[width=8cm,height=6.5cm]{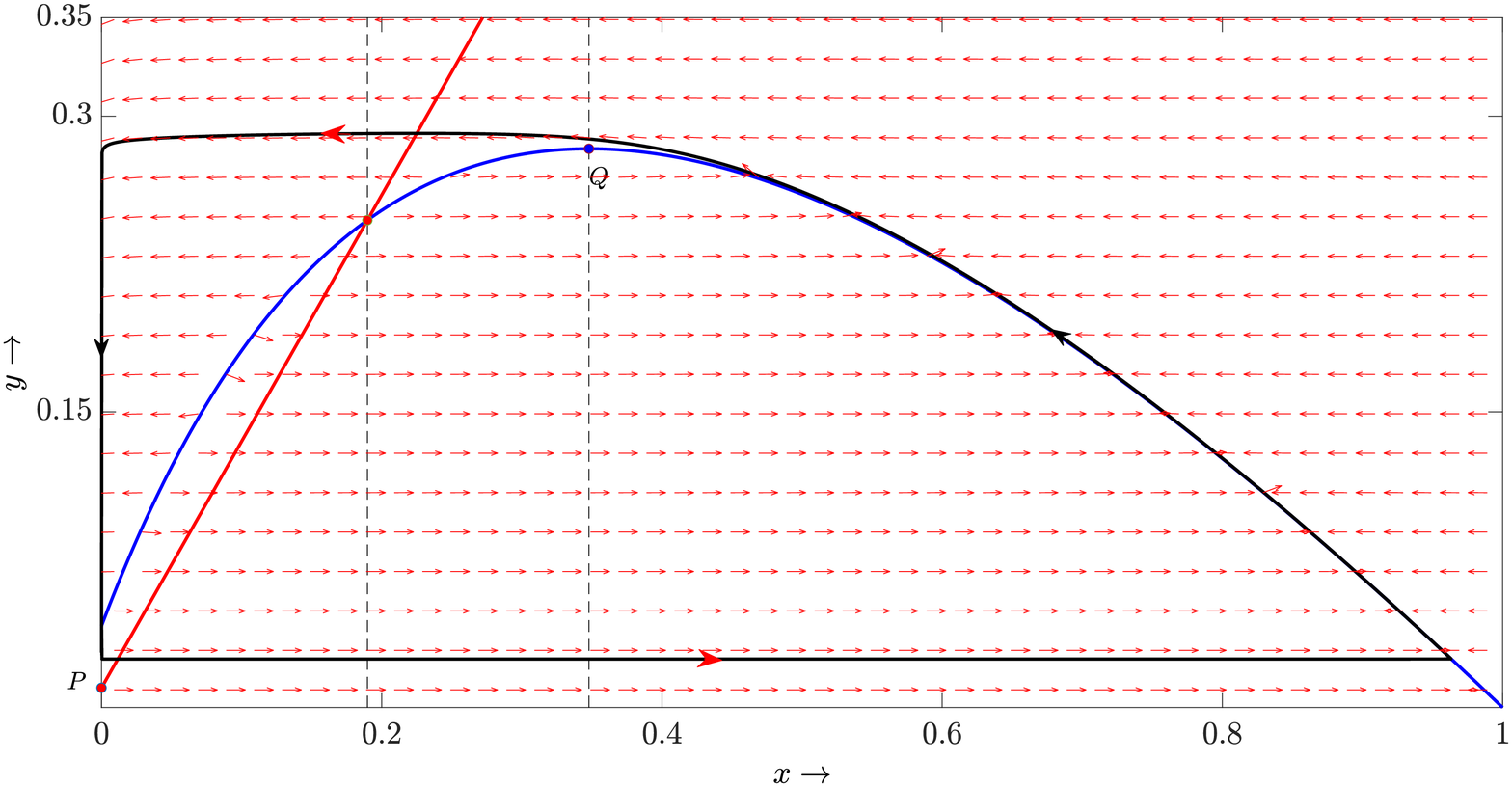}}
\caption{(a) Representation of the singular orbit $\Gamma_0$  (green) and a relaxation oscillation orbit $\Gamma_\epsilon$ (black) for the system \eqref{sf_model2} with $0<\epsilon\ll 1$ for $x_{2*}<x_m$.
The blue curve and red line, respectively, represent the non-trivial prey and predator nullclines. The two non-hyperbolic points $P$ and $Q$, shown by the solid blue circles, are the generic transcritical and generic fold points, respectively. The trajectory of $\Gamma_0$ is as follows: (i) it travels from $S(0,y_m)$ to $T(0,y')$ along the $y$-axis downward following the slow-flow \eqref{sf_slow subsystem}; (ii) it travels from $T(0,y')$ to $T'(x',y')$ parallel to $x$-axis following a layer of the fast subsystem \eqref{sf_fast subsystem}; (iii) it travels from $T'$ to $Q$ along the attracting branch $S_0^a$ following the slow-flow \eqref{sf_slow subsystem} and (iv) finally, travels from $Q$ to $S$ following the fast-flow \eqref{sf_fast subsystem}. According to the Fenichel's theorem, the submanifolds $S_0^{r}$ and $S_0^a$ could be perturbed to $S_{\epsilon}^{r}$ and $S_{\epsilon}^a$ for,  $0<\epsilon \ll  1$ respectively. These submanifolds are located within $\mathcal{O}(\epsilon)$ distance from $S_0^{r}$ and $S_0^a$, respectively. The vertical section $\Delta$ is defined in the text. (b) A numerical illustration of a relaxation oscillation (thick black periodic orbit) for the system \eqref{sf_model2} encompassing the unique interior equilibrium point obtained for the parameter values $\alpha=1.5, \beta=0.0207, \gamma=0.3,\delta=0.31,$ $\theta=1.25$ and $\epsilon=0.005$. Slow flow is represented by black arrows, whereas fast flow is shown by red arrows.  (For interpretation of the references to colour in this figure caption, the reader is referred to the web version of this chapter.)}
\label{fig:Relaxation}			
\end{figure}

\subsection{Bi-stability}
A phenomenon of bi-stability will occur for the system \eqref{sf_model2} 
whenever $\mu\in R_1$ and $x_{2*}>x_m$. Under these parametric restrictions, both the equilibria $E_2(0,\delta-\gamma)$ and $E_{2*}$ are stable, whereas the equilibrium $E_{1*}$ is a hyperbolic saddle. Thus, we have the basins of attraction for the equilibria $E_{2*}$ and $E_2$ separated by the stable and unstable separatrices of the saddle equilibrium $E_{1*}$. Geometrically, this signifies that if initially, the prey species lie to the left of the stable separatrix of $E_{1*}$, then at a long run the prey species will die out otherwise both the species will coexist.  

A common phenomenon for the model system \eqref{sf_model2} is that the generalist predator species have the choice of alternate food source when it's preferred prey is absent. This observation is reflected in the model system when $\mu\in R_4$, as in this case, there is no interior equilibrium for the system \eqref{sf_model2} but the axial equilibrium $E_2(0,\delta-\gamma)$ is a stable singularity. Thus, for $\mu\in R_4$, the prey species will die out in the long run, but the predator species will exist in spite of having Allee effect in the predator species. 
\begin{figure}[H]
\setlength{\belowcaptionskip}{-10pt}
\centering
\includegraphics[width=15cm,height=10cm]{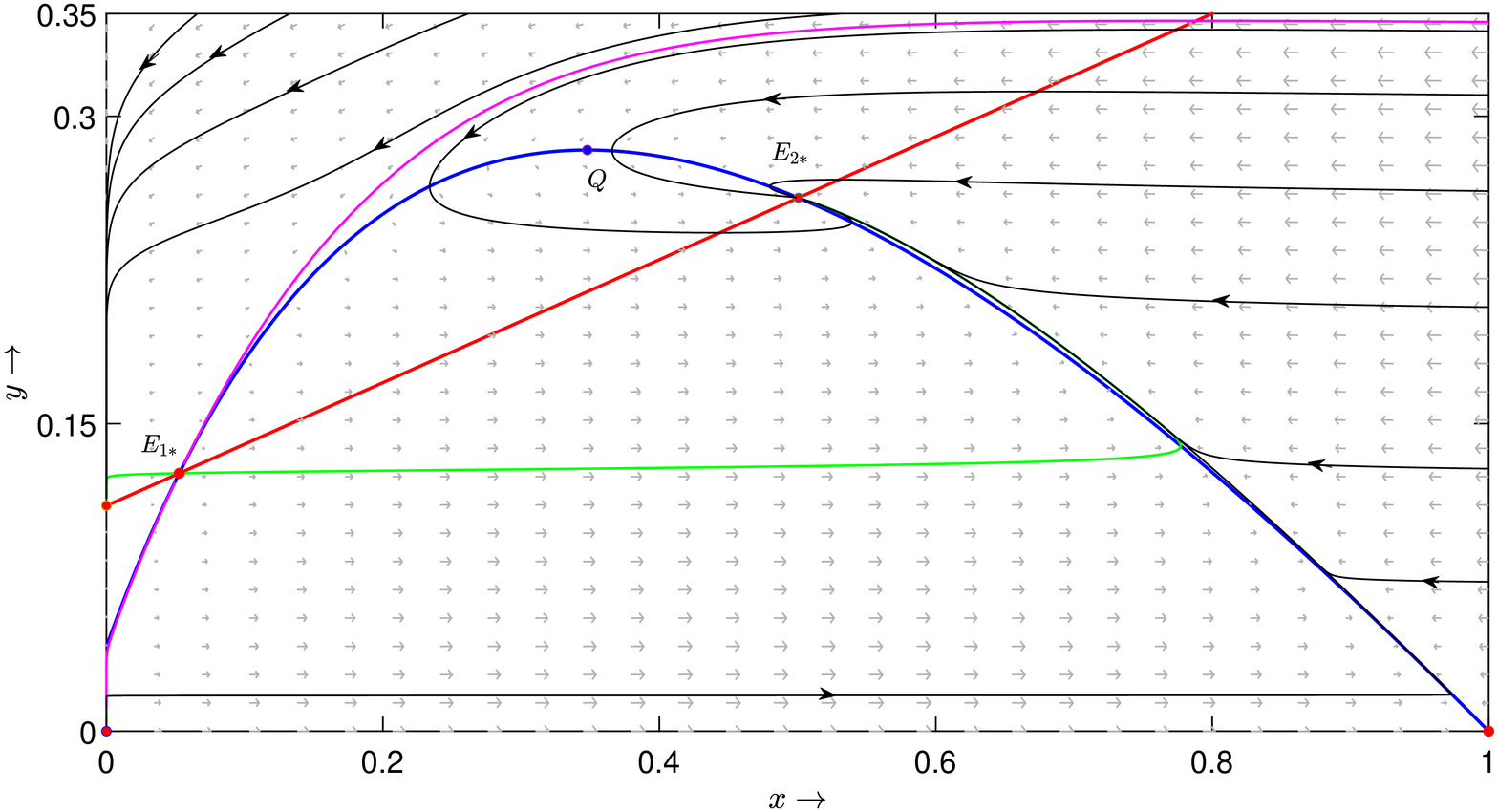}
\caption{The bistability scenario is shown in the phase portrait of the system \eqref{sf_model2} for $\mu\in R_1$ and $x_{2*}>x_m$. Non-trivial prey and predator nullclines are shown by the solid blue and red curves, respectively. The stable and unstable manifolds of the saddle equilibrium point $E_{1*}$ are shown by green and magenta curves, respectively. The phase portrait depicts that the system has two basins of attraction of $E_{2b}(0,\delta-\gamma)$ and $E_{2*}$ separated by the stable manifold (magenta) of the saddle equilibrium point $E_{1*}$.  The parameter values used are given by $\alpha=1.5, \beta=0.0207, \gamma=0.3, \delta=0.41$, $\theta=0.3$ and $\epsilon=0.1$.  (For the interpretation of the colour references in this figure caption, the reader is referred to the web version of this chapter.)}
\label{fig:bistability}			
\end{figure}

\section{Conclusion}\label{Sec:conclusion}
In this work, we focus on the investigation of the dynamics of a planar slow-fast modified Leslie-Gower predator-prey model with a weak Allee effect in the predator based on the natural assumption that the prey reproduces far more quickly than the predator. The Allee effect and its significance in population growth and decline have been extensively discussed in the empirical and scientific literature. In this article, the Allee effect is incorporated in the predator population due to the fact that the Allee effect has a significant impact on dynamics, especially boosting population decline and population extinction \cite{boukal2007predator}. Predators may experience the Allee effect for a variety of reasons, including poor sperm quality, a paucity of suitable mates, poor fertilization rate, or even cooperative breeding \cite{courchamp2008allee,berec2007multiple}. Here, we are assuming that the Allee effect exists exclusively in predators whose predation behaviour is governed by the Beddington-DeAngelis functional response. Furthermore, the predator population is thought to be a generalist predator, meaning that if  the preferred prey is not readily accessible, the predator population might switch to consuming some additional food. 

We have applied  Fenichel’s theorems for normally hyperbolic critical manifold, and the blow-up method is used to fully understand the geometry of the manifolds and how they cross at non-hyperbolic places. The parameter space has been divided into four regions $R_1, R_2, R_3$ and $R_4$ and the dynamical analysis of the system is performed in the various parametric regions.
%For $0<\epsilon\ll 1$, we find several distinct types of periodic solutions: the canard cycle ($\mu\in R_1\cup R_3$), homoclinic orbit ($\mu\in R_1$) and the relaxation oscillation ($\mu\in R_3$). 

It has been observed that the canard point $Q$, which exists on the parabolic critical curve $M_{20}$, serves as an organizing centre for complicated dynamics, as in a neighbourhood of the canard point, we have detected various rich phenomena including canard cycles due to singular Hopf bifurcation, the birth of canard explosions (transition from a small amplitude canard cycle to a large amplitude relaxation oscillation for $\mu\in R_3$), relaxation oscillation, homoclinic orbits and heteroclinic orbits. The occurrence of relaxation oscillations is demonstrated by the entry-exit function. In order to demonstrate the existence of homoclinic orbits, homoclinic to the hyperbolic saddle $E_{1*}$ and enclosing the canard point; heteroclinic orbits (unique or infinitely many) connecting various equilibria, we present a thorough mathematical study. This exhibits the long-term behaviour of the system. Each of the above results is verified numerically for the choice of parameter values in various regions.

The presence of relaxation oscillation and the inception of the canard explosion are particularly relevant to this discussion, as they have important ecological implications. The thorough mathematical results in this regard are also presented in previous sections. The large amplitude relaxation oscillation as $0<\epsilon\ll 1$ exists in the region $R_3$ has a shape with two segments parallel to $x$-axis, one  segment parallel to $y$-axis, and a curved segment similar to the attracting branch $S_0^a$ of the slow manifold $M_{20}$. From a biological perspective, the presence of the relaxation oscillation is an indication of the possibility of prey and predator living together. A prey breakout happens in a very short amount of time if the density of predators drops to a level that is lower than the lowest value on the critical curve. Once the density of the prey reaches a level that is enough to sustain the reproduction of the predators, the density of the predators will continue to gradually increase over an extended period of time until it reaches a density that is higher than the maximum value of the critical curve. The number of potential prey is thus decreasing noticeably over a shorter period of time. When there is less food available (prey), the number of predators steadily drops over a shorter period of time. After some time has passed, the reduced prey population density leads the number of predators to fall as well. This ensures that the cycle will continue, allowing populations of both prey and predator to coexist. The canard explosion is a very surprising and interesting phenomenon, which occurs in an exponentially small range of the parameter $\delta$. The change from a small amplitude canard cycle to a large amplitude relaxation oscillation takes place as a result of a sequence of canard cycles that occur very near to the canard point, follow the repelling slow manifold, and then jump to one of the attracting slow manifolds. From a biological point of view, this canard explosion might be seen as an early warning sign of a forthcoming regime shift as a result of an exponentially small change of parameter $\delta$.

The important part of the conclusion is that we have obtained various co-dimension 1 and 2 bifurcation structures of the slow-fast model, including the saddle-node, Hopf, transcritical bifurcation, generalized Hopf, cusp point, homoclinic, heteroclinic and Bogdanov-Takens bifurcations and used bifurcation diagrams to support the outcomes of the bifurcation. The boundary curves that divide the parametric space into different domains in the two-parameter bifurcation diagram are called bifurcation curves (see Fig. \ref{fig:regions_bifurcation}). A qualitative change in the system has been observed and reflected through the bifurcation analysis as we shift from one region to the other through the parameter variations. The following scenarios show that bi-stability exists when the parameters are in the region $R_1$:
\begin{enumerate}
\item The co-existing equilibrium $E_{2*}$ lies right to the fold point $Q$. In such a case, the basins of attraction for the co-existing equilibrium $E_{2*}$ and the prey-free equilibrium $E_{2b}$ are separated by the stable and unstable separatrices of the hyperbolic saddle $E_{1*}$. Thus, we have the stable coexistence of both the interior equilibrium and also the prey-free equilibrium in spite of having Allee effect in the predator.
\item We also have coexistence of stable oscillation around the canard point and the prey free equilibrium point $E_{2b}$, and these phenomena have been realized in a neighbourhood of the GH point when the parameters belong to the region $R_1$ bounded by the Hopf curve and LPC curve.
\end{enumerate}
From a biological point of view, such bi-stability indicates that the system exhibits either \lq prey extinction', \lq stable coexistence of the populations', or \lq oscillating coexistence of the populations as a result of the appearance of the LPC curve emerging from the GH point' depending on the initial population size and values of system parameters. Moreover, the global stability of the unique positive equilibrium point $E_*$ for $\mu\in R_3$, $x_*>x_m$, $0<\epsilon\ll 1$ is investigated. 

At the end of this study, we would like to bring attention to the fact that a further interesting and challenging work of determining cyclicity by using slow divergence integral of various limit periodic sets like the canard slow-fast cycles with or without head, canard point when $A=0$, etc will be performed in near future.

\section*{Acknowledgement}
 The first author acknowledges the financial support from the Department of Science and Technology (DST), Govt. of India, under the scheme “Fund for Improvement of S\&T Infrastructure (FIST)” [File No. SR/FST/MS-I/2019/41].

\section*{Conflict of interes}
The authors declare that they have no known competing financial interests or personal relationships that could have appeared to influence the work reported in this paper.
\section*{Data availability Statement:}
Data sharing is not applicable to this article, as no datasets were generated or analysed during the current study.

%\bibliography{sl_fast}
%\bibliographystyle{spmpsci}
%\end{linenumbers}
\nocite{*} % to test all bib entrys
%\bibliographystyle{unsrt} % <======================== not longer needed!
%\bibliography{\jobname} % <========================== not longer needed!

\begin{thebibliography}{10}
\providecommand{\url}[1]{{#1}}
\providecommand{\urlprefix}{URL }
\expandafter\ifx\csname urlstyle\endcsname\relax
  \providecommand{\doi}[1]{DOI~\discretionary{}{}{}#1}\else
  \providecommand{\doi}{DOI~\discretionary{}{}{}\begingroup
  \urlstyle{rm}\Url}\fi

\bibitem{ambrosio2018canard}
Ambrosio, B., Aziz-Alaoui, M., Yafia, R.: {Canard phenomenon in a slow-fast
  modified Leslie--Gower model}.
\newblock Mathematical Biosciences \textbf{295}, 48--54 (2018)

\bibitem{arnold1994dynamical}
Arnold, V.I. (ed.): {Dynamical systems V: bifurcation theory and catastrophe
  theory}, vol. 5 (Encyclopaedia of Mathematical Sciences).
\newblock Springer-Verlag, Berlin (1994)

\bibitem{atabaigi2021canard}
Atabaigi, A.: Canard explosion, homoclinic and heteroclinic orbits in
  singularly perturbed generalist predator--prey systems.
\newblock International Journal of Biomathematics \textbf{14}(01), 2150003
  (2021)

\bibitem{berec2007multiple}
Berec, L., Angulo, E., Courchamp, F.: {Multiple Allee effects and population
  management}.
\newblock Trends in Ecology \& Evolution \textbf{22}(4), 185--191 (2007)

\bibitem{boukal2007predator}
Boukal, D.S., Sabelis, M.W., Berec, L.: {How predator functional responses and
  Allee effects in prey affect the paradox of enrichment and population
  collapses}.
\newblock Theoretical Population Biology \textbf{72}(1), 136--147 (2007)

\bibitem{courchamp2008allee}
Courchamp, F., Berec, L., Gascoigne, J.: Allee effects in ecology and
  conservation.
\newblock OUP Oxford (2008)

\bibitem{dumortier1996canard}
Dumortier, F., Roussarie, R., Roussarie, R.H.: Canard cycles and center
  manifolds, vol. 577.
\newblock American Mathematical Soc. (1996)

\bibitem{feng2015dynamics}
Feng, P., Kang, Y.: {Dynamics of a modified Leslie--Gower model with double
  Allee effects}.
\newblock Nonlinear Dynamics \textbf{80}(1), 1051--1062 (2015)

\bibitem{fenichel1979geometric}
Fenichel, N.: Geometric singular perturbation theory for ordinary differential
  equations.
\newblock Journal of Differential Equations \textbf{31}(1), 53--98 (1979)

\bibitem{gonzalez2011multiple}
Gonz{\'a}lez-Olivares, E., Rojas-Palma, A.: {Multiple limit cycles in a Gause
  type predator--prey model with Holling type III functional response and Allee
  effect on prey}.
\newblock Bulletin of Mathematical Biology \textbf{73}(6), 1378--1397 (2011)

\bibitem{hadjiavgousti2008allee}
Hadjiavgousti, D., Ichtiaroglou, S.: Allee effect in a prey--predator system.
\newblock Chaos, Solitons \& Fractals \textbf{36}(2), 334--342 (2008)

\bibitem{hek2010geometric}
Hek, G.: Geometric singular perturbation theory in biological practice.
\newblock {Journal of Mathematical Biology} \textbf{60}(3), 347--386 (2010)

\bibitem{kooi2018modelling}
Kooi, B., Poggiale, J.: Modelling, singular perturbation and bifurcation
  analyses of bitrophic food chains.
\newblock {Mathematical Biosciences} \textbf{301}, 93--110 (2018)

\bibitem{krupa2001extending}
Krupa, M., Szmolyan, P.: Extending geometric singular perturbation theory to
  nonhyperbolic points---fold and canard points in two dimensions.
\newblock {SIAM Journal on Mathematical Analysis} \textbf{33}(2), 286--314
  (2001)

\bibitem{krupa2001extending1}
Krupa, M., Szmolyan, P.: Extending slow manifolds near transcritical and
  pitchfork singularities.
\newblock Nonlinearity \textbf{14}(6), 1473 (2001)

\bibitem{krupa2001relaxation}
Krupa, M., Szmolyan, P.: Relaxation oscillation and canard explosion.
\newblock Journal of Differential Equations \textbf{174}(2), 312--368 (2001)

\bibitem{kuehn2010from}
Kuehn, C.: From first lyapunov coefficients to maximal canards.
\newblock International Journal of Bifurcation and Chaos \textbf{20}(5),
  1467--1475 (2010)

\bibitem{kuehn2015multiple}
Kuehn, C.: Multiple time scale dynamics, vol. 191.
\newblock Springer (2015)

\bibitem{kuznetsov1998elements}
Kuznetsov, Y.A., Kuznetsov, I.A., Kuznetsov, Y.: Elements of applied
  bifurcation theory, vol. 112.
\newblock Springer (1998)

\bibitem{pal2015qualitative}
Pal, P.J., Saha, T.: Qualitative analysis of a predator--prey system with
  double allee effect in prey.
\newblock Chaos, Solitons \& Fractals \textbf{73}, 36--63 (2015)

\bibitem{rahmi2021modified}
Rahmi, E., Darti, I., Suryanto, A.: {A Modified Leslie--Gower Model
  Incorporating Beddington--DeAngelis Functional Response, Double Allee Effect
  and Memory Effect}.
\newblock Fractal and Fractional \textbf{5}(3), 84 (2021)

\bibitem{saha2021relaxation}
Saha, T., Pal, P.J., Banerjee, M.: Relaxation oscillation and canard explosion
  in a slow--fast predator--prey model with beddington--deangelis functional
  response.
\newblock Nonlinear Dynamics \textbf{103}(1), 1195--1217 (2021)

\bibitem{sardar2022allee}
Sardar, M., Khajanchi, S.: Is the allee effect relevant to stochastic cancer
  model?
\newblock Journal of Applied Mathematics and Computing \textbf{68}(4),
  2293--2315 (2022)

\bibitem{terry2015predator}
Terry, A.J.: {Predator--prey models with component Allee effect for predator
  reproduction}.
\newblock {Journal of Mathematical Biology} \textbf{71}(6), 1325--1352 (2015)

\bibitem{zhao2022relaxation}
Zhao, L., Shen, J.: {Relaxation oscillations in a slow--fast predator--prey
  model with weak Allee effect and Holling-IV functional response}.
\newblock Communications in Nonlinear Science and Numerical Simulation
  \textbf{112}, 106517 (2022)

\bibitem{zhou2005stability}
Zhou, S.R., Liu, Y.F., Wang, G.: {The stability of predator--prey systems
  subject to the Allee effects}.
\newblock Theoretical Population Biology \textbf{67}(1), 23--31 (2005)

\end{thebibliography}
{1}
\end{document}